\newtheorem{Thm}{Theorem}
\newtheorem{Lem}[Thm]{Lemma}
\newtheorem{Def}[Thm]{Definition}
\newtheorem{Pro}[Thm]{Proposition}
\newtheorem{Cor}[Thm]{Corollary}
\newtheorem{Rem}[Thm]{Remark}
\newcommand{\bfm}[1]{\mbox{\boldmath ${#1}$}}
\newcommand{\nonum}{\nonumber \\}
\newcommand{\beqa}{\begin{eqnarray}}
\newcommand{\eeqa}[1]{\label{#1}\end{eqnarray}}
\newcommand{\beq}{\begin{equation}}
\newcommand{\eeq}[1]{\label{#1}\end{equation}}
\newcommand{\Grad}{\nabla}
\newcommand{\Div}{\nabla \cdot}
\newcommand{\Imag}{\operatorname{ Im}}
\newcommand{\Real}{\operatorname{ Re}}
\newcommand{\Ga}{\alpha}
\newcommand{\Ge}{\epsilon}
\newcommand{\Gve}{\varepsilon}
\newcommand{\Go}{\omega}
\newcommand{\GD}{\Delta}
\newcommand{\GO}{\Omega}
\newcommand{\BGa}{\bfm\alpha}
\newcommand{\BGe}{\bfm\epsilon}
\newcommand{\BGve}{\bfm\varepsilon}
\newcommand{\md}{{\mathrm{ d}}}
\newcommand{\mm}{{\mathrm{ m}}}
\newcommand{\me}{{\mathrm{ e}}}
\newcommand{\bbR}{{\mathbb{ R}}}
\newcommand{\bbC}{{\mathbb{C}}}
\newcommand{\CD}{{\cal D}}
\newcommand{\CE}{{\cal E}}
\newcommand{\CM}{{\cal M}}
\newcommand{\CO}{{\cal O}}
\newcommand{\CP}{{\cal P}}
\def\ii{{\rm i}}
\def\Bf{{\bf f}}
\def\Bn{{\bf n}}
\def\Bp{{\bf p}}
\def\Bx{{\bf x}}
\def\By{{\bf y}}
\def\BA{{\bf A}}
\def\BB{{\bf B}}
\def\BD{{\bf D}}
\def\BE{{\bf E}}
\def\BH{{\bf H}}
\def\BI{{\bf I}}
\def\BL{{\bf L}}
\def\BM{{\bf M}}
\def\BP{{\bf P}}
\def\BU{{\bf U}}
\def\BV{{\bf V}}
\def \AA {{\mathbb A}}
\def \ba {\begin{array}}
\def \ea {\end{array}}
\begin{document}
\vspace{-1in}
\title{Bounds on Herglotz functions and fundamental limits of broadband passive quasi-static cloaking}
\author{Maxence Cassier\\
\small{Department of Mathematics, University of Utah, Salt Lake City UT 84112, USA}\\
\small{(cassier@math.utah.edu)} \\
\\
Graeme W. Milton\\
\small{Department of Mathematics, University of Utah, Salt Lake City UT 84112, USA} \\
\small{(milton@math.utah.edu)}}

\date{}
\maketitle
\begin{abstract}
Using a sum rule, we derive new bounds on Herglotz functions that generalize those given in \cite{Gustafsson:2010:SRP,Bernland:2011:SRC}. These bounds apply to a wide class of linear passive systems such as electromagnetic passive materials. Among these bounds, we describe the optimal ones and also discuss their meaning in various physical situations like in the case of a transparency window, where we exhibit sharp bounds. Then, we apply these bounds in the context of broadband passive cloaking in the quasi-static regime to negatively answer the following challenging question: is it possible to construct a passive cloaking device that cloaks an object over a whole frequency band? Our rigorous approach, although limited to quasi-statics, gives quantitative limitations on the cloaking effect over a finite frequency range by providing inequalities on the polarizabilty tensor associated with the cloaking device. We emphasize that our results hold for a cloak or object of any geometrical shape.
\end{abstract}
\vskip2mm

\noindent Keywords: Invisibility, passive cloaking, Herglotz and Stieltjes functions, dispersive Maxwell's equations, quasi-statics.
%%%%%%%%%%%%%%%%%%%%%%%%%%%%%%%%%%%%%%%%%%%%%%%%%%%%%%%%%%%%%%%%%%%%%%%%%
\section{Introduction}
%%%%%%%%%%%%%%%%%%%%%%%%%%%%%%%%%%%%%%%%%%%%%%%%%%%%%%%%%%%%%%%%%%%%%%%%%%
\setcounter{equation}{0}

For many years it has been know that there exist inclusions that can be invisible to certain applied fields. These are generally known as neutral inclusions (see, for example, the references in Section 7.11 in  \cite{Milton:2002:TC},
and see also the more recent citations of these papers)
and references therein) and a specific example are the invisible bodies
of Kerker \cite{Kerker:1975:IB} , that are in fact coated confocal ellipsoids, which are invisible to long wavelength fields. More intriguing are the cylindrical shells of NIcorovici, McPhedran, and Milton  \cite{Nicorovici:1994:ODP} having (at a fixed frequency) a relative
permittivity of $-1$, surrounded by material having 
a reative permittivity of $1$, that are invisible to any polynomial quastistatic applied field, and the coated spheres of Al\'u and Engheta \cite{Alu:2005:ATP},
that are invisible at a specific frequency. For conductivity and fixed frequency electromagnetism
Tartar (in a private communication to Kohn and Vogelius \cite{Kohn:1984:IUC})
and Dolin \cite{Dolin:1961:PCT} recognised that one could create a wide class of invisible inclusions with anisotropic moduli by transformation conductivity and transformation optics. A subsequent key idea of Greenleaf, Lassas, and Uhlmann \cite{Greenleaf:2003:ACC,Greenleaf:2003:NCI}
was that one could create a cloak for conductivity (and hence single frequency quasi-statics) by using singular transformations that created a "quiet zone" where no field penetrated, and
hence where one could place an object without disturbing the surrounding current field. The next development was the recognition by Milton and Nicorovici \cite{Milton:2006:CEA}
that cloaking due to anomalous resonance could cloak 
(at least in two-dimensional quasi-statics, though some results were also obtained in three-dimensions and at finite frequency) an arbitary finite number of polarizable dipoles: this had the fascinating feature that the
cloaking region lay {\it outside} the cloaking device. It was perhaps the first paper where the word cloaking appeared in the scientific literature, outside computer science. 
Shortly afterwards, papers appeared by Leonhardt \cite{Leonhardt:2006:OCM} and Pendry, Schurig, and Smith \cite{Pendry:2006:CEM}
using transformation ideas to obtain cloaking for geometric optics and Maxwell's equations at fixed frequencies. These three papers,
of Milton and Nicorovici, Leonhardt, and Pendry, Schurig and Smith generated considerable media attention, and also stimulated a lot of subsequent scientific development, 
both on cloaking due to anomalous resonance \cite{Nicorovici:2007:QCT,Bruno:2007:SCS,Milton:2008:SFG,Nicorovici:2008:FWC,McPhedran:2009:CPR,Bouchitte:2010:CSO,Nicorovici:2011:RLD,Kohn:2012:VPC, Ammari:2013:ALR,Ammari:2013:STN,Ammari:2013:STNII, Kettunen:2014:AEA,Ando:2015:SPN,Nguyen:2015:CAL, Meklachi:2016:SAL, Li:2016:QCA, Nguyen:2016:CVA, Onofrei:2016:ALR,Nguyen:2016:EWP}
and on transformation based cloaking (see, for example, the reviews \cite{Alu:2008:PMC} and \cite{Greenleaf:2009:CDE}).
Other sorts of cloaking were developed too, including cloaking due to complementary media \cite{Lai:2009:CMI}, that has anomalous resonance 
as its mechanism \cite{Nguyen:2016:CCM, Nguyen:2015:CAL1}, and active cloaks 
\cite{Miller:2006:PC, Vasquez:2009:AEC, Vasquez:2009:BEC, Vasquez:2012:MAT, Onofrei:2012:AMF, Norris:2012:SAA, Slvanayagam:2012:AEC,Norris:2014:AEC} where sources tailored to the incoming signal, and sometimes also tailored to the body to be cloaked \cite{ONeill:2015:ACI, ONeill:2016:ACR}, create a cloak, yet do not significantly radiate. 
There is no theoretical difficulty in creating broadband active cloaks: each frequency can be cloaked separately and sources can then be designed that
superimpose the contributions from the different frequencies. A good example is the cloaking of an object from an incoming pulse in the animation movies in \cite{Vasquez:2009:BEC}.

Here our focus is on finding limitations to broadband cloaking for passive quasi-static cloaks.
Unlike active cloaks which require energy sources to activate them, passive cloaks perform cloaking only by the physical properties of the passive material which constitutes the cloak.
In the context of transformation based cloaking it has long been recognized that a cloak that guides waves around an object has the inherent limitation that a pulse signal (hence containing many frequencies) travelling on a ray cannot travel faster than the speed of light and therefore, if the ray goes around the body, the pulse will arrive later compared to a pulse that 
travels in a straight line at the speed of light. However we would like some more explicit quantitative bounds that limit cloaking, in particular over a specific frequency interval. Anomalous resonance uses materials with
a negative dielectric constant and transformation based cloaks use materials with relative electrical permittivities (relative compared to the surrounding medium) less than one. Thus if the surrounding medium has the electrical permittivity of free space there should necessarily be some variation of the moduli with frequency, i.e., dispersion. While some experiments report broadband cloaking it is to emphasized that the surrounding medium is silicon, and this makes
it possible to achieve a relative electrical permittivity that is less than 1 that is almost frequency independent. It seems that a clue to establishing broadband limitations to cloaking is to use bounds limiting the minimal
dispersion in the component materials. For geometric optics Leonhardt and Tyc \cite{Leonhardt:2009:BIN} show one can get broadband cloaking by ingeneous transformations from non-Euclidean geometries to
Euclidean ones. (Such transformations are okay for geometric optics, but generally do not preserve the form of the time-harmonic Maxwell's equations.)

The main tool used to derive our bounds is to follow the idea developed in the analytic method introduced in \cite{Bergman:1978:APC,Milton:1979:TST}, justified in \cite{Milton:1981:BCP} and proved in \cite{Golden:1983:BEP}. In other words, to use the analytic properties of physical quantities (like the dielectric permittivity and  the magnetic permeability in electromagnetism) which define the constitutive laws of the medium in the frequency domain. These properties are the counterpart of causality and passivity of time-dependent passive linear materials. Mathematically speaking, it is directly linked to the existence of a Herglotz and/or a Stieltjes function which characterizes the behavior of the system in the frequency domain \cite{Nussenzveig:1972:CDR, Cessenat:1996:MME, Milton:2002:TC,Zemanian:2005:RTC,Bernland:2011:SRC}. This analytic method under various forms has been widely applied to study physical properties of passive electromagnetic media in different contexts: to bound the dielectric permittivity with respect to the frequency \cite{Bernland:2011:SRC,Gustafsson:2010:SRP}, to evaluate the resolution of a perfect lens on a finite bandwidth \cite{Lind-Johansen:2009:PLF}, to derive scattering limits as for instance upper bounds on the total extinction cross-section \cite{Purcell:1969:AEL,Gustafsson:2010:TDS,Miller:2014:FLE} or to provide quantitative limits to speed light propagation in dispersive media \cite{Welters:2014:SLL} .
In this paper, one wants to use such a method to derive bounds on the polarizability tensor associated with a cloaking device. This tensor is defined as a $3\times 3$ complex-valued matrix function of the frequency \cite{Jackson:1999:CE,Milton:2002:TC} which characterizes the main contribution of the far field of the scattered wave due to a cloaking device in the quasi-static regime of Maxwell's equations. Therefore if it vanishes at a frequency $\Go$, one says that the obstacle is cloaked at $\Go$ for a far observer. We prove in this paper that is not possible on a whole frequency band and derive inequalities to quantify this phenomenon.

Related to the question of broadband passive cloaking, we mention that Monticone and Al\'u \cite{Mon:2014:PBE} show that one cannot perform passive cloaking
on the whole frequency spectrum by deriving a global bound on the scattering cross-section. More recently  \cite{Mon:2016:IEP} they use electrical circuit analogies,
to bound the scattering cross section over a finite frequency range for planar objects. Another interesting point was developed by Hashemi, Qiu, McCauley, Joannopoulos and Johnson \cite{Hashemi:2012:DBP} who demonstrate for the particular case of Lorentz dispersion models  that broadband  passive cloaking is limited by the obstacle characteristic size.  Here, the bounds that we derive have the great advantage of neither assuming the geometrical shape of the object, or cloak, nor the dispersion models of the cloak. In fact the object could even lie outside the cloak.
Moreover, they involve the size of the frequency bandwidth. While they are limited to quasi-statics, they apply to cloaking due to anomalous resonance, transformation based cloaking, 
and cloaking due to complementary media and in fact to any quasi-static passive cloaking device.

The paper is organized as follows. In section \ref{sec-boundanalycomp}, we first derive, using complex analysis, general bounds that are applicable to a broadband class of passive linear systems including electromagnetic passive media. More precisely, for an electromagnetic passive material, the standard notions of passivity and causality are introduced and this leads to four constraints on the dielectric permittivity and magnetic permeability behaviors seen as complex valued functions of the frequency. To develop our bounds in the general framework of linear passive system, we reformulate these four constraints as assumptions on a abstract complex-valued function $f$. Then, we briefly recall some basic notions on Stieltjes and Herglotz functions which are used throughout the paper. Our next step is to construct a Stieltjes and a Herglotz function associated with $f$.  Afterwards, using the sum rules derived in \cite{Bernland:2011:SRC} for Herglotz functions, we derive bounds parametrized by a set a of probability measures that generalized the bounds of \cite{Gustafsson:2010:SRP,Bernland:2011:SRC}. Then, we  prove that among these bounds, the ones that are optimal are obtained using Dirac measures (see Theorem  \ref{thm.mesopt}). Using such measures in the case of a transparency window (which physically means that the material is lossless on the considered frequency range), we recover a bound similar to the ones derived in \cite{Milton:1997:FFR,Yaghjian:2006:PWS} which is sharp for Drude type models. We show that this last bound can be also easily established by another approach based on Kramers--Kronig relations. We finally explore the case of lossy material and recovers by our approach a bound similar to the ones derived in \cite{Gustafsson:2010:SRP,Bernland:2011:SRC}. The section \ref{sec.boundcloak} of the paper is devoted to the applications of the previous bounds to the broadband passive cloaking question for the quasi-static approximation of Maxwell's equations. We first mathematically reformulate our cloaking problem in a rigorous functional framework and shows that the bounds derived in section \ref{sec-boundanalycomp} apply to the polarizability tensor associated with a passive cloaking device. Finally, we show that it is not possible to construct a passive cloak that achieves broadband cloaking on a finite range of frequencies and discuss the meaning of our bounds as fundamental limits of the cloaking effect in various situations like a transparency window or the general case of a lossy material. 

%%%%%%%%%%%%%%%%%%%%%%%%%%%%%%%%%%%%%%%%%%%%%%%%%%%%%%%%%%%%%%%%%%%%%%%%%
\section{Bounds on Herglotz functions}\label{sec-boundanalycomp}
\subsection{Characterization of passive electromagnetic media}\label{First part}
In this subsection, one introduces the standard notions of causality and passivity for linear time-dependent Maxwell's equations and their counterparts in the frequency domain. For simplicity, we are dealing here with an isotropic homogeneous material which fills a bounded domain $\Omega\subset \bbR^3$, but one can derive such properties in the general setting of anisotropic and inhomogeneous materials. For more details, we refer to \cite{Nussenzveig:1972:CDR,Landau:1984:ECM,Cessenat:1996:MME,Milton:2002:TC,Zemanian:2005:RTC,Welters:2014:SLL}.

We denote respectively by $\BD$ and $\BB$ the electric and magnetic inductions, by  $\BE $ and $\BH$ the electric and magnetic fields, the evolution of $(\BE,  \BD, \BH, \BB)$ in $\Omega$ is governed (on the absence of a current density source) by the macroscopic Maxwell's equations:
\begin{equation}\label{eq.Maxwell}
   \partial_t \BD -\nabla \times  \BH =0 \quad\mbox{and}\quad \partial_t \BB +\nabla \times \BE = 0,
\end{equation} 
which must be supplemented by the constitutive laws of the material involving two additional unknowns the electric and magnetic polarizations $\BP$ and $\BM$:
\begin{equation}\label{eq.constiutivelaws}
\BD= \BGve_0 \BE +\BP \ \mbox{ with }  \BP= \Gve_0\,  \chi_E \star_t \BE  \ \mbox{ and } \BB= \mu_0 \BH +\BM \mbox{ with }  \BM= \mu_0 \, \chi_M \star_t \BH.
\end{equation}
The constants $\Gve_0$ and $\mu_0$ stand here for the permittivity and permeability of the vacuum. 
The constitutive laws express the  relations between $(\BD,\BE)$ and $(\BB,\BH)$ via a convolution in time with the electrical and magnetic susceptibility $\chi_E$ and $\chi_M$, defined here as scalar time-dependent functions which characterize the electromagnetic behavior of the material.

We assume  here for simplicity that $\chi_E $ and $\chi_M\in L^1(\bbR_t)$, the space of integrable functions with respect to the time variable. In a more general setting, one can consider them as tempered distributions (see \cite{Cessenat:1996:MME,Zemanian:2005:RTC}). We suppose also that $\BE$, $\BH$, $\partial_t \BE$ and  $\partial_t \BH$ are in $\BL^2(\bbR_t,\BL^2(\Omega) )$.
Hence, as $(\BE,  \BD, \BH, \BB)$ satisfy (\ref{eq.Maxwell}) and (\ref{eq.constiutivelaws}), one deduces with such hypothesis that $\BD,\, \BB, \, \partial_t \BD, \,\partial_t \BB$,  $\nabla \times \BE$ and $\nabla \times \BH$ are also in $\BL^2(\bbR_t,\BL^2(\Omega) )$. In this functional framework, one introduces four standard properties which model the constitutive laws of electromagnetic passive linear systems in the frequency domain.

$\bullet$ A material is said to be causal if the fields $\BE(\cdot,t)$ and $\BH(\cdot,t)$ cannot influence the inductions $\BD(\cdot,t')$ and $\BB(\cdot,t')$ for $t'<t$. This condition implies that the functions $\chi_E $ and $\chi_M$ are supported in $\bbR^{+}$. To see the counterpart of the causality in the frequency domain, one defines the Laplace-Fourier transform of a function of $f\in L^1(\bbR_t)$ supported in $\bbR^{+}$ by
\begin{equation} \label{eq.Laplacetransform}
\hat{f}(\omega)=\int_{\bbR^+}f(t) \, \me^{\ii   \Go t}\md t,  \, \forall \Go \in \operatorname{cl}{\bbC^{+}},
\end{equation}
where $\operatorname{cl}{\bbC^{+}}:=\{\Go \in \bbC \mid \Imag(\Go)\geq 0\}$ stands for topological closure of the complex upper half-plane $\bbC^{+}:=\{\Go \in \bbC \mid \Imag(\Go)>0 \}$. We point out that the Laplace-Fourier transform coincides with the Fourier transform for real frequency $\Go$, that is why we use in the following the same notation for both transforms. Classically, applying the Fourier transform to  (\ref{eq.constiutivelaws}) for real $\Go$ leads to the  well-known expression for the constitutive laws (\ref{eq.constiutivelaws}) in the frequency domain:
$$
\hat{\BD}(\Go)=\Gve(\Go) \hat{\BE}(\Go) \mbox{ with } \Gve(\Go)=\Gve_0 \big(1+\hat{\chi}_{E}(\Go) \big) \mbox{ and }  \hat{\BB}(\Go)=\mu(\Go)  \hat{\BH}(\Go) \mbox{ with } \mu(\Go)=\mu_0 \big(1+\hat{\chi}_M(\Go)\big),
$$
where $\Gve(\Go)$ and $\mu(\Go)$ stand for the dielectric permittivity and the magnetic permeability of the material. Now, as $\chi_E$ and $\chi_M \in L^1(\bbR_t)$ are compactly supported in $\bbR^{+}$, one deduces easily that their Laplace Fourier transforms $\hat{\chi}_E$ and $\hat{\chi}_M$ are analytic in the upper half-plane $\bbC^{+}$ and continuous on $\operatorname{cl} \bbC^{+}$. Thus, $\Gve=\Gve_0(1+\hat{\chi}_{E})$ and $\mu=\mu_0(1+\hat{\chi}_{M})$ share the same regularity. 

$\bullet$ Furthermore, by applying the Riemann-Lebesgue theorem  (since $\chi_E $ and $\chi_M\in L^1(\bbR_t)$), one has that $\hat{\chi}_E$ and  $\hat{\chi}_M$ tend to $0$, as $|\Go|\to \infty$ in $\operatorname{cl} \bbC^{+}$. Hence,  we have $$\Gve(\Go)\to \Gve_0 \mbox{ and } \mu(\Go)\to  \mu_0, \mbox{ as  } |\Go|\to \infty  \mbox{ in }\operatorname{cl} \bbC^{+} .$$
In other words, the material behaves as the vacuum for high frequencies.

$\bullet$ As $\chi_E$ and $\chi_M$ are real functions, it implies that their Laplace-Fourier transforms defined by (\ref{eq.Laplacetransform}) satisfisfy the following ``symmetry'' relations:
\begin{equation}\label{eq.symfreq}
\hat{\chi}_E(-\overline{\Go})=\overline{\hat{\chi}_E(\Go)} \ \mbox{ and } \ \hat{\chi}_M(-\overline{\Go})=\overline{\hat{\chi}_M(\Go)},\quad\forall \Go \in \operatorname{cl} \bbC^{+},
\end{equation}
and thus the same relation holds for the functions $\Gve$ and $\mu$.

$\bullet$ The passivity assumption is expressed as the following (see \cite{Bernland:2011:SRC,Cessenat:1996:MME,Landau:1984:ECM,Milton:2002:TC,Welters:2014:SLL}):
\begin{equation}\label{eq.passivity}
\quad \CE_a(t)=\int_{-\infty}^{t}  \int_{\Omega}  \partial_t \BD(\Bx,s) \cdot \BE(\Bx,s)+  \partial_t \BB(\Bx,s) \cdot \BH(\Bx,s) \, \md \Bx \, \md s \geq 0, \forall t \in \bbR
\end{equation}
and holds for any fields $(\BE,\BH)$ such that
\begin{equation}\label{eq.regularity}
\BE,\, \BH \in \BL^2(\bbR_t,\BL^2(\Omega) ) \mbox{ and } \partial_t \BE,\, \partial_t \BH \in \BL^2(\bbR_t,\BL^2(\Omega) ).
\end{equation}
This assumption imposes physically that at each time, the amount of  electromagnetic energy $ \CE_a(t)$ transferred to the material by Joule effect or absorption, that is by electric and/or magnetic loss is positive. 
By virtue of the Plancherel theorem and the constitutive laws (\ref{eq.constiutivelaws}), the passivity assumption (\ref{eq.passivity}) applied to $t=\infty$ yields the following inequality in the frequency domain: 
$$
\CE_a(\infty)=\frac{1}{2\pi} \Real \left(\int_{\bbR} \int_{\Omega} - \ii  \omega \Big(  \Gve_0 \big(1+\hat{\chi}_E(\Go)\big) |\hat{\BE}(\Bx,\Go)|^2 + \mu_0\big(1+ \,\hat{\chi}_M(\Go)\big) |\hat{\BH}(\Bx,\Go)|^2 \Big) \md \Bx \,\md \Go \right) \geq 0 
 $$ 
which can be rewritten as 
\begin{equation}\label{eq.passineqfreq}
\CE_a(\infty)=\frac{1}{2\pi}\int_{\bbR} \int_{\Omega}   \Go \Imag \Gve(\Go) |\hat{\BE}(\Bx,\Go)|^2 +  \Go \Imag \mu(\Go) |\hat{\BH}(\Bx,\Go)|^2  \md \Bx \,\md \Go  \geq 0.
\end{equation}
Hence, as the last inequality holds for any fields $\BE$ and $\BH$ which satisfy the conditions (\ref{eq.regularity}) in the time-domain, it is straightforward (using a proof by contradiction) to show that it implies that $\Go \Imag \Gve(\Go)\geq 0$ and $\Go \Imag \mu(\Go)\geq 0$,  for all real  frequency $\Go$. These latter conditions turn  out to be equivalent, by (\ref{eq.symfreq}), to
\begin{equation}\label{eq.passivityharm}
\Imag \Gve(\Go)\geq 0 \ \mbox{ and } \ \Imag \mu(\Go)\geq 0, \, \forall \Go \in \bbR^{+},
\end{equation}
that is referred to the characterization of passivity in the frequency domain \cite{ Landau:1984:ECM,Cessenat:1996:MME,Milton:2002:TC}. 

Reciprocally, the condition (\ref{eq.passivityharm}) and the fact that $\Gve$ and $\mu$ are bounded, continuous functions  (since $\chi_E $ and $\chi_M\in L^1(\bbR_t)$) satisfying (\ref{eq.symfreq}) on $\bbR_{\Go}$ imply, in particular, that inequality (\ref{eq.passineqfreq}) holds  for any $\hat{\BE },\hat{\BH}\in \BL^2(\bbR_{\Go},\BL^2(\Omega) )$ such that $\BE$ and $\BH\in \CD\big((-\infty, t), \BL^2(\Omega)\big)$ (where $\CD\big( (-\infty, t), \BL^2(\Omega)\big)$ refers to the space of bump functions of $(-\infty, t)$ valued in $\BL^2(\Omega)$). Hence by Plancherel's theorem, one obtains that passivity assumption (\ref{eq.passivity}) holds at any fixed time $t\in \bbR$ and for any $\BE$ and $\BH\in$ $\CD\big((-\infty, t) , \BL^2(\Omega)\big)$. Finally, one extends by a density  argument this relation to any fields $\BE$ and $\BH$ satisfying (\ref{eq.regularity}). Thus, (\ref{eq.passivityharm}) is equivalent to (\ref{eq.passivity}).

The aim of this section is to derive in a general framework a bound for a function $f: \operatorname{cl}{\bbC^{+}} \mapsto \bbC$ which satisfies the
following hypotheses:
\begin{itemize}
\item H1: $f$ is analytic on the upper half plane $\bbC^{+}$ and continuous on $\operatorname{cl} \bbC^{+}$,
\item H2: $f
(z) \to  f_{\infty}>0$, when $\left|z\right|\to \infty$ in $\operatorname{cl} \bbC^{+}$,

\item H3: $f$ satisfies $f(-\overline{z})=\overline{f(z)}, \quad\forall z \in \operatorname{cl} \bbC^{+}$,
\item H4: $\Imag f(z)\geq 0$ for all $z\in \bbR^{+}$ (passivity).

\end{itemize}
described above for $f=\Gve$ or $f=\mu$ as function of the frequency $\Go$.
More generally, these hypotheses characterize the frequency behavior of passive linear systems \cite{Zemanian:2005:RTC,Bernland:2011:SRC}.
They are satisfied by the permittivity and the permeability but also by other physical quantities such as the polarizability tensor in the quasi-static regime (as it will be proved in subsection \ref{sub.analypola}), the acoustic \cite{Norris:2015:AIE} and electromagnetic  \cite{Nussenzveig:1972:CDR,Gustafsson:2010:TDS} forward scattering amplitudes and the shear and bulk modulus in elasticity \cite{Bonifasi:2009:Aar}. Thus, the bounds we develop in this first part, in this general setting, apply to all these physical parameters and constrain their behavior in the frequency domain.

\subsection{Review of some Herglotz and Stieltjes functions properties}
Mathematically, the hypotheses H1-4 on the function $f$ are linked to the existence of a Stieltjes and a Herglotz function associated with $f$. Stieltjes and Herglotz functions have been extensively used in the study of electromagnetic materials' behavior (see for instance \cite{Milton:1997:FFR, Milton:2002:TC,Gustafsson:2010:SRP,Bernland:2011:SRC,Welters:2014:SLL}).
The aim of this subsection is to recall briefly some properties about these functions that we use in the following to derive our bounds. For more details, we refer to \cite{Nevanlinna:1922:AEF,Baker:1981:PAB,Gesztesy:2000:MVH,Berg:2008:SPBS, Bernland:2011:SRC}.

\begin{Def}
An analytic function $h:\bbC^{+}\to \bbC$ is a Herglotz function (also called Pick or Nevanlinna function) if
$$
\Imag h(z)\geq 0, \ \forall z \in \bbC^{+}. 
$$
\end{Def}

A particular and useful property of Herglotz functions is the following representation theorem due to Nevanlinna \cite{Nevanlinna:1922:AEF}.

\begin{Thm}\label{thm.Herg}
A necessary and sufficient condition for $h$ to be a Herglotz function is given by the following representation:
\begin{equation}\label{eq.defhergl}
h(z)=\alpha \, z+\beta + \displaystyle \int_{\bbR} \left(  \frac{1}{\xi-z}- \frac{\xi}{1+\xi^2}\right)\md \mm( \xi), \ \mbox{ for } \Imag(z)>0,
\end{equation}
where $\alpha\in\bbR^{+}$, $\beta\in \bbR$ and $\mm$ is a positive regular Borel measure for which   $\int_{\bbR} \
 \md \mm(\xi)/(1+\xi^2)$ is finite.  
In particular if the integral $\int_{\bbR} \xi \, \md \mm(\xi)/(1+\xi^2)$ is also finite, then we can rewrite  the relation (\ref{eq.defhergl}) as:
\begin{equation*}\label{eq.herg2}
h(z)=\alpha \, z+\gamma  + \displaystyle \int_{\bbR} \frac{\md  \mm( \xi) }{\xi-z}  \ \quad \mbox{ with } \gamma=\beta- \int_{\bbR}  \frac{\xi \,\md \mm( \xi)}{1+\xi^2}\in \bbR.
\end{equation*}
\end{Thm}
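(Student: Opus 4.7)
The argument splits into a straightforward sufficiency direction and a more delicate necessity direction.

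For sufficiency, assume $h$ is given by the right-hand side of (\ref{eq.defhergl}). Using the finiteness of $\int_{\bbR} \md \mm(\xi)/(1+\xi^2)$ together with the uniform bound
\[
\left| \frac{1}{\xi - z} - \frac{\xi}{1+\xi^2} \right| \leq \frac{C(K)}{1+\xi^2}, \quad \xi \in \bbR, \ z \in K,
\]
for any compact $K \subset \bbC^{+}$, I can apply dominated convergence and Morera's theorem to conclude that $h$ is analytic on $\bbC^{+}$. A direct computation then yields, for $z = x + iy$ with $y > 0$,
\[
\Imag h(z) = \alpha \, y + \int_{\bbR} \frac{y}{(\xi - x)^2 + y^2} \, \md \mm(\xi) \geq 0,
\]
since $\alpha \geq 0$ and the Poisson kernel is non-negative, so $h$ is Herglotz.

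For necessity, my strategy is to reduce to the unit disk via the Cayley transform $\phi(z) = (z - i)/(z + i)$, which maps $\bbC^{+}$ biholomorphically onto $\mathbb{D}$. Setting $g(w) := -i\, h(\phi^{-1}(w))$ produces an analytic function on $\mathbb{D}$ with $\Real g \geq 0$, so the classical Riesz--Herglotz representation theorem delivers a positive finite Borel measure $\nu$ on $\partial \mathbb{D}$ and a real constant $c$ with
\[
g(w) = ic + \int_{\partial \mathbb{D}} \frac{e^{i\theta} + w}{e^{i\theta} - w} \, \md \nu(\theta).
\]
Substituting $w = \phi(z)$, I split $\nu$ into its atomic part at $e^{i\theta} = -1$ (which corresponds under $\phi^{-1}$ to the point at infinity and produces the linear term $\alpha z$) and its restriction to $\partial \mathbb{D} \setminus \{-1\}$, which pushes forward under $\phi^{-1}$ to a positive Borel measure $\mm$ on $\bbR$. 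The Jacobian of $\phi^{-1}$ exactly supplies the weight $(1 + \xi^2)^{-1}$, so that $\int_{\bbR} \md \mm(\xi)/(1+\xi^2) < \infty$ follows from finiteness of $\nu$. The subtraction $-\xi/(1+\xi^2)$ in the kernel appears naturally in order to absorb otherwise divergent real constants and make the integral converge under this weaker integrability, with the residual real constant absorbed into $\beta$. Uniqueness of $(\alpha, \beta, \mm)$ follows from $\alpha = \lim_{y \to \infty} h(iy)/(iy)$, $\beta = \Real h(i)$, and the Stieltjes inversion formula $\md \mm(\xi) = \pi^{-1} \lim_{y \to 0^+} \Imag h(\xi + iy) \, \md \xi$ (in the vague sense).

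The second displayed form is then immediate: whenever $\int_{\bbR} \xi \, \md \mm(\xi)/(1+\xi^2)$ is finite, it is a real number that I can absorb into $\beta$ to obtain $\gamma = \beta - \int_{\bbR} \xi \, \md \mm(\xi)/(1+\xi^2)$ and the simpler kernel $1/(\xi - z)$. The principal obstacle I anticipate is the bookkeeping in the push-forward step: one must carefully isolate the atomic mass of $\nu$ at $e^{i\theta} = -1$ as being responsible for the linear term $\alpha z$, verify that the Cayley Jacobian produces precisely the weight $(1+\xi^2)^{-1}$ on $\bbR$, and reconcile the real constants emerging from both the change of variables and the regularizing subtraction so that the result matches the form of (\ref{eq.defhergl}) exactly.
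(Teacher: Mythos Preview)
The paper does not actually prove this theorem: it is stated as a classical result ``due to Nevanlinna'' with a citation to \cite{Nevanlinna:1922:AEF}, and no argument is given. So there is no paper proof to compare your proposal against.

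That said, your approach is the standard one and is correct in outline. The sufficiency direction is routine, and for necessity the reduction via the Cayley transform $\phi(z)=(z-i)/(z+i)$ to the Riesz--Herglotz representation on the disk is exactly how this is usually done. Your identification of the atom of $\nu$ at $-1$ as the source of the linear term $\alpha z$, the push-forward producing the weight $(1+\xi^2)^{-1}$, and the role of the subtracted term $-\xi/(1+\xi^2)$ as a regularizer are all accurate. The only caution is that your ``proof plan'' is really a sketch: the algebraic identity
\[
i\,\frac{e^{i\theta}+\phi(z)}{e^{i\theta}-\phi(z)}=\frac{1+\xi z}{\xi-z}=\frac{1+\xi^2}{\xi-z}-\xi,\qquad \xi=\phi^{-1}(e^{i\theta}),
\]
is what makes the kernel and the weight come out exactly right, and writing it down explicitly would turn the plan into a proof. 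The uniqueness statements you quote for $\alpha$, $\beta$, and $\mm$ match what the paper records separately as Corollary~\ref{cor.Herg}.
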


Moreover, for a given Herglotz function $h$, the triple $(\alpha, \beta, \mm)$ is uniquely defined by the following corollary.

\begin{Cor}\label{cor.Herg}
Let $h$ be a Herglotz function defined by its representation (\ref{eq.defhergl}), then we have:
\begin{eqnarray}
&& \alpha= \lim_{y\to +\infty}\displaystyle \frac{h(\ii y)}{\ii y}, \ \beta=\Real h(\ii), \nonumber \\[10pt] 
\mbox{ and } \forall  [a,b] \subset \bbR, && \frac{\mm([a,b])+\mm((a,b))}{2} =\lim_{y\to 0^{+}} \frac{1}{\pi} \int_{a}^{b}  \Imag h(x+\ii y) \md x \label{eq.mesHerg}.
\end{eqnarray}
\end{Cor}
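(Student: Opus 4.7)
\medskip

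\noindent\textbf{Proof plan for Corollary \ref{cor.Herg}.}

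The strategy is to read off each of the three pieces of data $(\alpha,\beta,\mm)$ from the representation (\ref{eq.defhergl}) by probing $h$ at appropriate points of $\bbC^{+}$ and passing to limits inside the integral, justified in each case by dominated convergence.

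\emph{Identification of $\alpha$.} I would divide (\ref{eq.defhergl}) by $\ii y$ with $z=\ii y$ and let $y\to +\infty$. The linear term trivially contributes $\alpha$ and the constant $\beta/(\ii y)$ vanishes. For the integral term I would check that the integrand
$$
g_y(\xi) := \frac{1}{\ii y}\left( \frac{1}{\xi-\ii y}-\frac{\xi}{1+\xi^2}\right)
$$
tends to $0$ pointwise in $\xi$ and admits a bound of the form $|g_y(\xi)|\leq C/(1+\xi^2)$ uniformly for $y\geq 1$ (for $|\xi|$ bounded this follows since $|\xi-\ii y|\geq y$, and for $|\xi|$ large the cancellation $\frac{1}{\xi-\ii y}\approx \frac{1}{\xi}+\frac{\ii y}{\xi^2}+\ldots$ gives an $O(1/\xi^2)$ estimate). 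Since $\md\mm/(1+\xi^2)$ is a finite measure, dominated convergence yields the first formula.

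\emph{Identification of $\beta$.} Setting $z=\ii$ in (\ref{eq.defhergl}) and rationalising,
$$
\frac{1}{\xi-\ii}-\frac{\xi}{1+\xi^2} \;=\; \frac{\xi+\ii}{\xi^2+1}-\frac{\xi}{1+\xi^2} \;=\; \frac{\ii}{1+\xi^2},
$$
so the whole integral in $h(\ii)$ is purely imaginary. Hence $\Real h(\ii)=\Real(\alpha\ii)+\beta=\beta$.

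\emph{Identification of $\mm$ (Stieltjes inversion).} For $z=x+\ii y$ with $y>0$, a direct computation gives the Poisson kernel identity
$$
\Imag h(x+\ii y) \;=\; \alpha\, y \,+\, \int_{\bbR} \frac{y}{(\xi-x)^2+y^2}\,\md\mm(\xi).
$$
I would then integrate in $x$ over $[a,b]$, interchange the order of integration by Fubini (both terms are nonnegative), and compute
$$
\frac{1}{\pi}\int_{a}^{b}\frac{y}{(\xi-x)^2+y^2}\,\md x \;=\; \frac{1}{\pi}\Bigl[\arctan\tfrac{b-\xi}{y}-\arctan\tfrac{a-\xi}{y}\Bigr]
$$
which, as $y\to 0^{+}$, converges pointwise to $\mathbf{1}_{(a,b)}(\xi)+\tfrac12\mathbf{1}_{\{a,b\}}(\xi)$. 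The term $\alpha y (b-a)/\pi$ goes to $0$, so the desired formula will follow from dominated convergence.

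\emph{Main obstacle.} The delicate point is the dominating function needed for the Stieltjes inversion step, since $\mm$ is only required to integrate $(1+\xi^2)^{-1}$, not $1$. I would handle this by splitting the real line into a compact neighborhood of $[a,b]$ (where the integrand is bounded by $1$ and $(1+\xi^2)^{-1}$ is bounded below) and its complement (where the cancellation between the two arctangents yields $O(\xi^{-2})$ uniformly in $y\in(0,1]$). Combining the two regions produces a bound of the form $C/(1+\xi^2)$ uniform in $y\in(0,1]$, which is $\mm$-integrable and legitimises the limit.
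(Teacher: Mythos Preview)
Your proof plan is correct and follows the classical route (dominated convergence for $\alpha$, direct evaluation at $z=\ii$ for $\beta$, and the Stieltjes--Perron inversion via the Poisson kernel for $\mm$, with the $O(\xi^{-2})$ tail bound obtained from the arctangent cancellation). The paper itself does not supply a proof of this corollary: it is stated as a standard fact accompanying the Nevanlinna representation theorem, with references to \cite{Nevanlinna:1922:AEF,Baker:1981:PAB,Gesztesy:2000:MVH,Berg:2008:SPBS,Bernland:2011:SRC}. So there is no ``paper's own proof'' to compare against; your argument is precisely the textbook one those references contain.
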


We now introduce for any $\theta\in (0,\pi/2)$ the Stolz domain $D_{\theta}$ defined by: $$D_{\theta}=\{z\in \bbC \mid  \theta \leq \operatorname{arg}(z) \leq \pi-\theta  \}.$$
The representation theorem \ref{eq.defhergl} implies (see \cite{Bernland:2011:SRC}) that a Herglotz function satisfies the following asymptotics in $D_{\theta}$  for all $\theta\in (0,\pi/2)$:
\begin{equation}\label{eq.asymherglotzfunction}
h(z)=-\mm(\{0\}) z^{-1}+o(z^{-1} ) \ \mbox{ as } |z| \to  0 \ \mbox{ and } \ h(z)=\alpha \, z+o(z) \mbox{ as } |z| \to + \infty.
\end{equation}
In other words, an Herglotz function grows at most as rapidly $z$ when $ |z|$  tends to $+\infty$ and cannot be more singular than $z^{-1}$ when $ |z|$ tends to $0$. 

We will conclude this review of Herglotz functions by a last identity: the so-called sum rule (see \cite{Bernland:2011:SRC}) which is a fundamental tool to derive quantitative bounds on passive systems.

\begin{Pro}\label{pro.sumrules}
Let $h$ be a Herglotz function which admits the following asymptotic expansions in $D_{\theta}$ for all $\theta\in (0,\pi/2)$:
\begin{eqnarray*}
&&h(z)=a_{-1} \,z^{-1}+o(z^{-1}) \ \mbox{ as } |z| \to  0, \\[10pt] 
\mbox{ and } && h(z)=  b_{-1}\, z^{-1}+o(z^{-1})\ \mbox{ as }  |z|  \to +\infty.
\end{eqnarray*}
with $a_{-1}$ and $b_{-1} \in \bbR$. Then the following identity holds 
\begin{equation}\label{eq.sumrules}
\lim_{\eta \to 0^{+}}\lim_{y\to 0^{+}}\frac{1}{\pi} \int_{\eta<|x|<\eta^{-1}} \Imag h (x+\ii y) \, \md x=a_{-1}-b_{-1}.
\end{equation}
\end{Pro}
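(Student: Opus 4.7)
The plan is to reduce the identity to a statement about the representing measure $\mm$ of $h$ provided by the Nevanlinna representation of Theorem \ref{thm.Herg}. Write
$$h(z) = \alpha z + \beta + \int_\bbR \Big(\frac{1}{\xi - z} - \frac{\xi}{1+\xi^2}\Big)\, \md \mm(\xi).$$
I would first exploit the large-$|z|$ asymptotic to show $\alpha = 0$ and $\mm(\bbR) = -b_{-1} < \infty$. Evaluating the representation on $z = \ii y$ (which lies in every Stolz domain $D_\theta$, $\theta<\pi/2$), a direct computation gives
$$y\, \Imag h(\ii y) = \alpha y^2 + \int_\bbR \frac{y^2}{\xi^2+y^2}\, \md \mm(\xi).$$
By hypothesis $y\, \Imag h(\ii y) \to -b_{-1}$ as $y\to +\infty$, so $\alpha = 0$ (otherwise the first term diverges); and since $y^2/(\xi^2+y^2)$ is non-decreasing in $y$ with pointwise limit $1$, monotone convergence yields $\mm(\bbR) = -b_{-1}$, so in particular $\mm$ is finite.

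Next I identify $a_{-1}$ and $b_{-1}$ in terms of $\mm$. Since $\mm(\bbR)<\infty$, the representation simplifies to $h(z) = \gamma + \int_\bbR \md\mm(\xi)/(\xi-z)$ for some $\gamma\in \bbR$ (second part of Theorem \ref{thm.Herg}). For $z \in D_\theta$ one has $|\xi - z| \geq \Imag z \geq |z|\sin\theta$, whence this integral is $O(1/|z|)$ and $h(z) \to \gamma$; the hypothesis $h(z) = o(1)$ at infinity then forces $\gamma = 0$. At the origin, the asymptotic (\ref{eq.asymherglotzfunction}) gives $a_{-1} = -\mm(\{0\})$. Combining,
$$a_{-1} - b_{-1} = \mm(\bbR) - \mm(\{0\}) = \mm(\bbR \setminus \{0\}).$$

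The left-hand side of the sum rule is then obtained by applying the boundary-value formula (\ref{eq.mesHerg}) of Corollary \ref{cor.Herg} separately on $[\eta, \eta^{-1}]$ and $[-\eta^{-1}, -\eta]$ and summing, using $\mm([a,b]) = \mm((a,b)) + \mm(\{a\}) + \mm(\{b\})$:
$$\lim_{y\to 0^+} \frac{1}{\pi} \int_{\eta<|x|<\eta^{-1}} \Imag h(x+\ii y)\, \md x = \mm\bigl((\eta,\eta^{-1})\bigr) + \mm\bigl((-\eta^{-1},-\eta)\bigr) + \tfrac{1}{2}R(\eta),$$
with $R(\eta) := \mm(\{\eta\}) + \mm(\{-\eta\}) + \mm(\{\eta^{-1}\}) + \mm(\{-\eta^{-1}\})$. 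The open set $(\eta,\eta^{-1})\cup(-\eta^{-1},-\eta)$ increases to $\bbR\setminus\{0\}$ as $\eta\downarrow 0$, so by continuity of the finite measure $\mm$ the two open-interval terms tend to $\mm(\bbR\setminus\{0\}) = a_{-1}-b_{-1}$.

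The delicate point, which I expect to be the main obstacle, is to show $R(\eta)\to 0$ as $\eta\to 0^+$ and not merely along $\eta$ avoiding the countable atom set. Because $\mm(\bbR)<\infty$, the set $A$ of atoms of $\mm$ is countable with $\sum_{a\in A}\mm(\{a\})\leq \mm(\bbR)$; given $\varepsilon>0$ pick a finite $S_\varepsilon\subset A$ with $\mm(A\setminus S_\varepsilon)<\varepsilon$. Since $S_\varepsilon$ is a fixed finite set not containing $0$ or accumulating at $\pm\infty$, for $\eta$ sufficiently small all four points $\pm\eta,\pm\eta^{-1}$ lie outside $S_\varepsilon$, whence $R(\eta)\leq 4\,\mm(A\setminus S_\varepsilon)<4\varepsilon$. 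This establishes (\ref{eq.sumrules}).
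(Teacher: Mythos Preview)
The paper does not supply its own proof of this proposition; it is stated as a quoted result with a reference to \cite{Bernland:2011:SRC}. So there is nothing in the paper to compare your argument against directly.

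Your proof is correct and follows what is essentially the standard route: extract from the two asymptotics that $\alpha=0$, $\gamma=0$, $\mm(\bbR)=-b_{-1}$ and $\mm(\{0\})=-a_{-1}$, so that the target value is $\mm(\bbR\setminus\{0\})$; then read off the inner $y\to 0^+$ limit from the Stieltjes inversion formula (\ref{eq.mesHerg}) and pass to the limit in $\eta$. Two small remarks. First, your sentence ``$S_\varepsilon$ is a fixed finite set not containing $0$'' is not quite right, since $0$ may well be an atom and hence lie in $S_\varepsilon$; but this is harmless, because $\pm\eta$ and $\pm\eta^{-1}$ are never equal to $0$, and for small $\eta$ they avoid the finitely many nonzero points of $S_\varepsilon$. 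Second, your bound $R(\eta)\le 4\,\mm(A\setminus S_\varepsilon)$ is valid but looser than necessary: for $\eta<1$ the four points are distinct, so in fact $R(\eta)\le \mm(A\setminus S_\varepsilon)$. Neither point affects the validity of the argument.
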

We now introduce Stieltjes functions: another famous class of analytic functions, closely related to Herglotz functions.
\begin{Def}
A Stieltjes function is an analytic function $g:\bbC\setminus{\bbR}^{-} \to \bbC$ which satisfies:
$$
\Imag g(z)\leq 0 \ \,  \forall z \in \bbC^{+} \ \mbox{ and } \ g(x)\geq 0 \  \mbox{ for } x > 0.
$$
\end{Def}

Like Herglotz functions, Stieltjes functions are characterized by a representation theorem.

\begin{Thm}\label{thm.Stielt}
A necessary and sufficient condition for $g$ to be a Stieltjes function is given by the following representation:
\begin{equation*}
g(z)=\alpha+ \displaystyle \int_{\bbR^{+}}  \frac{\md \mm( \xi)}{\xi+z} \quad \forall z\in \bbC\setminus \bbR^{-},
\end{equation*}
where $\alpha= \lim \limits_{|z|\to +\infty} g(z) \in \bbR^{+}$ and $\mm$ is a positive regular Borel measure, uniquely defined, for which  $\int_{\bbR^{+}} \
 \md \mm(\xi)/(1+\xi)$ is finite.  
\end{Thm}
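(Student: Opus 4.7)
The plan is to deduce this Stieltjes representation from the Nevanlinna representation (Theorem~\ref{thm.Herg}) applied to the Herglotz function $-g$ (since $\Imag(-g)=-\Imag g\geq 0$ on $\bbC^{+}$), and then to use the extra positivity $g(x)\geq 0$ for $x>0$ to refine the resulting representation into the required Stieltjes form; this positivity is the new ingredient that forces both the vanishing of the linear coefficient and the strengthened integrability $\int d\mm(\xi)/(1+\xi)<\infty$.

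Sufficiency is routine. If $g(z)=\alpha+\int_{\bbR^{+}} d\mm(\xi)/(\xi+z)$ with $\alpha\geq 0$ and $\int d\mm/(1+\xi)<\infty$, then differentiation under the integral yields analyticity on $\bbC\setminus\bbR^{-}$; the identity $\Imag(\xi+z)^{-1}=-\Imag z/|\xi+z|^{2}$ gives $\Imag g\leq 0$ on $\bbC^{+}$; positivity of the integrand for $z=x>0$ forces $g(x)\geq \alpha\geq 0$; and dominated convergence gives $\lim_{|z|\to\infty} g(z)=\alpha$ in any Stolz sector avoiding $\bbR^{-}$.

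For necessity, Theorem~\ref{thm.Herg} applied to $-g$ provides $\alpha_{0}\geq 0$, $\beta_{0}\in\bbR$ and a positive Borel measure $\mu$ with $\int d\mu/(1+\xi^{2})<\infty$ such that $-g(z)=\alpha_{0}z+\beta_{0}+\int_{\bbR}\bigl[(\xi-z)^{-1}-\xi/(1+\xi^{2})\bigr]d\mu(\xi)$ on $\bbC^{+}$. Since $-g$ extends analytically across $(0,\infty)$, the inversion formula~\eqref{eq.mesHerg} forces $\operatorname{supp}\mu\subseteq(-\infty,0]$; a possible atom of $\mu$ at $0$ corresponds to a $1/z$-pole of $g$. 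The change of variable $\eta:=-\xi\in[0,\infty)$ with pushforward measure $\nu$ rewrites this as the formal identity
\begin{equation*}
g(z)=-\alpha_{0}z-\beta_{0}+\int_{\bbR^{+}}\frac{d\nu(\eta)}{\eta+z}-\int_{\bbR^{+}}\frac{\eta\,d\nu(\eta)}{1+\eta^{2}},
\end{equation*}
and the remaining task is to show that $\alpha_{0}=0$ and that the two integrals on the right converge individually.

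The main obstacle is to leverage the positivity $g\geq 0$ on $(0,\infty)$. A Fatou-lemma argument handles the integrability: the integrand $\eta/(1+\eta^{2})-(\eta+x)^{-1}$ is nonnegative for $\eta x\geq 1$ and converges pointwise to $\eta/(1+\eta^{2})$ as $x\to+\infty$, so Fatou yields $\int\eta\,d\nu/(1+\eta^{2})\leq\liminf_{x\to+\infty}\int_{\eta x\geq 1}\bigl[\eta/(1+\eta^{2})-(\eta+x)^{-1}\bigr]d\nu$; if this liminf were infinite, rearranging the Herglotz identity for $-g(x)$ (where the residual contribution on $\eta x<1$ is $O(1/x)$) would force $g(x)\to-\infty$, contradicting $g\geq 0$; therefore $\int\eta\,d\nu/(1+\eta^{2})<\infty$. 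Dominated convergence on the now-bounded pieces then gives $-g(x)=\alpha_{0}x+O(1)$ as $x\to+\infty$, and $g\geq 0$ combined with $\alpha_{0}\geq 0$ forces $\alpha_{0}=0$. Consequently $g(z)=\alpha+\int d\nu(\eta)/(\eta+z)$ with $\alpha:=-\beta_{0}-\int\eta\,d\nu/(1+\eta^{2})=\lim_{x\to+\infty}g(x)\geq 0$; evaluating at $z=1$ yields the stated integrability $\int d\nu/(1+\eta)=g(1)-\alpha<\infty$. Uniqueness of $\alpha$ follows from $\alpha=\lim_{|z|\to\infty}g(z)$, and uniqueness of $\mm:=\nu$ from the Stieltjes-type inversion formula $(\mm([a,b])+\mm((a,b)))/2=-\pi^{-1}\lim_{y\to 0^{+}}\int_{a}^{b}\Imag g(-x+\ii y)\,\md x$ for $[a,b]\subset\bbR^{+}$, inherited from Corollary~\ref{cor.Herg} applied to $-g$.
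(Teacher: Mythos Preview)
The paper does not actually prove Theorem~\ref{thm.Stielt}: it is stated without proof in the review subsection~2.2 and attributed to the classical references \cite{Nevanlinna:1922:AEF,Baker:1981:PAB,Gesztesy:2000:MVH,Berg:2008:SPBS,Bernland:2011:SRC}. So there is no proof in the paper to compare your argument against.

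Your argument is correct and follows the standard reduction. You apply the Nevanlinna representation (Theorem~\ref{thm.Herg}) to the Herglotz function $-g$ on $\bbC^{+}$, use the analyticity of $g$ across $(0,\infty)$ together with the inversion formula~\eqref{eq.mesHerg} to force $\operatorname{supp}\mu\subset(-\infty,0]$, and then pushforward by $\eta=-\xi$. The only nontrivial step is the upgrade of integrability from $\int d\nu/(1+\eta^{2})<\infty$ to $\int \eta\,d\nu/(1+\eta^{2})<\infty$ (equivalently $\int d\nu/(1+\eta)<\infty$), and your Fatou argument handles this cleanly: the identity
\[
\frac{\eta}{1+\eta^{2}}-\frac{1}{\eta+x}=\frac{\eta x-1}{(1+\eta^{2})(\eta+x)}
\]
shows nonnegativity on $\{\eta x\geq 1\}$, the residual on $\{\eta x<1\}$ is indeed $O(1/x)$ since $|\text{integrand}|\leq 1/[x(1+\eta^{2})]$ there, and divergence of the Fatou liminf would drive $g(x)\to-\infty$ along the positive axis, contradicting $g(x)\geq 0$. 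Once $\int\eta\,d\nu/(1+\eta^{2})<\infty$, dominated convergence gives $g(x)=-\alpha_{0}x+O(1)$, forcing $\alpha_{0}=0$, and the rest follows. The uniqueness argument via the pushed-forward inversion formula is also correct.
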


\begin{Rem}
An easy connection can be made between Herglotz and Stieljes function. Thanks to the representation Theorems \ref{thm.Herg} and \ref{thm.Stielt}, we note that if $g$ is a Stieltjes function, the function $h$ defined by $h(z)=g(-z)$ is an Herglotz function whose measure $\mm$ has a support included in $\bbR^{+}$ in the relation (\ref{eq.defhergl}). Another connection between Herglotz and Stieltjes functions is given in the next subsection by Corollary \ref{cor.herg}.
\end{Rem}

\subsection{Construction of a Stieltjes function associated with $f$}\label{sub-secconstrHerg}
In this paragraph, we construct a Stieltjes function associated with the function $f$.
For that, we first establish with the following lemma some information about the sign of the imaginary part of the function $f$.

\begin{Lem}\label{lem.partimag}
If a function $f$ satisfies the hypotheses H1-4, then 
\begin{equation}\label{eq.imag}
 \pm \Imag f(z)\geq 0, \ \forall z\in \bbC^{+} \  \mbox{ such that } \pm \Real z\geq 0.  
\end{equation}
Moreover, if $f$ is not a constant function,  the inequalities (\ref{eq.imag}) are strict as soon as $\Real z\neq 0$.
\end{Lem}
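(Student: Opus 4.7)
The natural object is the function $u(z):=\Imag f(z)$, which by H1 is harmonic on $\bbC^{+}$ and continuous on $\operatorname{cl}\bbC^{+}$. I would work separately on the two open quadrants
\[
Q_{+}=\{z\in\bbC \mid \Real z>0,\ \Imag z>0\},\qquad Q_{-}=\{z\in\bbC \mid \Real z<0,\ \Imag z>0\},
\]
and aim to prove $u\geq 0$ on $Q_{+}$ and $u\leq 0$ on $Q_{-}$ by the maximum/minimum principle, then upgrade to strict inequalities via a strong maximum principle together with the open mapping theorem.

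The first step is to identify the boundary values of $u$ on $\partial Q_{\pm}$. On the positive real axis the passivity hypothesis H4 gives $u\geq 0$, while H3 applied to real $z=x>0$ gives $f(-x)=\overline{f(x)}$, hence $u(x')=-\Imag f(-x')\leq 0$ for $x'<0$. Putting $z=\ii y$ with $y>0$ in H3 yields $f(\ii y)=\overline{f(\ii y)}\in\bbR$, so $u$ vanishes identically on the positive imaginary semi-axis, which is the shared portion of $\partial Q_{+}$ and $\partial Q_{-}$. Thus $u$ already has the desired sign on the entire finite boundary of each quadrant, and moreover by H2, $u(z)\to\Imag f_{\infty}=0$ as $|z|\to\infty$ in $\operatorname{cl}\bbC^{+}$, so $u$ is bounded on $\overline{Q_{\pm}}$.

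The second step is to apply the maximum principle on each (unbounded) quadrant. The cleanest route is to truncate to $Q_{\pm}\cap\{|z|<R\}$, note by H2 that $\sup\{|u(z)|:|z|=R,\ z\in\overline{Q_{\pm}}\}\to 0$ as $R\to\infty$, and apply the usual maximum principle on these bounded regions with boundary data controlled by H4 on the real axis, by the identity $u=0$ on the imaginary axis, and by the vanishing at infinity on the arc $|z|=R$. Letting $R\to\infty$ yields exactly (\ref{eq.imag}). This is the main technical point: the maximum principle on an unbounded sector requires control at infinity, which is precisely what H2 (together with $f_{\infty}\in\bbR$) provides.

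For the strict form, suppose $f$ is not constant and that $u(z_{0})=0$ at some interior point $z_{0}\in Q_{+}$. Since $u\geq 0$ is harmonic on the connected open set $Q_{+}$, the strong minimum principle for harmonic functions forces $u\equiv 0$ on $Q_{+}$. Then $f$ is real-valued on the nonempty open set $Q_{+}\subset\bbC^{+}$; but $\bbR$ has empty interior in $\bbC$, so by the open mapping theorem applied to the holomorphic function $f$ on the connected open set $\bbC^{+}$, $f$ must be constant, contradicting the hypothesis. The symmetric argument on $Q_{-}$ completes the proof of the strict inequality whenever $\Real z\neq 0$.
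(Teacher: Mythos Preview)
Your proof is correct and follows essentially the same route as the paper: both arguments apply the maximum principle to the harmonic function $\Imag f$ on the first quadrant, using H3 and H4 for the boundary sign and H2 for control at infinity, then invoke H3 (directly or via the boundary values) for the left quadrant. The only minor differences are expository: the paper cites a maximum principle on unbounded domains directly rather than truncating, and for the strict inequality it applies the open mapping theorem immediately at the single point $z_0$ (the image of a small ball must contain points with negative imaginary part), whereas you first pass through the strong minimum principle to get $u\equiv 0$ on $Q_{+}$ before invoking open mapping---both are valid.
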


\begin{proof}
Let $\mathcal{O}$ denote the open set $\mathcal{O}=\{  z\in \bbC^{+}\mid \Real(z) > 0 \}$. By virtue of H1, $\Imag f$ is an harmonic function on $\mathcal{O}$ that is continuous on $\operatorname{cl}\mathcal{O}$. H3 and H4 imply respectively that $f$ is real on the imaginary axis and  that $\Imag f(z)\geq 0$ on the positive real axis, thus we get that $\Imag f(z)\geq 0$ on the boundary $\partial \mathcal{O}$ of $\mathcal{O}$. Moreover, from H2 it follows that $\Imag f(z)\to 0 $ as $|z|\to \infty$ in $\operatorname{cl}\mathcal{O}$. All these conditions allow us to apply the maximum principle on the function $\Imag f$ in the unbounded domain $\mathcal{O}$ (see Corollary 4 p 246 of \cite{Dautray:1974:MAN}) which yields the inequality (\ref{eq.imag}) for $\Real(z)\geq 0$.
The inequality (\ref{eq.imag}) for $\Real(z)\leq 0$ is then deduced by using H3.

In the case where $f$ is not a constant function, by contradiction, if there exists a $z_0\in \bbC^{+}$ with a positive real part such that $\Imag f(z_0)=0$, then by the open mapping theorem the image by $f$ of an open ball $B(z_0,\delta)\subset \mathcal{O}$ is an open set of $\bbC$ which contains a real number $f(z_0)$ and therefore some points with a negative imaginary part. This contradicts (\ref{eq.imag}). Finally, by using H3, one obtains also that $\Imag f(z)<0$ for $\Real(z)<0$.
\end{proof}

To construct a Stieltjes function associated with $f$, we will follow the idea proposed by the authors of \cite{Milton:1997:FFR}.
For that purpose, we define the complex root by
\begin{equation}\label{eq.rac}
\sqrt{z}=|z|^{\frac{1}{2}}\,e^{\ii\arg z/2} \ \mbox{ if } \ \arg{z} \in (0,2\pi)
\end{equation}
and extend it on the branch cut $\bbR^{+}$ by its limit from the upper-half plane, in other words the square root of positive real number $x$ is given by $\sqrt{x}=|x|^{\frac{1}{2}}$.

\begin{Thm}\label{th.Stielt}
If $f$ satisfies the hypotheses  H1-4, then the function $u$ defined by
\begin{equation}\label{eq.funcstielt}
u(z):=f(\sqrt{-z}) , \ \forall z \in \bbC
\end{equation}
is a Stieltjes function which is positive on $\bbR^{+*}$.
\end{Thm}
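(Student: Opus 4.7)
My plan is to verify the three defining properties of a Stieltjes function in turn (analyticity on $\bbC\setminus\bbR^-$, $\Imag u\le 0$ on $\bbC^+$, and $u\ge 0$ on $\bbR^+$), with strict positivity on $\bbR^{+*}$ emerging from the third. The first two reduce to the geometry of the branch cut (\ref{eq.rac}) combined with Lemma \ref{lem.partimag}; the third is the technical heart and will rely on the Nevanlinna representation of an auxiliary Herglotz function.

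For analyticity, convention (\ref{eq.rac}) makes $w\mapsto\sqrt{w}$ analytic on $\bbC\setminus\bbR^+$, so $z\mapsto\sqrt{-z}$ is analytic on $\bbC\setminus\bbR^-$. A direct argument check shows $\sqrt{-z}\in\operatorname{cl}\bbC^+$ always, and $\sqrt{-z}\in\bbC^+$ whenever $z\notin\bbR^{\le 0}$. Composition with $f$ (analytic on $\bbC^+$, continuous on $\operatorname{cl}\bbC^+$ by H1) yields analyticity of $u$ on $\bbC\setminus\bbR^-$. For the sign of $\Imag u$ on $\bbC^+$, writing $z=re^{i\theta}$ with $\theta\in(0,\pi)$ gives $\arg\sqrt{-z}=(\theta+\pi)/2\in(\pi/2,\pi)$, so $\sqrt{-z}\in\bbC^+$ with $\Real\sqrt{-z}<0$; Lemma \ref{lem.partimag} then yields $\Imag u(z)=\Imag f(\sqrt{-z})\le 0$.

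For positivity on $\bbR^{+*}$, first note that $u(x)=f(i\sqrt{x})\in\bbR$ for $x>0$ by H3, since $-\overline{i\sqrt{x}}=i\sqrt{x}$. To control the sign, I introduce the companion function $\tilde u(w):=f(\sqrt{w})-f_\infty$ on $\bbC^+$. Since $\sqrt{w}$ lies in the first quadrant of $\bbC^+$ when $w\in\bbC^+$, Lemma \ref{lem.partimag} gives $\Imag\tilde u\ge 0$, so $\tilde u$ is Herglotz on $\bbC^+$. By H2, $\tilde u(w)\to 0$ as $|w|\to\infty$; and $\tilde u$ extends continuously from $\bbC^+$ to $\bbR^-$ with real values, since $\sqrt{-s^2+i0^+}=is$ lands on the imaginary axis where $f$ is real. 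Plugging this into the Nevanlinna representation of Theorem \ref{thm.Herg} together with the asymptotics (\ref{eq.asymherglotzfunction}): the decay at infinity forces $\alpha=0$, continuity at $0$ forces $\mm(\{0\})=0$, and the vanishing of $\Imag\tilde u$ on $\bbR^-$ (via Corollary \ref{cor.Herg}) confines $\operatorname{supp}\mm\subset(0,\infty)$. Matching the limit $\tilde u(-s^2)\to 0$ as $s\to\infty$ then pins the residual additive constant to $0$, leaving
\begin{equation*}
\tilde u(w)=\int_{0}^{\infty}\frac{\md\mm(\xi)}{\xi-w}.
\end{equation*}
Substituting $w=-z$ yields the Stieltjes integral representation $u(z)=f_\infty+\int_0^{\infty}\md\mm(\xi)/(\xi+z)$ (cf.\ Theorem \ref{thm.Stielt}), with positive leading constant $f_\infty$; in particular $u(x)\ge f_\infty>0$ on $\bbR^{+*}$.

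The main expected obstacle is this last step, since concluding $f(is)>0$ directly from H1--H4 is not obvious: one really needs to exploit the full analytic structure, not just the pointwise symmetry. The delicate part is the careful matching of asymptotic data at $0$, at $\infty$, and on $\bbR^-$ to collapse the Nevanlinna triple $(\alpha,\beta,\mm)$ down to a measure on $(0,\infty)$ with no free additive constant. In contrast, steps 1 and 2 are essentially immediate from the branch cut geometry and Lemma \ref{lem.partimag}.
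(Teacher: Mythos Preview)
Your first two steps (analyticity and $\Imag u\le 0$ on $\bbC^+$) match the paper exactly. For the third step, however, you take a genuinely different and much heavier route. The paper obtains $u(x)>0$ on $\bbR^{+*}$ in two lines: by the Cauchy--Riemann relation $\partial_y\Real f(0,y)=-\partial_x\Imag f(0,y)$ together with Lemma~\ref{lem.partimag} (which gives $\partial_x\Imag f(0,y)\ge 0$), the map $y\mapsto f(iy)$ is decreasing on $\bbR^+$, so $f(iy)\ge\lim_{y\to\infty}f(iy)=f_\infty>0$ by H2. You instead build an auxiliary Herglotz function $\tilde u(w)=f(\sqrt{w})-f_\infty$, pin down its Nevanlinna data, and read off the full Stieltjes representation $u(z)=f_\infty+\int_0^\infty d\mm(\xi)/(\xi+z)$, from which positivity follows. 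Your argument is essentially correct, though the step ``matching the limit $\tilde u(-s^2)\to 0$ pins the additive constant to $0$'' hides a nontrivial point: you must first show $\int_0^\infty\xi(1+\xi^2)^{-1}\,d\mm(\xi)<\infty$ before the representation can be collapsed to the unsubstracted form; this follows by a monotone convergence argument (the integrand $(\xi+s^2)^{-1}-\xi(1+\xi^2)^{-1}$ is decreasing in $s$ and the integral is bounded below by the boundedness of $f$), but you should make it explicit. The payoff of your approach is that you obtain the Stieltjes integral representation directly, which the paper gets only afterward (in Corollary~\ref{cor.herg}) by invoking the representation Theorem~\ref{thm.Stielt}; the paper's approach, on the other hand, is far more elementary and avoids all measure-theoretic bookkeeping.
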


\begin{proof}
The definition of the complex square root and the hypothesis H1 directly imply that $u$ is analytic on $\bbC\setminus \bbR^{-}$. Moreover, using the property H3 and the Lemma \ref{lem.partimag}, we get that $$u(\bbC^{+})=f(\{z\in \bbC^{+}\mid \Real(z)<0 \})\subset \operatorname{cl}\bbC^{-},$$
where $\bbC^{-}$ denotes the set $\bbC^{-}=\{ z \in \bbC \mid \Imag(z) < 0\}$.
To prove that $u$ is a Stieltjes function  positive on $\bbR^{+*}$, it just remains to show that $u(x)>0$ for $x>0$. By using H3, we immediately get that $u(x)=f(\ii \, x^{\frac{1}{2}}) \in \bbR$, for $x>0$. Then, the positivity of $u(x)$ follows from the decreasing nature of the real function $y\mapsto f(\ii \,y), \, y\in \bbR^+$ which implies, by virtue of H2, that $f(i \,y) \geq \lim_{y \to \infty} f(\ii y)=f_{\infty} >0$. This decreasing property is an immediate consequence of the Cauchy--Riemann relations written on the positive imaginary axis: $\partial_y \Real f (0, y)=-\partial_x \Imag f(0, y)$ and the fact that $\partial_x \Imag f (0, y)\geq 0$ by Lemma \ref{lem.partimag}.
\end{proof}
Figure \ref{fig.imagpart} sums up the effect of the square root mapping applied to the function $f$ to convert it into the Stieltjes function $u$, defined by (\ref{eq.funcstielt}).

\begin{figure}[!t]
\centering
 \includegraphics[width=0.9\textwidth]{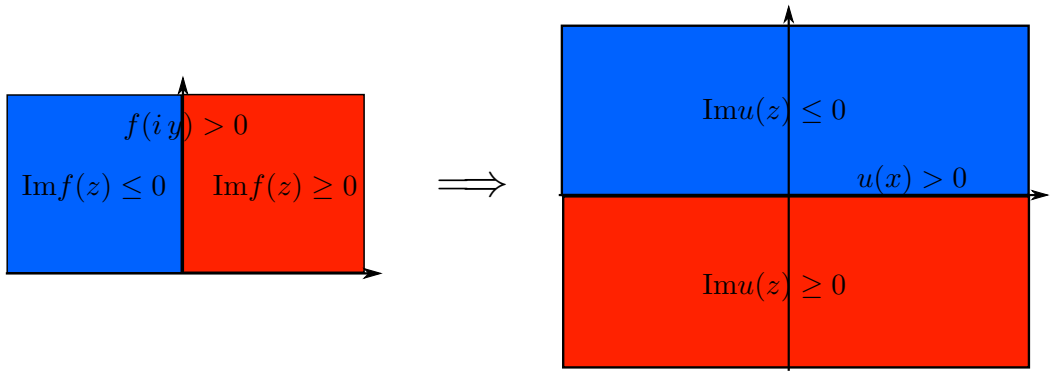}
 \caption{Sign of the imaginary part of the functions $f$ (left) and $u$ (right).}
 \label{fig.imagpart}
\end{figure}

\begin{Cor}\label{cor.herg}
The function $v$ defined by 
\begin{equation}\label{eq.funcherglotz}
v(z):=z\, u(-z)=z f(\sqrt{z}), \ \forall z\in \bbC
\end{equation}
is a Herglotz function, which is analytic on $\bbC\setminus{\bbR^{+}}$ and negative on $\bbR^{-*}$. Moreover, in its representation given by Theorem \ref{thm.Herg}, the measure $\mm$ is supported in $\bbR^{+}$ and $\alpha$ is equal to $f_{\infty}$. 
\end{Cor}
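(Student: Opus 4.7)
The plan is to read off every claim directly from the Stieltjes representation of $u$ given by Theorem \ref{thm.Stielt}. First I would identify the positive constant in that representation: from the proof of Theorem \ref{th.Stielt} we know that $u(x)=f(\ii\sqrt{x})$ for $x>0$, and by hypothesis H2 this tends to $f_{\infty}$ as $x\to +\infty$. Theorem \ref{thm.Stielt} therefore yields
\begin{equation*}
u(z)=f_{\infty}+\int_{\bbR^{+}}\frac{\md \mm_{u}(\xi)}{\xi+z},\quad z\in\bbC\setminus\bbR^{-},
\end{equation*}
and substituting $-z$ for $z$ gives, on $\bbC\setminus\bbR^{+}$,
\begin{equation*}
v(z)=z\,u(-z)=f_{\infty}\,z+\int_{\bbR^{+}}\frac{z}{\xi-z}\,\md \mm_{u}(\xi).
\end{equation*}
Analyticity of $v$ on $\bbC\setminus\bbR^{+}$ is immediate from this formula.

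For the Herglotz property, a direct calculation gives
\begin{equation*}
\Imag\frac{z}{\xi-z}=\frac{y\,\xi}{|\xi-z|^{2}},\quad z=x+\ii y,\ \xi\geq 0,
\end{equation*}
which is nonnegative on $\bbC^{+}$, so $\Imag v(z)\geq f_{\infty}\,\Imag z>0$ on $\bbC^{+}$, establishing that $v$ is Herglotz. For the sign on $\bbR^{-*}$, the branch convention (\ref{eq.rac}) gives $\sqrt{x}=\ii\sqrt{|x|}$ for $x<0$; the proof of Theorem \ref{th.Stielt} already established that $y\mapsto f(\ii y)$ is real-valued and bounded below by $f_{\infty}>0$ on $\bbR^{+*}$, so $v(x)=x\,f(\ii\sqrt{|x|})<0$.

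To extract $\alpha$ and the support of $\mm$ in the Nevanlinna representation of $v$, I would invoke Corollary \ref{cor.Herg}. The asymptotic formula there yields
\begin{equation*}
\alpha=\lim_{y\to +\infty}\frac{v(\ii y)}{\ii y}=\lim_{y\to +\infty}f(\sqrt{\ii y})=f_{\infty},
\end{equation*}
since $|\sqrt{\ii y}|=\sqrt{y}\to\infty$ and H2 applies. For the support, the analyticity of $v$ on $\bbC\setminus\bbR^{+}$ together with its real-valuedness on $\bbR^{-*}$ implies that $\Imag v(x+\ii y)\to 0$ locally uniformly as $y\to 0^{+}$ for $x\in\bbR^{-*}$; formula (\ref{eq.mesHerg}) then forces $\mm([a,b])=0$ for every compact $[a,b]\subset\bbR^{-*}$, whence $\mm$ is supported in $\bbR^{+}$.

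I do not expect any serious obstacle: each assertion reduces to a direct inspection of the Stieltjes representation of $u$. The only mildly delicate point is matching the Stieltjes constant of $u$ to $f_{\infty}$ and translating the Stieltjes measure $\mm_{u}$ into the Nevanlinna data $(\alpha,\beta,\mm)$ of $v$; this is the bookkeeping step, but it is handled cleanly by the limit formulas of Corollary \ref{cor.Herg} applied to the explicit expression for $v$.
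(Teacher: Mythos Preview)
Your proposal is correct and follows essentially the same approach as the paper: both arguments insert the Stieltjes representation of $u$ (with constant identified as $f_{\infty}$) into the definition of $v$, compute $\Imag\big(z/(\xi-z)\big)=\xi\Imag z/|\xi-z|^{2}$ to establish the Herglotz property, and then use the analyticity and real-valuedness of $v$ on $\bbR^{-*}$ together with the Stieltjes inversion formula (\ref{eq.mesHerg}) to conclude that the Nevanlinna measure is supported in $\bbR^{+}$. Your version is slightly more explicit in a couple of places (identifying the Stieltjes constant via the positive real axis, and invoking the limit formula of Corollary \ref{cor.Herg} for $\alpha$), but the logical structure is the same.
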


\begin{proof}
The following proof is partially inspired from \cite{Berg:2008:SPBS}.
First, one notices from definition (\ref{eq.funcherglotz}) and Theorem \ref{th.Stielt} that $v$ is analytic on $z\in \bbC\setminus{\bbR^{+}}$ and negative on $\bbR^{-*}$.
Then, as $u$ is defined by (\ref{eq.funcstielt}) is a Stieltjes function which tends to $f_{\infty}$ when $|z|\to \infty$, by the representation Theorem \ref{thm.Stielt}, $u$ can be expressed as
$$
u(z)=f_{\infty}+\int_{\bbR^{+}} \frac{\md \nu( \xi)}{\xi+z} \, , \ \forall z\in \bbC\setminus \bbR^{-},
$$
with $\nu$ a positive regular Borel measure on $\bbR^{+}$ such that $\int_{\bbR^{+}} \
 \md \nu (\xi)/(1+\xi)$ is finite.  
Thus, the function $v$ defined by  (\ref{eq.funcherglotz}) is given by 
\begin{eqnarray*}
v(z)&=& f_{\infty}\,z+ \int_{\bbR_{+}}\frac{z \md \nu( \xi)}{\xi-z}, \ \forall z \in \bbC\setminus{\bbR^{+}},
\end{eqnarray*}
and therefore
$$
\Imag v(z)=f_{\infty} \, \Imag(z)+\int_{\bbR_{+}} \frac{ \xi  \Imag(z)}{|\xi-z|^2} \, \md \nu( \xi)>0, \, \mbox{ when } \Imag(z)>0.
$$
Hence, one concludes that $v$ is a Herglotz function. Furthermore, from the definition (\ref{eq.funcherglotz}) of $v$ and the hypothesis H2, one gets immediately that its coefficient $\alpha$ in the representation Theorem \ref{thm.Stielt} is equal to $f_{\infty}$. Finally,  as $v$ is analytic on $z\in \bbC\setminus{\bbR^{+}}$ and negative on $\bbR^{-*}$, one deduces from (\ref{eq.mesHerg}) that the support of the measure $\mm$ associated with $v$ is included in $\bbR^{+}$.
\end{proof}

\begin{Rem}\label{Rem.DrudeLorentzmodel}
The assumption H1 supposes that $f$ can be continuously extended from the upper-half plane to the real line and implies in particular that $f$ admits no poles on the real axis or equivalently (see \cite{Gesztesy:2000:MVH}) that the measure $\nu$ associated with the Herglotz function $v$ has no punctual part. Indeed, we can relax this hypothesis by considering functions $f$ of the form:
\begin{equation}\label {eq.generalform}
f(z)=f_c(z)+f_p(z)   \mbox{ with } f_p(z)=-\sum_{n=1}^{N}\displaystyle \frac{ A_n}{z^2-\xi_n}  \mbox{ for } N\in \mathbb{N}, \ A_1,\cdots, A_N>0 \mbox{ and } \xi_1,\cdots, \xi_N \geq 0, 
\end{equation}
where $f_c$ satisfies the hypotheses H1-4. Then, it is straightforward to check that Theorem \ref{eq.funcherglotz} and Corollary \ref{cor.herg} still hold (except that  the definitions (\ref{eq.funcstielt}) and (\ref{eq.funcherglotz}) of $u$ and $v$ do not hold on the poles of $f$). In electromagnetism, the function $f$  can be seen as the dielectric permittivity $\Gve$ or the magnetic permeability $\mu$ as a function of the frequency $z=\Go$. In this context, functions $f=f_{\infty}+f_p$ correspond to the constitutive laws of non-dissipative generalized Lorentz models for which $\Gve$ or $\mu$ are rational functions of the frequency with real coefficients (see \cite{Veselago:1967:ESS,Tip:2004:LDD,Cassier:2016:STM}).
\end{Rem}

\begin{Rem}
In the literature \cite{Cessenat:1996:MME, Bernland:2011:SRC,Welters:2014:SLL}, one finds also another Herglotz function constructed from functions $f$ satisfying the hypothesis H1-4, namely 
\begin{equation}
\label{eq.otherhergfunc}\tilde{v}(z)=z \, f(z), \ \forall z \in \operatorname{cl}\bbC^{+}. 
\end{equation}
Indeed, to prove that the imaginary part of $\tilde{v}$ is non-negative for $z\in\bbC^{+}$, one follows the same arguments as in the proof of Lemma \ref{lem.partimag} by applying the maximum principle to the function $\operatorname{Im}\big(z \, (f-f_{\infty})\big)$ on $\operatorname{cl}\bbC^{+}$. Nevertheless, instead of H2, this requires a more stringent decreasing assumption at infinity: $f(z) =f_{\infty} +o(1/z), \mbox{ with } f_{\infty}>0$ when $\left|z\right|\to \infty$ in $\operatorname{cl} \bbC^{+}$. 

Unlike $v$, $\tilde{v}$ does not derive from a Stieltjes function, thus it does not satisfy the additional properties that it has an analytic extension in $\bbC\setminus{\bbR^{+}}$ which is negative on $\bbR^{-*}$. Therefore, the measure $\mm$ associated with $\tilde{v}$ in Theorem \ref{thm.Herg} is not necessarily supported in $\bbR^{+}$. Nevertheless, $\tilde{v}$ has the advantage to satisfy the additional relation: 
\begin{equation}\label{eq.sym}
h(z)=-\overline{h(-\overline{z})}, \ \forall z\in \operatorname{cl}\bbC^{+}
\end{equation}
(which can be deduced from H3). One will see in the following that using $\tilde{v}$ instead of $v$ will lead to slightly different bounds on the function $f$.
\end{Rem}

\subsection{General bounds on the function $f$}

Our aim is now to derive bounds on a function $f$ which satisfies the hypotheses H1-4 on a finite interval $[x_-,x_+] \subset \bbR^{+*}$. The key step is to use the analytic properties of its associated Herglotz function $v$ defined by Corollary \ref{cor.herg} which relies on the existence of the Stieltjes function $u$ of Theorem \ref{th.Stielt}. To this end, we follow the approach of \cite{Gustafsson:2010:SRP,Bernland:2011:SRC} by using the sum rules integral identities established in \cite{Bernland:2011:SRC}, recalled here in Proposition \ref{pro.sumrules}. Our resulting bounds generalize the ones developed in \cite{Gustafsson:2010:SRP,Bernland:2011:SRC}. Moreover, they are optimal in the sense that they maximize the sum rules \eqref{eq.sumrules} over the finite interval $[x_-,x_+] \subset \bbR^{+,*}$ in the sense of Theorem \ref{thm.mesopt}. 

Let $\GD>0$, we denote by $h_{\mm}$ the Herglotz function defined by:
\begin{equation}\label{eq.Herg}
h_{\mm}(z)=\int_{-\GD}^{\GD} \frac{\md  \mm( \xi) }{\xi-z} , \ \forall z\in \bbC^{+},
\end{equation}
 where $\mm\in \CM_{\GD}$. Here $\CM_{\GD}$ stands for the set finite regular positive Borel measure $\mm$ whose support is included in the interval $[-\GD,\GD]$ and whose total mass is normalized to $1$, in other words: $ \mm(\bbR)=\mm([-\GD,\GD])=1$, for all $\mm\in \CM_{\GD}$.
 
Our goal is to derive bounds on $f$ by using the sum rule \eqref{eq.sumrules} on the function $v_{\mm}$:
$$
v_{\mm}(z)=h_{\mm}(v(z)) \ \mbox{ on } \bbC^{+}
$$
where $v$ is the Herglotz function defined via $f$ in Corollary \ref{cor.herg}.
As $v$ is not constant, one first notices that $v_{\mm}$ is a Herglotz function as it is a composition of two Herglotz functions (see \cite{Berg:2008:SPBS}). To apply the sum rules, we need the asymptotic behavior of $v_{\mm}$ near zero and infinity. It is the purpose of the following lemma.

\begin{Lem}
For any $\theta \in (0,\frac{\pi}{2})$, the Herglotz function $h_{\mm}$ satisfies the following asymptotics in the Stolz domain $D_{\theta}$:
\begin{equation}\label{eq.asymphdelta}
h_{\mm}(z)=-\frac{\mm(\{0\})}{z}+o\left(\frac{1}{z}\right) \ \mbox{ as } |z| \to 0 \ \mbox{ and } \ h_{\mm}(z)=-\frac{1}{z}+o\left(\frac{1}{z}\right) \, \mbox{ as } \, |z| \to +\infty,
\end{equation}
which imply that in $D_{\theta} $:
\begin{equation}\label{eq.asympvdelta}
v_{\mm}(z)=  -\frac{\mm(\{0\})}{f(0)z}+o\left(\frac{1}{z}\right) \ \mbox{ as } |z| \to 0   \ \mbox{ and } \ v_{\mm}(z)=-\frac{1}{f_{\infty} \, z}+o\left(\frac{1}{z}\right) \, \mbox{ as } \, |z| \to +\infty.
\end{equation}
\end{Lem}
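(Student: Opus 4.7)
The plan is to prove the lemma in two stages: first establish the two asymptotics of $h_{\mm}$ directly from its integral representation \eqref{eq.Herg}, and then obtain the asymptotics of $v_{\mm} = h_{\mm}\circ v$ by composition, after checking that the image $v(D_\theta)$ eventually lies in some smaller Stolz cone.

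\textbf{Stage 1: asymptotics of $h_{\mm}$.} Using $\mm([-\GD,\GD])=1$ and the elementary identity $z/(\xi-z)=-1+\xi/(\xi-z)$, I would write
$$
zh_{\mm}(z)+1 \;=\; \int_{-\GD}^{\GD}\frac{\xi}{\xi-z}\,\md \mm(\xi).
$$
For the infinity asymptotic, for $|z|\ge 2\GD$ the integrand is bounded in modulus by $2\GD/|z|$ and tends pointwise to $0$, so by dominated convergence $zh_{\mm}(z)\to -1$; this yields the second asymptotic in \eqref{eq.asymphdelta} (and in fact does not require the Stolz restriction). For the origin asymptotic, I rewrite
$$
zh_{\mm}(z) \;=\; \int_{-\GD}^{\GD}\frac{z}{\xi-z}\,\md \mm(\xi),
$$
and use that in $D_\theta$ one has $|\xi-z|\ge \Imag z\ge \sin\theta\,|z|$ for real $\xi$, so the integrand is uniformly bounded by $1/\sin\theta$. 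The pointwise limit as $z\to 0$ is $-1$ at $\xi=0$ and $0$ elsewhere, hence dominated convergence gives $zh_{\mm}(z)\to -\mm(\{0\})$, proving the first asymptotic.

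\textbf{Stage 2: asymptotics of $v_{\mm}$.} I would next show that $w=v(z)=zf(\sqrt{z})$ sends, for $|z|$ small (resp.\ large) enough in $D_\theta$, the cone $D_\theta$ into some smaller cone $D_{\theta/2}$ with $|w|\to 0$ (resp.\ $|w|\to\infty$). Continuity of $f$ at $0$ (H1) gives $f(\sqrt{z})\to f(0)$ as $|z|\to 0$ in $\operatorname{cl}\bbC^{+}$, while H2 gives $f(\sqrt{z})\to f_\infty$ as $|z|\to\infty$; both limits are strictly positive reals (for $f_\infty$ this is H2; for $f(0)$ this follows from $f(\ii y)\ge f_\infty>0$ established in the proof of Theorem \ref{th.Stielt} combined with continuity). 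Therefore $\arg v(z)=\arg z+\arg f(\sqrt{z})$ and $\arg f(\sqrt{z})\to 0$, so $\arg v(z)\in[\theta/2,\pi-\theta/2]$ for $|z|$ in the appropriate regime, i.e.\ $v(z)\in D_{\theta/2}$. Applying the Stage~1 asymptotics to $w=v(z)$ and dividing by $f(\sqrt{z})$,
$$
z\,v_{\mm}(z) \;=\; \frac{v(z)\,h_{\mm}(v(z))}{f(\sqrt{z})} \;\longrightarrow\; -\frac{\mm(\{0\})}{f(0)} \quad\text{as } |z|\to 0, \qquad z\,v_{\mm}(z)\;\longrightarrow\;-\frac{1}{f_\infty}\quad\text{as } |z|\to\infty,
$$
both limits taken within $D_\theta$, which are exactly the statements \eqref{eq.asympvdelta}.

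\textbf{Main subtlety.} The nontrivial step is the control of $\arg v(z)$: the composition lemma for Herglotz asymptotics in Stolz cones requires the inner function to stay in a Stolz cone around the target (here $0$ or $\infty$). This is guaranteed precisely because both $f(0)$ and $f_\infty$ are positive reals, so the rotation $\arg f(\sqrt{z})$ induced by $f$ vanishes in the limit. Once this geometric fact is secured, the remainder of the argument is just a routine dominated-convergence computation inherited from Stage~1.
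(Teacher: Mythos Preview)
Your proof is correct and follows essentially the same approach as the paper: both use dominated convergence for the two asymptotics of $h_{\mm}$ (the paper uses the bound $|\xi/(\xi-z)|\le 1/\sin\theta$ from $|\xi-z|\ge|\xi|\sin\theta$, whereas you use the equivalent $|\xi-z|\ge|z|\sin\theta$, and a cruder $2\GD/|z|$ bound at infinity), and then obtain \eqref{eq.asympvdelta} by composition with $v(z)=f(\sqrt{z})\,z\sim f(0)z$ and $\sim f_\infty z$. Your Stage~2 is in fact more carefully justified than the paper's, which simply asserts that the composition ``follows immediately'': you correctly isolate the needed geometric fact that $v(D_\theta)$ eventually sits in a smaller Stolz cone, and supply the argument via $\arg f(\sqrt{z})\to 0$.
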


\begin{proof}
The asymptotic behavior at $z=0$ of $h_{\mm}$ follows from the relation (\ref{eq.asymherglotzfunction}) which is proved in \cite{Bernland:2010:SRC} by using Lebesgue's dominated convergence theorem.  To show the asymptotics (\ref{eq.asymphdelta}) of $h$ at $z=\infty$, one gets first, using the relation (\ref{eq.Herg}), that:
$$
z h_{\mm}(z)=   \int_{-\GD}^{\GD} \frac{ z\md  \mm( \xi) }{\xi-z}= -\mm([-\GD,\GD])+  \int_{-\GD}^{\GD} \frac{ \xi\md  \mm( \xi) }{\xi-z},
$$
where $\mm([-\GD,\GD])=1$, by hypothesis. One finally concludes by proving that the integral of the right hand side in the latter expression tends to $0$ as $|z| \to +\infty$. This is a consequence of Lebesgue's dominated convergence theorem where a domination condition on the integrand is given by
$$
\displaystyle \left |\frac{ \xi}{\xi-z}\right |\leq \frac{1}{\sin(\theta)}, \ \mbox{ since } \, |\xi-z|\geq |\xi| \sin(\theta), \ \forall \xi \in [-\GD,\GD] \mbox{ and } \forall  z\in D_{\theta}.
$$
The asymptotics (\ref{eq.asympvdelta}) follows immediately by composition from the asymptotics (\ref{eq.asymphdelta}) and the hypotheses H1 and H2 which imply respectively that in $D_{\theta}$: $v(z)=f(0)z+o(z), \mbox{ as } \ |z| \to 0$  and $v(z)=f_{\infty}\, z+o(z), \mbox{ as } \ |z| \to +\infty$ (one notices that the first asymptotic  formula in (\ref{eq.asympvdelta}) is well-defined. Indeed $f(0)$ is positive since we already showed (see proof of Theorem \ref{th.Stielt}) that the function $f(z)$ is real and decreasing along the imaginary axis, thus $f(0) \geq f_{\infty} >0$).
\end{proof}

One can now use the sum rules  \eqref{eq.sumrules} on the function $v_{\mm}$ over the finite frequency band $[x_-,x_+]$ to get the following inequality:
\begin{equation}\label{eq.sumrulesmu}
\lim_{y\to 0^{+}} \frac{1}{\pi} \int_{x_-}^{x_+} \Imag v_{\mm}(x+\ii y) \, \md x \leq \displaystyle \frac{1}{f_{\infty} } -\frac{\mm(\{0\})}{f(0)}  \leq \frac{1}{f_{\infty}},
\end{equation}
where the right inequality in the latter expression is justified by the fact that $\mm$ is a positive measure and that $f(0)$ is also positive. 

The following theorem expresses that if one wants to maximize the sum rules (\ref{eq.sumrulesmu}) on the set of measures $\CM_{\GD}$, it is sufficient to use Dirac measures: $\mm=\delta_{\xi}$ for points $\xi \in  [-\GD,+\GD]$.

\begin{Thm}\label{thm.mesopt}
Let $\GD$ be a positive real number and $[x_-,x_+]$ be a finite frequency band included in $\bbR^{+*}$, then one has
\begin{equation}\label{eq.maxsumrule}
\sup_{\mm \in \CM_{\GD}}  \frac{1}{\pi} \lim_{y\to 0^{+}}  \int_{x_-}^{x_+} \Imag v_{\mm}(x+\ii y) \, \md x=\sup_{\xi \in [-\GD,+\GD]} \frac{1}{\pi} \lim_{y \to 0^{+}}  \int_{x_-}^{x_+} \Imag v_{\delta_{\xi}}(x+\ii y) \, \md x.
\end{equation}
\end{Thm}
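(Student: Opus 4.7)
The strategy is to exploit the affine dependence of $v_\mm$ on the measure $\mm$ and reduce the supremum to a pointwise optimization of a function $F$ on $[-\GD, \GD]$. Substituting the definition of $h_\mm$ and using that $v(z) \in \bbC^{+}$ for $z \in \bbC^{+}$ (so $1/(\xi - v(z))$ is bounded in $\xi$), I would first write
\[
v_\mm(z) \;=\; h_\mm(v(z)) \;=\; \int_{-\GD}^{\GD} \frac{\md \mm(\xi)}{\xi - v(z)} \;=\; \int_{-\GD}^{\GD} v_{\delta_\xi}(z)\, \md \mm(\xi), \qquad z \in \bbC^{+}.
\]
Taking imaginary parts at $z = x + \ii y$ with $y > 0$, and observing that $\Imag v_{\delta_\xi}(x+\ii y) \geq 0$ (since $v_{\delta_\xi}$ is a composition of Herglotz functions), Tonelli's theorem yields, for every $y > 0$,
\[
\frac{1}{\pi}\int_{x_-}^{x_+} \Imag v_\mm(x + \ii y)\, \md x \;=\; \int_{-\GD}^{\GD} G(\xi, y)\, \md \mm(\xi), \quad G(\xi, y) := \frac{1}{\pi}\int_{x_-}^{x_+} \Imag v_{\delta_\xi}(x + \ii y)\, \md x.
\]

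Next, I would pass to the limit $y \to 0^{+}$ by means of the Herglotz (Poisson) representation. Because $v(z) = f_{\infty} z + o(z)$ in any Stolz domain by H2 and $h_\mm(w), h_{\delta_\xi}(w) \to 0$ as $|w|\to\infty$, both $v_\mm$ and $v_{\delta_\xi}$ have vanishing $\alpha$ coefficient in the representation of Theorem \ref{thm.Herg}; denoting their associated Borel measures by $\nu_\mm$ and $\nu_\xi$, the Poisson formula gives $\frac{1}{\pi}\Imag v_\mm(x + \ii y) = (P_y \ast \nu_\mm)(x)$, and likewise for each $v_{\delta_\xi}$, where $P_y$ is the Poisson kernel of $\bbC^{+}$. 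Inserting these into the identity above and invoking uniqueness of the Poisson representation (equivalently, Stieltjes inversion from Corollary \ref{cor.Herg}) delivers the measure identity
\[
\nu_\mm(B) \;=\; \int_{-\GD}^{\GD} \nu_\xi(B)\, \md \mm(\xi), \qquad B \subset \bbR \text{ Borel}.
\]
Applying Corollary \ref{cor.Herg} with $[a, b] = [x_-, x_+]$ and then using Fubini gives
\[
L(\mm) \;:=\; \lim_{y\to 0^{+}}\frac{1}{\pi}\int_{x_-}^{x_+}\Imag v_\mm(x + \ii y)\, \md x \;=\; \int_{-\GD}^{\GD} F(\xi)\, \md \mm(\xi), \quad F(\xi) := L(\delta_\xi).
\]

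To conclude, since every $\mm \in \CM_{\GD}$ is a probability measure supported in $[-\GD, \GD]$, the elementary bound $\int F\, \md \mm \leq \sup_{\xi \in [-\GD, \GD]} F(\xi)$ gives $\sup_{\mm\in \CM_{\GD}} L(\mm) \leq \sup_{\xi \in [-\GD, \GD]} F(\xi)$. The reverse inequality is immediate because $\delta_\xi \in \CM_{\GD}$ for every $\xi \in [-\GD, \GD]$, so taking $\mm = \delta_\xi$ achieves any value $F(\xi)$. This establishes \eqref{eq.maxsumrule}.

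The main obstacle is the measure identity $\nu_\mm = \int \nu_\xi\, \md \mm$: one must justify that the linear dependence on $\mm$ observed at each fixed $y > 0$ transfers cleanly to the limiting Borel measures via Stieltjes inversion, and that no boundary pathologies (e.g., atoms of $\nu_\xi$ at the endpoints $x_\pm$) disrupt the application of Corollary \ref{cor.Herg} on $[x_-, x_+]$. An alternative route would be to bypass this identity by establishing a uniform-in-$y$ majorant for $G(\xi, y)$ on $[-\GD, \GD] \times (0, y_0]$ and applying dominated convergence directly to $\int G(\xi, y)\, \md \mm(\xi)$.
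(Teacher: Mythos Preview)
Your proposal is correct and follows essentially the same route as the paper: both arguments rest on the disintegration identity $\nu_\mm(B)=\int_{-\GD}^{\GD}\nu_{\delta_\xi}(B)\,\md\mm(\xi)$ for the Herglotz measures, then invoke Corollary~\ref{cor.Herg} on $[x_-,x_+]$ and bound the resulting integral against the probability measure $\mm$ by its essential supremum. The only cosmetic difference is that the paper cites an external reference for the measure identity, whereas you sketch its derivation via Stieltjes inversion; your worry about endpoint atoms is already absorbed by the averaged formula $\tfrac{1}{2}\big(\nu([x_-,x_+])+\nu((x_-,x_+))\big)$ in Corollary~\ref{cor.Herg}, which the paper uses explicitly.
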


\begin{proof}
Let $\mm \in \CM_{\GD}$, one denotes by $\nu_{\mm}$ the measure associated with the Herglotz function: $v_{\mm}$ by the representation Theorem \ref{thm.Herg}. Thus, by virtue of the relation (\ref{eq.mesHerg}) which defines the measure of a Herglotz function, one has:
\begin{equation}\label{eq.hergmesproof}
 \frac{1}{\pi} \lim_{y\to 0^{+}}  \int_{x_-}^{x_+} \Imag v_{\mm}(x+\ii y) \, \md x = \frac{\nu_{\mm}\big((x_-,x_+)\big)+\nu_{\mm}([x_-,x_+])}{2}.
\end{equation}
One wants now to connect the measure $\nu_{\mm}$ of $v_{\mm}=h_{\mm}\circ v$ to the measure $\mm$ of the Herglotz function $h_{\mm}$. In \cite{Christodoulides:2004:GVD}, the authors provide an expression of the measure $\nu_{\mm}$ in terms of the measures $\mm$ and the measure $\nu_{\delta_{\xi}}$ associated with the Herglotz function $v_{\delta_\xi}=h_{\delta_{\xi}}\circ v$. They prove that for any Borelian sets $B$, $\nu_{\mm}(B)$ is given by
$$
\nu_{\mm}(B)=\int_{-\GD}^{\GD} \nu_{\delta_\xi}(B)\md \mm(\xi).
$$ 
Thus, applying this last relation to $B=(x_-,x_+)$ and $B=[x_-,x_+]$ in the equation (\ref{eq.hergmesproof}) yields:
\begin{eqnarray*}
 \frac{1}{\pi} \lim_{y\to 0^{+}}  \int_{x_-}^{x_+} \Imag v_{\mm}(x+\ii y) \, \md x &=&\int_{-\GD}^{\GD} \frac{1}{2}\left[ \nu_{\delta_\xi}((x_-,x_+)) +\nu_{\delta_\xi}([x_-,x_+])\right] \md \mm(\xi),\\
 &\leq&  \sup_{\xi \in [-\GD,+\GD]}  \left( \frac{1}{2}\left[ \nu_{\delta_\xi}((x_-,x_+)) +\nu_{\delta_\xi}([x_-,x_+])\right]  \right) \mm([-\GD,+\GD]).
\end{eqnarray*}
Using now the fact that $\mm \big([-\GD,\GD]\big)=1$ and the relation (\ref{eq.mesHerg}) which characterizes the measure $\nu_{\delta_\xi}$ of the Herglotz function $v_{\delta_\xi}$ lead us to:
$$
 \frac{1}{\pi} \lim_{y\to 0^{+}}  \int_{x_-}^{x_+} \Imag v_{\mm}(x+\ii y) \, \md x \leq  \sup_{\xi \in [-\GD,+\GD]}   \frac{1}{\pi} \lim_{y \to 0^{+}}  \int_{x_-}^{x_+} \Imag v_{\delta_{\xi}}(x+\ii y) \, \md x.
$$
By taking the supremum on $\CM_{\GD}$, one shows one side of the equality \eqref{eq.maxsumrule}.
As the reverse inequality of \eqref{eq.maxsumrule} is straightforward, this concludes the proof.
\end{proof}
The last theorem shows that  for any $\Delta \in \bbR$, the family of Dirac measures $(\delta_{\xi})_{\xi \in \bbR}$ maximizes the sum rule (\ref{eq.sumrulesmu}) on the set of probability measures $\CM_{\Delta}$. For such measures, the inequality (\ref{eq.sumrulesmu}) can be rewritten as:
\begin{equation}\label{eq.sumruledelta}
 \lim_{y \to 0^{+}}  \int_{x_-}^{x_+} \Imag v_{\delta_{\xi}}(x+\ii y) \, \md x=\lim_{y\to 0^{+}}  \int_{x_-}^{x_+} \Imag \left( \displaystyle \frac{1}{\xi-v(x +\ii y)}\right) \, \md x \leq  \frac{\pi}{f_{\infty}}, \  \forall \xi \in \bbR \,.
\end{equation}

\subsection{The case of a transparency window}
By using the family of punctual measures $(\delta_{\xi})_{\xi \in \bbR}$, we will now derive an explicit bound on the function $f$ on a interval $[\Go_-,\Go_+]\subset \bbR^{+*}$ under the assumption that this interval is a transparency window.
In other words, one supposes that $f$ is real on $[\Go_-,\Go_+]$. In physics, (like in electromagnetism when for instance $f=\Gve$ or $f=\mu$), this hypothesis amounts neglecting the absorption of the material in the frequency band $[\Go_-,\Go_+]$. 

In this case, one gets immediately that the Herglotz function $v$ is real on $[x_-,x_+]=[\Go_-^2, \Go_+^2]$. 
Thus, one can extend $v$ analytically through the interval $(x_-,x_+)$ by using Schwarz's reflection principle by posing $$v_e(z)=v(z) \ \mbox{ on } \operatorname{cl} \bbC^{+} \  v_e(z)=\overline{v(\overline{z})} \mbox{ on } \bbC^{-}$$ 
and it is straightforward to check (thanks to H3) that $v_e$ coincides with the definition (\ref{eq.funcherglotz}) of $v$ on the domain $D=\bbC \setminus ([0,x_-]\cup [x_+,+\infty])$. Hence, the function $v$ is analytic on $D$. With our approach, one recovers in the next proposition a bound similar to the ones derived in \cite{Milton:1997:FFR,Yaghjian:2006:PWS}. This bound correlates the value of two points  of the function $f$ within the considered interval. A generalization of such bounds to an arbitrary number of points of correlation is done in \cite{Milton:1997:FFR}.
\begin{Pro}\label{Pro.boundtransp}
In the transparency window $[x_-,x_+] =[\Go_-^2, \Go_+^2]$, the function $v$ satisfies
\begin{equation}\label{eq.ineqv}
f_{\infty} (x-x_0) \leq v(x)-v(x_0), \  \forall x, x_0 \in [x_-,x_+] \mbox{ such that } \ x_0\leq x,
\end{equation}
which yields the following bound on $f$:
\begin{equation}\label{eq.boundtranspf}
\Go_0^2 (f(\Go_0)-f_{\infty}) \leq \Go^2  (f(\Go)-f_{\infty}), \  \forall \Go, \Go_0 \in [\Go_-,\Go_+] \mbox{ such that } \ \Go_0\leq \Go.
\end{equation}
\end{Pro}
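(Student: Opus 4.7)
The plan is to derive the pointwise lower bound $v'(x) \geq f_{\infty}$ on $(x_-, x_+)$ by applying the family of sum-rule inequalities \eqref{eq.sumruledelta} with a well-chosen $\xi$, then integrate. Once $v'(x) \geq f_{\infty}$ is known, integrating from $x_0$ to $x$ produces \eqref{eq.ineqv}, and substituting $x = \omega^2$, $x_0 = \omega_0^2$ into the identity $v(\omega^2) = \omega^2 f(\omega)$ from \eqref{eq.funcherglotz} delivers \eqref{eq.boundtranspf}. Before carrying out this strategy, I would record two elementary facts about $v$ on the transparency window: Schwarz reflection (already invoked in the paragraph preceding the proposition) makes $v$ real-analytic on $(x_-, x_+)$ with $v'(x) \geq 0$ there, via the Cauchy--Riemann identity $\partial_x \Real v = \partial_y \Imag v$ combined with the Herglotz property; and the asymptotic $v(z) \sim f_{\infty} z$ in Stolz domains forbids $v$ from being constant on any open set, so $v'$ has only isolated zeros on $(x_-, x_+)$.

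The core step is as follows. Pick $x_0 \in (x_-, x_+)$ with $v'(x_0) > 0$ and apply \eqref{eq.sumruledelta} with $\xi := v(x_0)$. The Herglotz function $v_{\delta_{\xi}}(z) = 1/(\xi - v(z))$ inherits the analytic continuation of $v$ across $(x_-, x_+)$, and from the expansion $\xi - v(z) = -v'(x_0)(z - x_0) + O((z-x_0)^2)$ it has at $z = x_0$ a simple pole with residue $-1/v'(x_0)$. The Stieltjes inversion formula \eqref{eq.mesHerg} identifies the left-hand side of \eqref{eq.sumruledelta} with $\pi$ times the Nevanlinna mass that the measure of $v_{\delta_{\xi}}$ assigns to $[x_-, x_+]$; on the transparency window that measure is purely atomic, with point masses of size $1/v'(x_\ast)$ at each $x_\ast$ solving $v(x_\ast) = \xi$. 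Retaining only the contribution from $x_0$ and invoking \eqref{eq.sumruledelta} then gives
\[
\frac{\pi}{v'(x_0)} \leq \lim_{y \to 0^{+}} \int_{x_-}^{x_+} \Imag \left(\frac{1}{\xi - v(x+\ii y)}\right) \md x \leq \frac{\pi}{f_{\infty}},
\]
hence $v'(x_0) \geq f_{\infty}$.

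To close the argument, I would note that $v'$ is continuous on $(x_-, x_+)$ and $\{v' > 0\}$ is dense there (its complement being discrete), so $v'(x) \geq f_{\infty}$ extends to the entire open interval and, by continuity, to $[x_-, x_+]$; integrating on $[x_0, x]$ yields \eqref{eq.ineqv}, and the substitution $x = \omega^2$, $x_0 = \omega_0^2$ combined with $v(\omega^2) - v(\omega_0^2) = \omega^2 f(\omega) - \omega_0^2 f(\omega_0)$ yields \eqref{eq.boundtranspf} after rearrangement. The main obstacle is the identification of the limit in the sum rule with the atomic mass at the pole of $1/(\xi - v)$; equivalently one may compute that limit by direct asymptotic analysis using the distributional identity $\lim_{y \to 0^{+}} \Imag(1/(a - \ii b y)) = \pi \delta(a)/b$ for $b > 0$, applied with $a = \xi - v(x)$ and $b = v'(x)$, together with the change of variable $u = v(x)$; this alternative route sidesteps measure-theoretic language but still demands careful control of the higher-order remainder in $y$ near $x_0$.
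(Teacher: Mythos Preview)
Your proposal is correct and follows essentially the same approach as the paper: choose $\xi=v(x_0)$ in the Dirac sum rule \eqref{eq.sumruledelta}, identify the left-hand side with $\pi/v'(x_0)$ from the simple pole of $(\xi-v)^{-1}$ at $x_0$, deduce $v'(x_0)\geq f_\infty$, then integrate and change variables. The only cosmetic differences are that the paper localises to a subinterval containing the single pole and evaluates the limit via the Sokhotski--Plemelj formula (rather than invoking the Stieltjes inversion \eqref{eq.mesHerg} on the full interval and discarding extra atoms), and it obtains $v'(x_0)\neq 0$ for every $x_0$ directly from the fact that real poles of Herglotz functions are simple, rather than via your density argument on $\{v'>0\}$.
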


\begin{proof}
Let $x_0 \in (x_-,x_+)$. One defines $\xi$ by $\xi=v(x_0)\in \bbR$. Hence, the Herglotz function $v_{\delta_{\xi}}=(\xi-v)^{-1}$ has a pole at $z=x_0$. As any real pole of a Herglotz functions is of multiplicity one (see \cite{Gesztesy:2000:MVH}), this implies in particular that the derivative $v^{\prime}(x_0)\neq 0$. Moreover, as $\xi-v$ is an analytic function which is not constant on $D$, therefore the pole $x_0$ is isolated. Thus, there exists a closed interval: $[\tilde{x}_-,\tilde{x}_+] \subset (x_-,x_+)$ containing $x_0$ such that $x_0$ is the only singular point of the function $v_{\delta_\xi}$  on  $[\tilde{x}_-,\tilde{x}_+]$. Hence, one can rewrite $v_{\delta_\xi}$ as:
$$
v_{\delta_\xi}(z)=\frac{g(z)}{(z-x_0)}\  \mbox{ where $g$ is analytic and real on $ [\tilde{x}_-,\tilde{x}_+]$ and } g(x_0)=\frac{-1}{v^{\prime}(x_0)}.
$$
Using this last property on $v_{\delta_\xi}$, one can evaluate the limit in the left hand side of \eqref{eq.sumruledelta}:
\begin{eqnarray}\label{eq.limitplemej}
 && \lim_{y \to 0^{+}}  \int_{\tilde{x}_-}^{\tilde{x}^{+}} \Imag v_{\delta_{\xi}}(x+\ii y) \, \md x   \nonumber\\
= & & \lim_{y\to 0^{+}}  \int_{\tilde{x}_-}^{\tilde{x}^{+}} \Imag \left( \frac{g(x+\ii y)-g(x_0)}{x+\ii y-x_0}\right) \md x+\lim_{y  \to 0^{+}}   \int_{\tilde{x}_-}^{\tilde{x}^{+}} \Imag \left( \frac{-g(x_0)}{x-(x_0+ \ii y)}\right) \md x.
\end{eqnarray}
 Indeed, as a consequence of Lebesgue's dominated convergence theorem, the first limit of (\ref{eq.limitplemej}) is $0$ and by  applying the Sokhotski-Plemelj formula (see \cite{Henrici:1993:ACC}) to evaluate the second limit of (\ref{eq.limitplemej}), one gets:
$$
 \lim_{y \to 0^{+}}  \int_{\tilde{x}_-}^{\tilde{x}^{+}} \Imag v_{\delta_{\xi}}(x+\ii y) \, \md x=-\pi g(x_0)=\frac{\pi}{v^{\prime}(x_0)},
$$
By using \eqref{eq.sumruledelta}, this leads to:
$$
f_{\infty}\leq v^{\prime}(x_0), \ \forall x_0 \in (x_-,x_+).
$$
Integrating this latter relation leads to inequality (\ref{eq.ineqv}) on $(x_-,x_+)$, which extends to the closed interval $[x_-,x_+]$ by using the continuity of $v$ at $x_{\pm}$. One finally derives inequality (\ref{eq.boundtranspf}) from (\ref{eq.ineqv}) by using the definition (\ref{eq.funcherglotz}) of $v$ and the changes of variables: $x=\Go^2$ and $x_0=\Go_0^2$.
 \end{proof}

\subsubsection*{Link with the Kramers--Kronig relations}
For the case of a transparency window: $[\Go_-,\Go_+]$, we want now to emphasize that the bound obtained in the proposition \ref{Pro.boundtransp} can be also derived by applying the Kramers--Kronig relations to the function $f$:
 \begin{equation}\label{eq.Kramerkronig}
 \Real f (\Go) =f_{\infty}+\frac{2}{\pi}\CP\int_0^\infty\frac{\Go' \Imag f (\Go')}{(\Go')^2-\Go^2}\,d\Go'
\end{equation}
where $\CP$ denotes the Cauchy principal value of the integral.   
 In electromagnetism, these relations, satisfied by the permittivity  $\Gve$ and the permeability $\mu$ (see \cite{Nussenzveig:1972:CDR,Jackson:1999:CE}), characterize the dispersion of a passive material by correlating the real part and the imaginary part of $\Gve$ and $\mu$ by nonlocal
integral relations.

Mathematically, to derive the Kramers--Kronig relations pointwise at a frequency $\Go$, one supposes classically in addition to H1-4 that $\Go' \to (f-f_{\infty})/\Go'$ is an integrable function at the vicinity of $\pm \infty$ and that $f$ is H\"{o}lder continuous at $\Go$. These two last conditions (see \cite{Henrici:1993:ACC,Pecseli:2000:FPS}) ensure the existence of the Cauchy principal value in (\ref{eq.Kramerkronig}). We point out that in the literature, one can find other mathematical hypotheses such as $f$ belongs to to Hardy space  $H_2(\bbC^{+})$ (see Titchmarsh's theorem \cite{Nussenzveig:1972:CDR}) which ensure the existence of these relations for almost every real frequencies $\Go$.

Now using the fact $[\Go_-,\Go_+]$ is a transparency window, i. e. $ \Imag f (\Go)=0$ for all $\Go\in [\Go_-,\Go_+]$, one gets:
\begin{equation}\label{eq.kramerkronig}
f (\Go)=f_{\infty}+\frac{2}{\pi} \int_0^{\Go_-}\frac{\Go' \Imag f (\Go')}{(\Go')^2-\Go^2}\,d\Go'+\frac{2}{\pi}\int_{\Go^{+}}^\infty \frac{\Go' \Imag f (\Go')}{(\Go')^2-\Go^2}\,d\Go', \ \forall \Go\in (\Go_-,\Go_+),
\end{equation}
where the Cauchy principal value is not useful anymore in the latter expression since in both integrals the singular point does not belong to the domain of integration. Moreover, in a transparency window, the function $f$ can be analytically, by a Schwarz reflection principle, extended through the interval, thus the  H\"{o}lder regulartity is satisfied on $(\Go_-,\Go_+)$.

Applying the Kramers-Kronig relation (\ref{eq.kramerkronig}) to two frequencies $\Go$, $\Go_0 \in(\Go_-,\Go_+)$ satisfying $\Go_0\leq \Go$  yields

\begin{eqnarray*} \Go^2[f(\Go)-f_\infty]-\Go_0^2[f(\Go_0)-f_\infty] & = &
\frac{2}{\pi}\int_0^{\Go_-} \Go' \Imag f(\Go')\left[\frac{\Go^2}{(\Go')^2-\Go^2}-\frac{\Go_0^2}{(\Go')^2-\Go_0^2}\right]\,d\Go' \nonum
&\,&+\frac{2}{\pi}\int_{\Go_+}^\infty \Go' \Imag f(\Go')\left[\frac{\Go^2}{(\Go')^2-\Go^2}-\frac{\Go_0^2}{(\Go')^2-\Go_0^2}\right]\,d\Go' \nonum
& = &\frac{2}{\pi}\int_0^{\Go_-} \Go' \Imag f(\Go')\left[\frac{(\Go')^2(\Go^2-\Go_0^2)}{[(\Go')^2-\Go^2][(\Go')^2-\Go_0^2]}\right]\,d\Go' \nonum
&\,& +\frac{2}{\pi}\int_{\Go_+}^\infty \Go' \Imag f(\Go')\left[\frac{(\Go')^2(\Go^2-\Go_0^2)}{[(\Go')^2-\Go^2][(\Go')^2-\Go_0^2]}\right]\,d\Go' \nonum
&\geq & 0,
\end{eqnarray*}
where to obtain the last inequality we have used the fact that the ratio
\begin{equation*} \frac{(\Go^2-\Go_0^2)}{[(\Go')^2-\Go^2][(\Go')^2-\Go_0^2]} \geq 0
\end{equation*}
when either $\Go\geq\Go_0> \Go_-\geq \Go'>0$ or when $\Go'\geq \Go_+>\Go\geq\Go_0>0$ and the fact that H4 imposes that $\Go' \Imag f(\omega')$ is positive on $\bbR^{+}$. Thus, one obtains  again the bound (\ref{eq.boundtranspf}) on the open interval $(\omega_-,\omega_+)$. Finally, this bound can be extended  to the closure of this interval  by using the continuity of $f$ at $\omega_{\pm}$.

\subsection{The lossy case}\label{lossy-mat}
The bound (\ref{eq.boundtranspf}) is only valid if $\Imag f$ is exactly zero on $[\Go_-,\Go_+]$. When the loss of the material cannot be neglected in this frequency band, other bounds can be derived from the inequality (\ref{eq.sumrulesmu}). 
By choosing for instance the uniform measure of $\CM_{\Delta}$: $$\md \mm(\xi)= \frac{\bf{1}_{[-\GD,\GD]}(\xi)}{2 \GD}  \md \xi$$ 
for the Herglotz function $h_{\mm}$, one recovers the bounds derived in \cite{Gustafsson:2010:SRP}. More precisely, we get:
\begin{equation}\label{eq.Hergunif}
h_{\mm}(z)=\frac{1}{2\GD} \int_{-\GD}^{\GD}\frac{1}{\xi-z}\md \xi=\frac{1}{2\GD}\log \left( \frac{z-\GD}{z+\GD}\right), \ \forall z \in \bbC^{+},
\end{equation}
where the function $\log$ is defined with the same branch cut: $\bbR^{+}$ as the square root function (\ref{eq.rac}). 
As:
$$
 \frac{z-\GD}{z+\GD}=\frac{|z|^2-\GD^2+2\ii \GD \Imag(z)}{|z+\GD^2|}, \ \forall z \in \bbC^{+},
$$
one checks easily that $\Imag h_{\mm}(z)$ is bounded above by $\pi/(2\GD)$ and from below by: 
\begin{equation}\label{eq.boundimag}
 \Imag  h_{\mm}(z) \geq \frac{\pi}{4 \GD} H(\GD-|z|), \ \forall z \in \bbC^{+},
\end{equation}
where $H$ stands here for the Heaviside function. Moreover, in the limit $y  \to 0^+$, $\Imag[h(x+i y)]$ takes the value $\pi/(2\Delta)$ for
$|x|<\GD$ and $0$ for $|x|>\GD$.

Now, applying the relations (\ref{eq.Hergunif}) and (\ref{eq.sumrulesmu}), one gets:
\begin{equation}\label{eq.sumrulelog}
\lim_{y \to 0^{+}} \int_{x_-}^{x_+} \Imag v_{\mm}(x+\ii y)\, \md x =\frac{1}{2\GD}\lim_{y \to 0^{+}} \int_{x_-}^{x_+} \operatorname{arg}\left( \frac{v(x+\ii y)-\GD}{v(x+i\Ge)+\GD}\right) \md x\leq \frac{\pi}{f_{\infty}}.
\end{equation}
Hence, using the bound (\ref{eq.boundimag}), one gets we obtain a less stringent
but more transparent inequality:
$$
\lim_{y\to 0^{+}}  \int_{x_-}^{x_+}  H(\GD-|v(x+\ii y )|) \, \md x  \leq\frac{4\GD}{f_{\infty}}
$$
and using Lebesgue's Dominated convergence theorem to evaluate this limit (thanks to  the continuity assumption of $f$ on $[x_-,x_+]$) we get
\begin{equation}\label{eq.bound1losscase}
 \int_{x_-}^{x_+}H(\GD-|v(x)|)\, \md x
\leq\frac{4\GD}{f_{\infty}}.
 \end{equation}
 In a plot of $|v(x)|$ against $x$ the quantity on the left of (\ref{eq.bound1losscase}) represents the total length of the interval or intervals  of $x$, between $x_-$ and $x_+$, where $|v(x)|$ is less than $\GD$. Clearly the bound implies that this total length must shrink to zero as $\GD\to 0$. If we take
\begin{equation*} 
\GD=\max_{x\in[x_-,x_+]}|v(x)|
\end{equation*}
then the left hand side of (\ref{eq.bound1losscase}) equals $x_+ - x_-$ and 
\begin{equation*}
\frac{1}{4}(x_+-x_-) f_{\infty} \leq \max_{x\in[x-,x_+]}|v(x)|.
\end{equation*}
One finally gets immediately from this last inequality the following bound on the function $f$.

\begin{Pro}\label{prop.lossy}
Let $[\Go_-,\Go_+]\subset \bbR^{+*}$ then the function $f$ satisfies the following inequality:
\begin{equation}\label{eq.bound2losscase}
\frac{1}{4}(\Go_+^2-\Go_-^2) f_{\infty} \leq \max_{x\in[\Go_-,\Go_+]}|\omega^2 f(\Go)|.
\end{equation}
\end{Pro}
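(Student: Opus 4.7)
The plan is to follow exactly the chain of inequalities built up in Section~\ref{lossy-mat} just before the proposition, and simply repackage the last displayed bound via the substitution linking $v$ and $f$. All the analytic work has in fact already been done; the proposition is a corollary with an appropriate choice of $\Delta$ and change of variable.

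First I would take as starting point the inequality
\begin{equation*}
\int_{x_-}^{x_+} H\bigl(\Delta - |v(x)|\bigr)\, \md x \leq \frac{4\Delta}{f_\infty},
\end{equation*}
which was established from the sum rule \eqref{eq.sumrulesmu} by choosing the uniform measure on $[-\Delta,\Delta]$, using the explicit form \eqref{eq.Hergunif} of $h_{\mm}$, the lower bound \eqref{eq.boundimag} on $\Imag h_{\mm}$, and the dominated convergence theorem in the limit $y\to 0^+$. This holds for every $\Delta>0$.

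Next I would specialize to the sharp choice
\begin{equation*}
\Delta := \max_{x\in[x_-,x_+]} |v(x)|,
\end{equation*}
which is well defined since $v$ is continuous on $[x_-,x_+]\subset\bbR^{+*}$ (by H1 and the definition \eqref{eq.funcherglotz}). With this choice the Heaviside integrand equals $1$ on the whole interval, so the left-hand side is $x_+-x_-$, yielding
\begin{equation*}
\tfrac{1}{4}(x_+ - x_-) f_\infty \leq \max_{x\in[x_-,x_+]} |v(x)|.
\end{equation*}

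Finally I would translate back to $f$. Recalling from Corollary~\ref{cor.herg} that $v(z) = z\,f(\sqrt{z})$, the change of variable $x = \Go^2$ transforms the interval $[x_-,x_+] = [\Go_-^2,\Go_+^2]$ back to $[\Go_-,\Go_+]$ and gives $v(\Go^2) = \Go^2 f(\Go)$. Substituting yields exactly \eqref{eq.bound2losscase}. There is no real obstacle here, as every nontrivial inequality has already been proven in the discussion preceding the proposition; the only point to check carefully is that $v$ attains its maximum on the compact interval (so that the choice of $\Delta$ is legitimate) and that the change of variable $x=\Go^2$ is a bijection between $[\Go_-^2,\Go_+^2]$ and $[\Go_-,\Go_+]$, both of which are immediate since $\Go_- > 0$.
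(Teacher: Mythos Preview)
Your proposal is correct and follows exactly the paper's own argument: the entire derivation (uniform measure, the bound \eqref{eq.bound1losscase}, the choice $\Delta=\max_{x\in[x_-,x_+]}|v(x)|$) is done in the text preceding the proposition, and the proposition is then stated as an immediate consequence via the substitution $x=\Go^2$, $v(\Go^2)=\Go^2 f(\Go)$. Your added remarks on the legitimacy of the maximum and the bijectivity of the change of variable are sound and slightly more careful than the paper, but the approach is identical.
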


This last bound is essentially the same as the bound (1) derived in \cite{Gustafsson:2010:SRP}. 

\begin{Rem}
More precisely, one will recover exactly the bound (1) derived in \cite{Gustafsson:2010:SRP}, namely
\begin{equation*}
\frac{1}{2}(\Go_+-\Go_-) f_{\infty} \leq \max_{x\in[\Go_-,\Go_+]}|\omega f(\Go)| \,,
\end{equation*}
if one uses the Herglotz function $\tilde{v}$ defined by (\ref{eq.otherhergfunc}) instead of $v$ to define the function $v_{m}$. One points out that factor 1/2 instead of the factor 1/4 in (\ref{eq.bound2losscase}) comes from the relation (\ref{eq.sym}) satisfied by $\tilde{v}$ which allows one to rewrite the sum rule (\ref{eq.sumrules}) as
$$
\lim_{\eta \to 0^{+}}\lim_{y\to 0^{+}}\frac{2}{\pi} \int_{\eta<x<\eta^{-1}} \Imag h (x+\ii y)\, \md x=a_{-1}-b_{-1}.
$$
\end{Rem}

\begin{Rem}\label{Rem.weakhypcont}
Notice here that all the bounds derived in this section still hold 
for functions $f$ of the form (\ref{eq.generalform}) whose real poles do not belong to the interval $[\Go_-,\Go_+]$. In other words, it extends also to non-dissipative generalized Drude--Lorentz models whose resonances do not belong to the frequency range of interest. Thus, the hypothesis H1 which assumes that $f$ is continuous for all real frequencies can be relaxed.
\end{Rem}

%%%%%%%%%%%%%%%%%%%%%%%%%%%%%%%%%%%%%%%%%%%%%%%%%%%%%%%%%%%%%%%%%%%%%%%%%%

\section{Bounds on the polarizability tensor and quasi-static cloaking}\label{sec.boundcloak}

\subsection{Formulation of the problem}
The challenging problem we address in this section is the following: is it possible to construct a passive material to cloak a dielectric inclusion on a whole frequency band $[\Go_-,\Go_+]$?
Using the bounds derived in the first section, we will prove that it is not possible when one makes the quasi-static approximation of Maxwell's equations.

 Let $\CO$ be a bounded simply-connected dielectric inclusion with Lipschitz boundary and constant permittivity $\Gve \, \BI$  satisfying $\Gve>\Gve_{0}$. We assume here in particular that $\CO$ is made of a standard dielectric material for which one can neglect the dispersion, in other words the frequency dependence of $\Gve$, on the frequency range of interest $[\Go_-,\Go_+]$.
To make invisible $\CO$, one uses a passive cloak of any shape characterized by its dielectric tensor $\BGve(\Bx,\Go)$ which depends both on the spatial variable $\Bx$ and the frequency $\Go$. Thus, the cloak is  composed of an anisotropic, dispersive and heterogeneous material. The whole device: the dielectric inclusion and the cloak is assumed to fill a bounded open set $\Omega \subset \bbR^3$ of characteristic size $R_0$, in other words $\Omega \subset B(0,R_0)$ where $B(0,R_0)$ denotes the sphere of radius $R_0$ centered at the origin. Finally, one supposes that the rest of the space: $\bbR^3\setminus \GO$ has the same dielectric constant $\Gve_0 \, \BI$ as the vacuum. We emphasize that the cloak can surround the inclusion $\CO$ (like in the figure \ref{fig.med}) which is the case for many cloaking methods, but our results hold also for cloaking methods such as anomalous resonances \cite{Milton:2006:CEA} or complementary media \cite{Lai:2009:CMI} for which the inclusion can be outside the cloak.

\begin{figure}[!t]
\centering
 \includegraphics[width=0.6\textwidth]{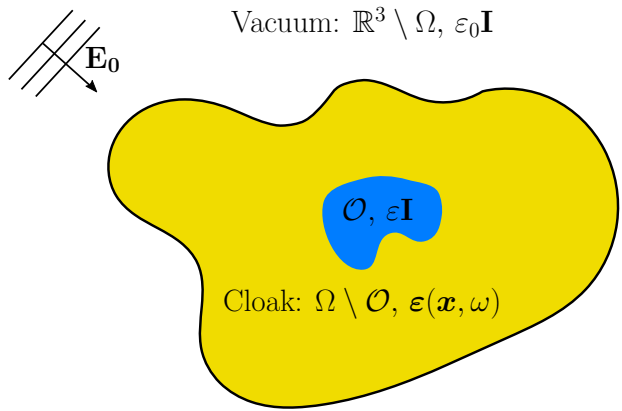}
 \caption{Description of the cloaking problem}
 \label{fig.med}
\end{figure}

For simplicity we assume that there is a plane incident wave on the device, with wavelength considerably larger than $R_0$, so that within the frequency range of interest
$\Go\in [\Go_-,\Go_+]$ we can use the quasi-static equations which amounts neglecting the term 
due to the time-derivatives of the electrical and magnetic inductions in the time-harmonic Maxwell equations.
Thus, it leads to a decoupling of  these equations. In this setting \cite{Jackson:1999:CE,Milton:2002:TC}, one can express the electrical field $\BE(\Bx,\Go)$ in terms of the gradient of some potential $V(\Bx,\Go)$, 
i.e. $\BE(\Bx,\Go)=-\Grad V(\Bx,\Go)$, and an incident plane wave corresponds to a 
uniform field $\BE_0\in \bbC^{3}$ at infinity so that the potential $\Grad V(\Bx,\Go)$ has to satisfy the following elliptic equation
\begin{equation}\label{eq.quasis}
\left\{ \begin{array}{ll}
 \Div \big( \BGve(\Bx,\Go)\Grad V(\Bx,\Go) \big) =0 \ {\rm on } \ \bbR^3, \\[5pt]
  V(\Bx,\Go)=-\BE_0\cdot\Bx+\CO(1/|\Bx|)\ {\rm as}\,\,|\Bx|\to\infty. 
\end{array} \right.
\end{equation}

In this context, the leading order correction to the uniform incident field $\BE_0$ at infinity is a dipolar field (see \cite{Ammari:2007:PMT,Jackson:1999:CE,Kang:2008:SPS,Milton:2002:TC}), so that the potential $V$ has the asymptotic expansion
\begin{equation} \label{eq.asymp}V(\Bx,\Go)=-\BE_0\cdot\Bx+\frac{\Bp(\Go)\cdot\Bx}{4\pi \Gve_0 |\Bx|^3}+\CO\Big(\frac{1}{|\Bx|^3}\Big)\ {\rm as}\,\,|\Bx|\to\infty. 
\end{equation}
where the induced dipole moment $\Bp(\Go)\in\bbC^3$ is linearly related to the applied field $\BE_0$ and this linear relation:
\begin{equation}\label{eq.polatens}\Bp(\Go)=\BGa(\Go) \BE_0
\end{equation}
defines the polarizability tensor $\BGa(\Go)$ (also called the Polya--Szego tensor), which is a $3\times 3$ complex matrix. We point out that $\BGa$ is  a function of the frequency in the case of dispersive media.
 $\BGa(\Go)$ defines the leading term of the far field of the scattered wave generated by the whole device $\GO$.
Hence, one says that the dielectric inclusion $\CO$ is cloaked at a sufficient large distance at a frequency $\Go \in [\Go_-,\Go_+]$, if  the polarizability tensor $\BGa(\Go)$ vanishes at $\Go$.

We emphasize that the equations (\ref{eq.quasis})-(\ref{eq.polatens}) which define the polarizability tensor $\BGa(\Go)$ are physically relevant only in the frequency interval of interest $[\omega_-,\omega_+]$ where the quasi-static
approximation is valid. 
Nevertheless, as the dielectric tensor of the cloak $\BGve(\Bx,\Go)$ is defined, by the constitutive laws, for all frequencies $\omega$ in the closure of the upper-half plane $\operatorname{cl}\bbC^{+}$, one can study mathematically these equations for $\Go \in \operatorname{cl}\bbC^{+}$. To define their extension to $\Go \in \operatorname{cl}\bbC^{+}$, we set also that within the dielectric inclusion $\CO$, the permittivity $\BGve(\cdot,\Go)$ is constant and equal to $\Gve \, \BI$ for all $\Go \in \operatorname{cl}\bbC^{+}$. This definition is also physically relevant only in the frequency band $[\omega_-,\omega_+]$ where $\CO$ is assumed to be a non-dispersive dielectric (since the only material which behaves as a non-dispersive media at all frequencies is the vacuum). Outside the cloaking device $\Omega$, where the dielectric behaves as the vacuum, one extends $\BGve(\cdot,\Go)$  by $\Gve_0 \, \BI$ for all $\Go \in \operatorname{cl}\bbC^{+}$.
To sum up, the extension of the equations (\ref{eq.quasis})-(\ref{eq.polatens}) to $\Go \in \operatorname{cl}\bbC^{+}$ is performed to derive quantitative bounds on $\BGa(\Go)$  which have  a physical meaning only in the frequency band of interest $[\omega_-,\omega_+]$.

We have now to specify what we mean by a passive cloak. In the following, we equip the space of complex $3\times 3$ matrices with the induced $l^2$ norm. A passive cloak is defined as a material which satisfies the following assumptions:
\begin{itemize}
\item $\tilde{\mathrm{H}}1$: for a. e. $\Bx \in \Omega\setminus \CO, \,\BGve(\Bx,\cdot)$ is analytic on $\bbC^{+}$ and continuous on $\operatorname{cl} \bbC^{+}$,
\item $\tilde{\mathrm{H}}2$: for a. e. $\Bx \in \Omega\setminus \CO,$  $\ \BGve(\Bx,\Go)\to \Gve_0 \BI \,  \mbox{ as } |\Go| \to \infty \mbox{ in} \operatorname{cl} \bbC^{+}$,
\item $\tilde{\mathrm{H}}3$: for a .e. $\Bx \in \Omega\setminus \CO, \, \forall \Go \in \operatorname{cl}\bbC^{+},\ \BGve(\Bx,-\overline{\Go})=\overline{\BGve(\Bx,\Go)}$,
\item $\tilde{\mathrm{H}}4$: for a. e. $\Bx \in \Omega\setminus \CO, \forall \Go \in \bbR^{+}, \ \operatorname{Im} \BGve(\Bx,\Go)\geq 0$  (passivity),
\item $\tilde{\mathrm{H}}5$: for a. e. $\Bx \in \Omega\setminus \CO, \forall \Go \in\operatorname{cl}  \bbC^{+}, \  \BGve(\Bx,\Go)^{\top}=\BGve(\Bx,\Go)$, where $\top$ stands for the transpose operation (reciprocity principle).
\end{itemize}
Assumptions $\tilde{\mathrm{H}}1-4$ correspond to hypotheses $\operatorname{H}1-4$ (given in section \ref{First part}) but are expressed in the more general case of anisotropic and heterogeneous passive materials \cite{Cessenat:1996:MME,Milton:2002:TC,Welters:2014:SLL}. The assumption $\tilde{\mathrm{H}}5$ is classical. Physically it  means that the cloak satisfies a reciprocity principle.
It is shared by most of the electromagnetic media, but it can be violated in some particular cases as for gyroscopic media or in the presence of Hall effect or magnetic-optical effect \cite{Landau:1984:ECM}. Most of our bounds still hold when the reciprocity principle $\tilde{\mathrm{H}}5$ is broken, therefore in the following we specify the results for which this additional hypothesis is required.

In the following, one extends the definition of $\BGve$ to $\Go=\infty$ by posing:
$$
\BGve(\infty,\Bx)=\Gve \BI  \  \mbox{ for a. e. } \Bx \mbox{ in } \CO \ \mbox{ and } \ \BGve(\infty,\Bx)=\Gve_0  \BI  \ \mbox{ for a. e. }  \Bx  \in \bbR^3\setminus \CO,
$$
so that $\BGve(\Bx,\cdot)$ is continuous on $\operatorname{cl}\bbC^+\cup\{\infty\}$ for a. e $\Bx\in \bbR^3$ (by hypotheses $\tilde{\mathrm{H}}1$ and $\tilde{\mathrm{H}}2$).

In this context, the broadband passive problem can be rephrased as follows: is it possible to construct a passive cloak, in other words, a material satisfying the five hypotheses $\tilde{\mathrm{H}}1-5$ in $\GO\setminus \CO$ such that the polarizability tensor $\BGa(\Go)$ associated with the whole device $\GO$ vanishes on the whole frequency band $[\Go_-,\Go_+]$? We will answer negatively to this question and derive quantitative bounds on the function $\BGa$ over this frequency range.

\subsection{Analyticity of the polarizability tensor}\label{sub.analypola}

To derive fundamental limits on the cloaking effect over the frequency band $[\Go_-,\Go_+]$, we want to apply the bounds derived in the section \ref{sec-boundanalycomp} to the polarizability tensor $\BGa$ or more precisely to the scalar function 
\begin{equation}\label {eq.falpha}
f(\Go)=\BGa(\Go) \BE_0\cdot \overline{\BE_0}.
\end{equation}
Hence, the first step is to prove that if the dielectric tensor $\BGve$ satisfies $\tilde{\mathrm{H}}1-5$ then the function $f$ satisfies the hypotheses $\mathrm{H}1-4$ that we used to derive these bounds.

To this aim, we first recall in this subsection why equations (\ref{eq.quasis}) are well-posed, in other words why they admit a unique solution $V(\cdot,\Go)$ in a classical functional framework. Moreover, we show that $V(\cdot,\Go)$ depends analytically on the frequency $\Go$ on  $\bbC^{+}$ and continuously on $\operatorname{cl}\bbC^{+} \cup \{\infty \}$. Then, we use this result to prove that the function $f$, defined by (\ref{eq.falpha}), shares the same regularity and satisfies the assumptions $\operatorname{H}1-4$. Finally, this allows us (using the results of subsection \ref{sub-secconstrHerg}) to construct a Herglotz function $v$ associated with $f$.

In this perspective, we seek the potential $V(\cdot,\Go)$, that is the solution of (\ref{eq.quasis}), in the form:
\begin{equation}\label{eq.pot}
V(\Bx,\Go)= -\BE_0\cdot \Bx+V_s(\Bx,\Go)
\end{equation}
where $V_s(\cdot,\Go)$ denotes the scattered potential due to the reflection of the uniform field $\BE_0$ on the device $\Omega$. Hence by (\ref{eq.quasis}), $V_s$ satisfies
\begin{numcases}{}
 \Div(\BGve(\Bx,\Go)\Grad V_s(\Bx,\Go)) =\Div\big((\BGve(\Bx,\Go)-\Gve_0 \BI)\, \BE_0\big)\quad {\rm on } \quad \bbR^3,\label{eq.quasVS1}\\[5pt]
V_s(\Bx,\Go)=\CO(1/|\Bx|)\quad{\rm as}\,\,|\Bx|\to\infty.\label{eq.quasVS2}
\end{numcases}
In the following, we denote respectively by $B(\Go,\delta)$ and $\|\cdot\|_{\infty}$ the open ball of center $\Go$ and radius $\delta$ and the uniform norm on $3\times3$ matrix valued functions defined on the set $\bbR^3$.
 We assume that the dielectric tensor $\BGve(\cdot,\Go)$ satisfies two additional hypotheses in the cloak $\Omega\setminus \CO$:
\begin{itemize}
\item $\tilde{\mathrm{H}}6$ (Uniformly bounded): $\forall \Go \in \operatorname{cl} \bbC^{+}$, $\BGve(\cdot,\Go)$ is a $L^{\infty}$ matrix-valued function on $\Omega\setminus{\CO}$ and it exists a positive constant $c_1$ such that
$
 \sup_{\Go \operatorname{cl}\bbC^{+}}  \|\BGve(\cdot,\Go)\|_{\infty} \leq c_1,
$
\item $\tilde{\mathrm{H}}7$ (Coercivity):
\begin{itemize} 
\item  $\forall \Go \in \bbC^{+}$, it exists $c_2(\Go)>0$ and $\gamma(\Go)\in [0,2 \pi($ such that 
$$
 \  |\Imag(e^{i\, \gamma(\Go)}\BGve(\Bx,\Go)\BE.\overline{\BE})| \geq c_2(\Go)   |\BE|^2, \,  \ \forall \BE \in \bbC^3, \, \mbox{ for a.e. } \Bx \in \bbR^3.
$$
\item $\forall \Go_0 \in \bbR$, $\exists \delta>0$, $c_2(\Go_0)>0$ and $\gamma(\Go_0)\in [0,2 \pi($ such that $\forall \omega \in B(\Go_0,\delta) \cap \operatorname{cl}\bbC^{+}$  :
$$
 \  |\Imag(e^{i\, \gamma(\Go_0)}\BGve(\Bx,\Go)\BE.\overline{\BE})| \geq c_2(\Go_0)   |\BE|^2, \,  \ \forall \BE \in \bbC^3, \, \mbox{ for a.e. } \Bx \in \bbR^3.
$$
Moreover, we suppose that this last property holds also in a neighborhood of $\Go_0=\infty$ by replacing in the previous relation $ B(\Go_0,\delta) $ with $\{ z\in \bbC \mid |z|>1/\delta\}$.\end{itemize}
\end{itemize}
These two hypotheses are classical assumptions. $\tilde{\mathrm{H}}6$ amounts to suppose that the dielectric tensor is uniformly bounded with respect to $\Bx$ and $\Go$, and $\tilde{\mathrm{H}}7$ that it is coercive with respect to $\Bx$. Moreover, we require in $\tilde{\mathrm{H}}7$ the additional property that the constant of coercivity $ c_2(\Go_0)$ holds locally in frequency in a neighborhood of any real frequency or of $\Go=\infty$.  %This is automatically satisfied outside a bounded region since the material outside the cloak is a a standard dielectric but it has to be satisfies inside the cloak.

\begin{Rem}
The coercivity hypothesis $\tilde{\mathrm{H}}7$ is a bit restrictive in the sense that it does not allow any type of passive media, as for instance, a cloak which behaves as a non-dissipative negative index metamaterial whose permittivity $\BGve(\cdot, \Go)$ is a negative constant function. In that particular case, $\BGve(\cdot,\Go)$ changes signs at the boundary of the cloak since the dielectric inclusion and the vacuum have positive permittivity and thus $\tilde{\mathrm{H}}7$ is not satisfied. Nevertheless, this example neglects completely the dissipation of negative index materials which physically allows us to recover $\tilde{\mathrm{H}}7$ even if it is small. However, we think that mathematically, the bounds we derived can be extended to sign-changing media by using mathematical methods associated with sign-changing conductivity equations (see \cite{BonnetBenDhia:2012:TIP,Nguyen:2016:LAP}) which do not require  the coercivity of $\BGve(\cdot,\Go)$.
\end{Rem}

We now look for a solution $V_s(\cdot,\Go)$ of equations (\ref{eq.quasVS1}) and (\ref{eq.quasVS2}) in an appropriate weighted Sobolev space: a Beppo--Levi space, usually used as functional space for solutions of the conductivity equation in unbounded domains. It is defined by 
\begin{equation}\label{eq.defW}
W_{1,-1}(\bbR^3)=\{ u \in S^{'}(\bbR^3) \mid  (1+|\Bx|^2)^{-\frac{1}{2}}\, u \in L^2(\bbR^3) \ \mbox{ and }  \Grad u \in  \BL^2(\bbR^3)\}
\end{equation}
where $S^{'}(\bbR^3)$,  $L^2(\bbR^3)$ and $\BL^2(\bbR^3)$ denote respectively the space of tempered distributions and the spaces of scalar and vector-valued square-integrable functions. 
$ W_{1,-1}(\bbR^3)$ is a Hilbert space (see \cite{Nedelec:2001:AEE}) for the norm
$$
\| u\|_{W_{1,-1}(\bbR^3)}=\| \Grad u \|_{\BL^2(\bbR^3)}=\left(\int_{\bbR^3}|\Grad u|^2 \md \Bx\right)^{\frac{1}{2}}.$$

First, we prove that the equation (\ref{eq.quasVS1}) admits a unique solution in $ W_{1,-1}(\bbR^3)$ which 
depends analytically on $\Go$ on $\bbC^{+}$ and continuously on $\operatorname{cl}{\bbC^{+}}\cup \{\infty\}$. Then we will show that this solution satisfies the asymptotics (\ref{eq.quasVS2}).

We denote by $\Bf(\cdot,\Go)$ the function
$$\Bf(\cdot, \Go)=(\BGve(\cdot,\Go)- \Gve_0 \BI)\BE_0$$ 
which is compactly supported in $\Omega$. Hence, using the hypothesis $\tilde{\mathrm{H}}6$, one checks easily that $\Bf(\cdot, \Go)\in \BL^2(\bbR^3)$.
%the source term of the equation  (\ref{eq.quasVS1}).
By applying the Green formula, it is standard to show that solving the equation (\ref{eq.quasVS1}) in $W_{1,-1}(\bbR^3)$
is equivalent solving the following variational problem:
\begin{equation}\label{eq.formvar}
\mbox{Find } V_s(\cdot,\Go) \in  W_{1,-1}(\bbR^3) \ \mbox{ such that } \ a_{\Go}( V_s(\cdot,\Go),v)=l_{\Go}(v), \  \forall v \in W_{1,-1}(\bbR^3),
\end{equation}
where the sesqulinear form $a_{\Go}$ and the anti-linear form $l_{\Go}$ are respectively defined by
\[
a_{\Go}(u,v)=\int_{\bbR^3} \BGve(\Bx,\Go)\Grad u(\Bx) \cdot \overline{\Grad v(\Bx)} \, \md\Bx  \mbox{ and }  l_{\Go}(v)=\int_{\bbR^3}\Bf(\Bx,\Go) \cdot\overline{ \Grad v(\Bx) }\,  \md \Bx, \forall u, v \in W_{1,-1}(\bbR^3).
\]
With the assumption $\tilde{\mathrm{H}}6$ made on $\BGve(\cdot,\Go)$ and the Cauchy--Schwarz inequality, it is straightforward to show that 
\[
|a_{\Go}(u,v)|\leq  c_1 \, \| u \|_{ W_{1,-1}(\bbR^3)} \, \| v \|_{ W_{1,-1}(\bbR^3)} \ \mbox{  and }\, |l_{\Go}(v)|\leq \|\Bf(\cdot,\Go)\|_{\BL^2(\bbR^3)}  \| v \|_{ W_{1,-1}(\bbR^3)}.
\]
In other words, $a_{\Go}$ and $l_{\Go}$ are continuous. We denote by $ W_{1,-1}(\bbR^3)^{*}$ the dual space of $ W_{1,-1}(\bbR^3)$ and by $\langle \cdot ,  \cdot \rangle  $ the duality product between these two spaces. Classically,  the continuity of $a_{\Go}$ allows us to define a continuous linear operator $\AA(\Go)$ from $W_{1,-1}(\bbR^3)$ to $W_{1,-1}(\bbR^3)^{*}$ by posing:
\[
a_{\Go}(u,v)=\langle \AA(\Go)u, \overline{v} \rangle \mbox{ for all } u, v \in W_{1,-1}(\bbR^3).
\]
The continuity of $l_{\Go}$ proves that $\Bf(\cdot,\Go) \in  W_{1,-1}(\bbR^3)^{*}$ and can be rewritten as $l_{\Go}(v)=\langle \Bf(\cdot,\Go),\overline{ v} \rangle$.
Hence, the variational problem (\ref{eq.formvar}) is equivalent to solving the infinite dimensional system:
\begin{equation}\label{eq.linearsyst}
 \AA(\Go)V_s(\cdot,\Go)=\Bf(\cdot,\Go).
\end{equation}

For two Banach spaces $E$ and $F$, we denote in the following by $\mathcal{L}(E,F)$ the Banach space of bounded linear operators from $E$ to $F$ equipped with the operator norm.
 
\begin{Lem}\label{lem.analyop}
At a fixed frequency $\Go \in \operatorname{cl}\bbC^{+} \cup \{\infty \}$, the operator
$\AA(\Go): W_{1,-1}(\bbR^3)\to W_{1,-1}(\bbR^3)^{*}$ is invertible. 
Moreover, the functions $\Go \mapsto \AA(\Go)$ and $\Go \mapsto \AA(\Go)^{-1}$ defined respectively from $\bbC^{+}$ to $\mathcal{L}\big(W_{1,-1}(\bbR^3), W_{1,-1}(\bbR^3)^{*}\big)$  and from $\bbC^{+}$ to $\mathcal{L}\big((W_{1,-1}(\bbR^3))^{*},W_{1,-1}(\bbR^3)\big)$ 
are analytic for the operator norm.
\end{Lem}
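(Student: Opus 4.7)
The plan is three-fold: establish pointwise invertibility of $\AA(\Go)$ by a complex Lax--Milgram argument, prove analyticity of $\AA(\cdot)$ via a weak-to-norm analyticity principle, and deduce analyticity of $\AA(\cdot)^{-1}$ from a standard Neumann series expansion.

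First I would establish invertibility at a fixed $\Go \in \operatorname{cl}\bbC^{+} \cup \{\infty\}$. By $\tilde{\mathrm{H}}7$, there exist $\gamma(\Go) \in [0,2\pi)$ and $c_2(\Go) > 0$ such that, taking $\BE = \Grad u(\Bx)$ pointwise and integrating,
\[
|\Imag(\me^{\ii \gamma(\Go)} a_{\Go}(u,u))| \geq c_2(\Go) \,\|u\|^2_{W_{1,-1}(\bbR^3)}, \quad \forall u \in W_{1,-1}(\bbR^3).
\]
Combined with the continuity of $a_{\Go}$ (which follows from $\tilde{\mathrm{H}}6$), the classical Lax--Milgram theorem applied to the rotated sesquilinear form $\me^{\ii \gamma(\Go)} a_{\Go}$ shows that $\AA(\Go)$ is a bijection with bounded inverse. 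The boundary cases $\Go \in \bbR$ and $\Go = \infty$ are handled by the second bullet of $\tilde{\mathrm{H}}7$, which furnishes the coercivity locally in a neighborhood of each such point.

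For norm analyticity of $\Go \mapsto \AA(\Go)$ on $\bbC^{+}$, I would first establish weak analyticity and then upgrade. Fixing $u,v \in W_{1,-1}(\bbR^3)$, the dependence on $\Go$ of $\langle \AA(\Go) u, \overline{v}\rangle = a_{\Go}(u,v)$ enters only through the integral over $\Omega\setminus\CO$, since $\BGve(\Bx,\cdot)$ is constant in $\Go$ elsewhere. By $\tilde{\mathrm{H}}1$ the integrand is analytic in $\Go$ for a.e. $\Bx$, and by $\tilde{\mathrm{H}}6$ and Cauchy--Schwarz it is dominated by the $\Go$-independent $L^1$ function $c_1|\Grad u(\Bx)||\Grad v(\Bx)|$. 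For any closed triangle $T \subset \bbC^{+}$, Fubini's theorem lets one swap $\oint_T \md\Go$ with $\int_{\Omega\setminus\CO} \md\Bx$, and the inner contour integral vanishes by pointwise analyticity; Morera's theorem then yields scalar analyticity. Since $\|\AA(\Go)\|$ is uniformly bounded by $c_1$ via $\tilde{\mathrm{H}}6$, the classical fact that a weakly analytic, locally norm-bounded operator-valued function between Banach spaces is automatically norm-analytic (see, e.g., Kato's perturbation theory) delivers the desired conclusion for $\AA(\cdot)$.

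Finally, analyticity of $\AA(\cdot)^{-1}$ follows from a Neumann series argument: for $\Go_0 \in \bbC^{+}$ fixed and $\Go$ in a sufficiently small neighborhood of $\Go_0$, the norm analyticity of $\AA$ makes $\|\AA(\Go_0)^{-1}(\AA(\Go) - \AA(\Go_0))\|$ arbitrarily small, so that
\[
\AA(\Go)^{-1} = \bigl(I + \AA(\Go_0)^{-1}(\AA(\Go) - \AA(\Go_0))\bigr)^{-1} \AA(\Go_0)^{-1}
\]
is given by an absolutely convergent power series in $\AA(\Go) - \AA(\Go_0)$, hence is analytic near $\Go_0$. I expect the main technical point to be the weak-to-norm upgrade for the operator-valued function, which genuinely requires the uniform bound from $\tilde{\mathrm{H}}6$; the pointwise Lax--Milgram argument and the Neumann series step are then essentially routine.
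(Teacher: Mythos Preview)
Your proposal is correct and follows essentially the same route as the paper: Lax--Milgram via the coercivity from $\tilde{\mathrm{H}}7$, weak analyticity of $\Go\mapsto a_{\Go}(u,v)$ upgraded to norm analyticity via Kato's theorem, and then analyticity of the inverse. The only cosmetic differences are that the paper establishes scalar analyticity by invoking a complex differentiation-under-the-integral theorem rather than your Morera--Fubini argument, and that it simply cites Kato (chapter~7) for the analyticity of $\AA(\cdot)^{-1}$ instead of writing out the Neumann series; both variants are equivalent and standard.
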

\begin{proof}
Let $\Go \in \operatorname{cl}\bbC^{+} \cup \{\infty \}$.
Thanks to the hypothesis $\tilde{\mathrm{H}}7$ on $\BGve(\Bx,\Go)$, we get that $a_{\Go}$ is a coercive sesquilinear form which satisfies
\[
|a_{\Go}(u,u)| =|e^{i\gamma(\Go)}a_{\Go}(u,u)| \geq |\Imag(e^{i\gamma(\Go)}a_{\Go}(u,u))|\geq  c_2(\Go)  \| u \|_{ W_{1,-1}(\bbR^3)}^2.
\]
Thus, by Lax--Milgram's theorem, $\AA(\Go): W_{1,-1}(\bbR^3) \to  W_{1,-1}(\bbR^3)^{*}$ is an isomorphism. 

One wants now to prove that the function $\Go \mapsto \AA(\Go)$ is  analytic on $\bbC^{+}$ for the operator norm. For this purpose, it is sufficient to show its weak analyticity, in other words that $\Go \mapsto \langle \AA(\Go)u, \overline{v} \rangle$ is analytic on $\bbC^{+}$, for any fixed $u$ and $v\in W_{1,-1}(\bbR^3)$.  As $\BGve(\Bx,\cdot)$ is analytic (by the hypothesis $\tilde{\mathrm{H}}1$), one can check easily by applying the theorem of complex differentiation under the integral presented in \cite{Mattner:2001:CDI} (using the hypothesis $\tilde{\mathrm{H}}6$ for the domination condition required in its assumption) that  $\Go \mapsto \langle \AA(\Go)u, \overline{v} \rangle$ is analytic on $\bbC^{+}$  for any fixed $u$ and $v\in W_{1,-1}(\bbR^3)$. As weak analyticity implies analyticity for the operator norm (see \cite{Kato:1995:PTO}, Theorem 3.12 p. 152), the function $\Go \mapsto \AA(\Go)$ is analytic. Therefore, one deduces (see \cite{Kato:1995:PTO} chapter 7 pp 365-366) that $\Go \mapsto \AA(\Go)^{-1}$ is also analytic for the operator norm. 
\end{proof}

\begin{Thm}\label{analyticity}
At a fixed frequency $\Go \in \operatorname{cl} \bbC^{+}\cup \{\infty \}$, the equation (\ref{eq.quasVS1}) admits
a unique solution $V_s(\cdot,\Go)$ in $ W_{1,-1}(\bbR^3)$ defined by
\begin{equation}\label{eq.sol1}
V_s(\cdot,\Go)=\AA^{-1}(\Go) \Bf(\cdot,\Go).
\end{equation}
Moreover, the function $\Go \mapsto \BE_s(\cdot,\Go)=- \Grad V_s(\cdot,\Go)$ from $\operatorname{cl}\bbC^{+}$ to $\BL^2(\bbR^3)$ equipped with the $\|\cdot\|_{\BL^2(\bbR^3)}$ norm is analytic on $\bbC^{+}$ and continuous on $\operatorname{cl}\bbC^{+}\cup \{\infty \}$. 
 \end{Thm}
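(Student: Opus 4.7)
First I would dispose of existence and uniqueness directly from Lemma \ref{lem.analyop}. Since $\AA(\Go): W_{1,-1}(\bbR^3) \to W_{1,-1}(\bbR^3)^{*}$ is an isomorphism for every $\Go\in\operatorname{cl}\bbC^{+}\cup\{\infty\}$, and since $\Bf(\cdot,\Go)=(\BGve(\cdot,\Go)-\Gve_0\BI)\BE_0$ is compactly supported in $\Omega$ and belongs to $\BL^2(\bbR^3)\hookrightarrow W_{1,-1}(\bbR^3)^{*}$ by $\tilde{\mathrm{H}}6$, the equivalent variational problem (\ref{eq.formvar}) is uniquely solved by $V_s(\cdot,\Go)=\AA^{-1}(\Go)\Bf(\cdot,\Go)$.

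For analyticity on $\bbC^{+}$, the plan is to show that $\Go\mapsto \Bf(\cdot,\Go)$ is analytic as a map from $\bbC^{+}$ to $\BL^2(\bbR^3)$: pointwise analyticity in $\Bx$ comes from $\tilde{\mathrm{H}}1$, and the uniform $L^\infty$ control provided by $\tilde{\mathrm{H}}6$ furnishes the domination needed to apply the theorem of complex differentiation under the integral from \cite{Mattner:2001:CDI} (exactly as in the proof of Lemma \ref{lem.analyop}). Combined with the operator-norm analyticity of $\AA^{-1}(\Go)$ established in Lemma \ref{lem.analyop}, the composition $V_s(\cdot,\Go)=\AA^{-1}(\Go)\Bf(\cdot,\Go)$ is analytic into $W_{1,-1}(\bbR^3)$; applying the bounded linear operator $-\Grad:W_{1,-1}(\bbR^3)\to\BL^2(\bbR^3)$ (which is in fact an isometry by definition of the norm on $W_{1,-1}$) transfers analyticity to $\BE_s$.

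The delicate point, which I expect to be the main obstacle, is continuity at boundary points $\Go_0\in\bbR\cup\{\infty\}$, since Lemma \ref{lem.analyop} only asserts analyticity on the open upper half-plane. For a sequence $\Go_n\to\Go_0$ in $\operatorname{cl}\bbC^{+}\cup\{\infty\}$, I would use the algebraic identity
\[
V_s(\cdot,\Go_n)-V_s(\cdot,\Go_0)=\AA^{-1}(\Go_n)\bigl[\Bf(\cdot,\Go_n)-\Bf(\cdot,\Go_0)\bigr]-\AA^{-1}(\Go_n)\bigl[\AA(\Go_n)-\AA(\Go_0)\bigr]V_s(\cdot,\Go_0).
\]
The locally uniform coercivity baked into $\tilde{\mathrm{H}}7$ in a neighborhood of $\Go_0$ (including the neighborhood $\{|z|>1/\delta\}$ of $\infty$) delivers, via Lax--Milgram, a uniform bound on $\|\AA^{-1}(\Go_n)\|$. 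It then suffices to show that each bracket tends to zero in $W_{1,-1}(\bbR^3)^{*}$. For the first bracket, $\|\Bf(\cdot,\Go_n)-\Bf(\cdot,\Go_0)\|_{\BL^2(\bbR^3)}\to 0$ by Lebesgue's dominated convergence theorem, using the pointwise continuity of $\BGve(\Bx,\cdot)$ on $\operatorname{cl}\bbC^{+}\cup\{\infty\}$ ($\tilde{\mathrm{H}}1$ together with $\tilde{\mathrm{H}}2$ and the extension to $\infty$) and the uniform domination from $\tilde{\mathrm{H}}6$. For the second bracket, the same dominated convergence argument applied to the integrand $(\BGve(\cdot,\Go_n)-\BGve(\cdot,\Go_0))\Grad V_s(\cdot,\Go_0)$, viewed as an $\BL^2$ function, shows that $[\AA(\Go_n)-\AA(\Go_0)]V_s(\cdot,\Go_0)\to 0$ in $W_{1,-1}(\bbR^3)^{*}$. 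Putting these together yields $V_s(\cdot,\Go_n)\to V_s(\cdot,\Go_0)$ in $W_{1,-1}(\bbR^3)$, and hence continuity of $\Go\mapsto\BE_s(\cdot,\Go)$ from $\operatorname{cl}\bbC^{+}\cup\{\infty\}$ into $\BL^2(\bbR^3)$.
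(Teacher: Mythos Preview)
Your proposal is correct and follows essentially the same route as the paper: existence/uniqueness via Lemma \ref{lem.analyop}, analyticity of $V_s$ through the product $\AA^{-1}(\Go)\Bf(\cdot,\Go)$ (with Mattner's theorem plus the weak-to-strong principle for $\Bf$), and continuity on the boundary via the same resolvent-type identity combined with the locally uniform coercivity in $\tilde{\mathrm{H}}7$ and dominated convergence. The only cosmetic difference is that you prove analyticity of $\Bf$ into $\BL^2(\bbR^3)$ rather than directly into $W_{1,-1}(\bbR^3)^{*}$, which is harmless since the former embeds continuously in the latter.
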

 
\begin{proof}
Let $\Go $ be in $\operatorname{cl} \bbC^{+}\cup \{\infty \}$. From  the Lemma \ref{lem.analyop},  we know that the operator $\AA(\Go)$ is invertible. Hence, the equation (\ref{eq.quasVS1}) admits
a unique solution (\ref{eq.sol1})  in $ W_{1,-1}(\bbR^3)$ given by the inversion of the linear system (\ref{eq.linearsyst}). 

Now, we show the analyticity of the function $\Go \mapsto \BE_s(\cdot,\Go)$ on $\bbC^{+}$ for the norm  $\|\cdot\|_{\BL^2(\bbR^3)}$ or equivalently that $\Go \mapsto  \BE_s(\cdot,\Go)=-\nabla V_s(\cdot,\Go)$ is analytic for the norm $\|\cdot\|_{W_{1,-1}(\bbR^3)}$. To achieve this aim, one uses the relation (\ref{eq.sol1}) and  the fact that the function $\Go \mapsto \AA(\Go)^{-1}$  is analytic for the operator norm (see Lemma \ref{lem.analyop}). Thus, it only remains to prove that $\Go \mapsto \Bf(\cdot,\Go)$ is analytic for the norm of $(W_{1,-1}(\bbR^3))^{*}$. By Theorem 1.37 p. 139 of \cite{Kato:1995:PTO}, this is equivalent proving weak analyticity, in other words the analyticity of the functions $\Go \mapsto l_{\Go}(v)= \left \langle  \Bf(\cdot,\Go), v\right \rangle$ for any fixed $v\in W_{1,-1}(\bbR^3)$. This last property is shown once again by applying the theorem of complex differentiation under the integral presented in \cite{Mattner:2001:CDI} (using again the fact that $\BGve(\Bx,\cdot)$ is analytic by the hypothesis $\tilde{\mathrm{H}}1$ and the hypothesis $\tilde{\mathrm{H}}6$ to establish the domination condition required in this theorem).

Thus, it remains to prove the continuity of $\Go \mapsto \BE_s(\cdot,\Go)=-\nabla  V_s(\cdot,\Go)$ for real frequencies and for $\Go=\infty$.  
The reasoning here is slightly different from the one used for the analyticity in the upper half plane. The main reason is that weak continuity does not imply strong continuity.

Let $(\Go_n)$ be a sequence of $\operatorname{cl}\bbC^{+}$ which tends to $\Go \in \bbR$. As $\AA_{\Go_n}\, V_s(\cdot, \Go_n)=\Bf
 (\cdot,\Go_n)$ and $\AA_{\Go}\, V_s(\cdot, \Go)=\Bf(\cdot,\Go)$ (where both  operators $\AA_{\Go_n}$ and $\AA_{\Go}$ are invertible  by Lemma \ref{lem.analyop}), we get the following identity:
$$
V_s(\cdot, \Go)-V_s(\cdot, \Go_n)=\AA_{\Go_n}^{-1} \, \big(\AA_{\Go_n}  V_s(\cdot, \Go)- \AA_{\Go}  V_s(\cdot, \Go)+\Bf(\cdot,\Go)-\Bf(\cdot,\Go_n) \big).
$$
Thus, it follows immediately that:
$$
\|V_s(\cdot, \Go_n)-V_s(\cdot, \Go)\|_{W_{1,-1}(\bbR^3)}\leq \| \AA_{\Go_n} ^{-1}\| \, ( \|(\AA_{\Go_n}  -  \AA_{\Go})  V_s(\cdot, \Go)\|_{W_{1,-1}(\bbR^3)^{*}}+\|\Bf(\cdot,\Go)-\Bf(\cdot,\Go_n) \|_{W_{1,-1}(\bbR^3)^{*}} )
$$
We show now that the right hand side of the last equation tends to zero.
To this aim, one first remarks as a consequence of assumption $\tilde{\mathrm{H}}6$ and Lax-Milgram's Theorem that $ \| \AA_{\Go_n} ^{-1}\| \leq c_2(\Go)^{-1}$ for $n $ large enough. Then, for any $v\in W_{1,-1}(\bbR^3)$, one has, by using the Cauchy--Schwarz inequality:
$$
| \langle (\AA_{\Go}- \AA_{\Go_n} ) V_s(\cdot, \Go), \overline{v} \rangle| \leq \left(\int_{\bbR^3}\| \BGve(\Bx, \Go )- \BGve(\Bx, \Go_n )\|^2 | \nabla V_s(\Bx, \Go)|^2\md \Bx\right)^{\frac{1}{2}}\, \|v\|_{W_{1,-1}(\bbR^3)}.
$$
Thus, using the continuity of $\Gve(\Bx, \cdot )$ at $\Go$ (assumption $\tilde{\mathrm{H}}1$) and the hypothesis $\tilde{\mathrm{H}}6$ for the domination condition, one proves, by applying Lebesgue's dominated convergence theorem, that the integral in the last formula tends to zero. Thus, one concludes that:
$$
 \|(\AA_{\Go_n}  -  \AA_{\Go})  V_s(\cdot, \Go)\|_{W_{1,-1}(\bbR^3)^{*}} =\sup_{v\in W_{1,-1}(\bbR^3)\setminus\{0\}} \frac{\big | \big \langle (\AA_{\Go}- \AA_{\Go_n} ) V_s(\cdot, \Go), \overline{v}  \big\rangle \big|}{\|v\|_{W_{1,-1}(\bbR^3)}}\to 0, \mbox{ as } n \to +\infty.
$$
Finally, by doing the same reasoning for the term $\|\Bf(\cdot,\Go)-\Bf(\cdot,\Go_n) \|_{W_{1,-1}(\bbR^3)^{*}}$, one has:
$$
|\langle \Bf(\cdot,\Go)-\Bf(\cdot,\Go_n) ,v \rangle| \leq \left( \int_{\Omega}\| \BGve(\Bx, \Go )- \BGve(\Bx, \Go_n )\|^2 | |E_0|^2 \md \Bx \right)^{\frac{1}{2}} \|v\|_{W_{1,-1}(\bbR^3)},
$$
Then, by using once again Lebesgue's dominated convergence theorem (thanks to the assumptions $\tilde{\mathrm{H}}1$ and $\tilde{\mathrm{H}}6$), one shows also that:
$
\|\Bf(\cdot,\Go)-\Bf(\cdot,\Go_n)\|_{W_{1,-1}(\bbR^3)^{*}} \to 0, \mbox{ as } n \to +\infty.
$
Thus, one concludes that:
 $$
 \|V_s(\cdot, \Go_n)-V_s(\cdot, \Go)\|_{W_{1,-1}(\bbR^3)}\to 0, \mbox{ as } n \to +\infty.
  $$  
The same proof as above holds to show the continuity at $\Go=\infty$.  
\end{proof}

Now, we state why the solution  $V_s(\cdot,\Go)$  of (\ref{eq.sol1}) satisfies not only the equation (\ref{eq.quasVS1})  
but also admits the asymptotic expansion (\ref{eq.quasVS2}) and more precisely that the leading term of $V_s(\cdot,\Go)$  at infinity is a dipolar field (see \cite{Jackson:1999:CE}).
To this aim, one uses the fact that outside the cloaking device, the equation (\ref{eq.quasVS1}) becomes the following Laplace equation:
\begin{equation}\label {eq.Dirichletext}
\nabla^2 u(\cdot,\Go)=0 \ \mbox{ on } \bbR^3\setminus B(0,R)   \mbox{ and } u(\cdot,\Go)=V_s(\cdot,\Go)  \mbox{ on  }\partial B(0,R)
\end{equation}
for $R>R_0$, where $\partial B(0,R)$ denotes the boundary of a ball $B(0,R)$ which contains and does not intersect the cloaking device $\Omega$. We point out that  the trace  of $V_s(\cdot,\Go)$ on the sphere $\partial B(0,R)$ belongs to $H^{1/2}(\partial B(0,R))$ since $V_s(\cdot,\Go)$ is locally $H^{1}$ (indeed, by using standard interior regularity results for second order elliptic equations, see for instance Theorem 2 p. 314 of \cite{Evans:2008:PDE}, one can show that the trace of $V_s(\cdot,\Go)$ belongs to any $H^{s}\Big(\partial B(0,R)\big)$ for $s>0$). Therefore, the Dirichlet exterior problem (\ref{eq.Dirichletext})  (see for example Theorem 2.5.14 of \cite{Nedelec:2001:AEE}) admits a unique solution in $W_{1,-1}(\bbR^3\setminus B(0,R) )$  given by the restriction of $V_s(\cdot,\Go)$ on $\bbR^3\setminus B(0,R)$ (the definition of the space $W_{1,-1}(\bbR^3\setminus B(0,R) )$ is deduced from the definition (\ref{eq.defW})  by replacing $\bbR^3$ by $\bbR^3\setminus B(0,R)$). Moreover, as a solution of the Laplace equation (\ref{eq.Dirichletext}) this solution admits the following integral representation:
$$
V_s(\Bx,\Go)=\int_{\partial B(0,R)}   \frac{\partial G(\Bx,\By) } {\partial \Bn_y}  V_s(\By,\Go)- G(\Bx,\By) \frac{ \partial V_s(\By,\Go) }{\partial \Bn_y} \, \md \By  \, \mbox{ with }  \, G(\Bx,\By)=\frac{1}{4\pi \, |\Bx-\By|},
$$
where $\Bn_y=\By/|\By|$ is the outward normal of the domain $B(0,R)$.
Then, by using the asymptotic expansions of the Green function $ G(\Bx,\By)$ and its normal derivative $\partial G(\Bx,\By)/ \partial \Bn_y$ for large values of $|\Bx|$:
$$
G(\Bx,\By)=\frac{1}{4\pi |\Bx|}+\frac{\By\cdot \Bx}{4\pi |\Bx|^3}+O\left(\frac{1}{|\Bx|^3}\right) \ \mbox{ and } \ \frac{\partial G(\Bx,\By) } {\partial \Bn_y}=\frac { \Bn_y\cdot \Bx}{4\pi |\Bx|^3}+O\left(\frac{1}{|\Bx|^3}\right)
$$
which holds uniformly in $\By$ on $\partial B(0,R)$, this leads to
\begin{equation}\label{eq.asympVS}
V_s(\Bx,\Go)=\frac{Q(\Go)}{4\pi \Gve_0 |\Bx| }+\frac{\Bp(\Go) \cdot \Bx }{4\pi  \Gve_0 |\Bx|^3 }+\CO\Big(\frac{1}{|\Bx|^3}\Big),
\end{equation}
where the charge (also called monopole term) $Q(\Go)$ and the induced dipole moment $\Bp(\Go)$ are respectively given by
\begin{equation}\label{eq.monopdipol}
 Q(\Go)=   \Gve_0 \int_{\partial B(0,R)}  \frac{- \partial V_s(\By,\Go) }{\partial \Bn_y} \md \By \ \mbox{ and }\ \Bp(\Go)= \Gve_0 \int_{\partial B(0,R)}\frac{ -\partial V_s(\By,\Go) }{\partial \Bn_y} \, \By+ V_s(\By,\Go)  \Bn_y \, \md \By.
\end{equation}
In our scattering problem, one can easily prove that the monopole term $Q(\Go)$ vanishes.
Indeed, using the divergence theorem and the equation (\ref{eq.quasVS1}), one gets:
$$
\int_{\partial B(0,R)} \Gve(\By,\Go) \, \frac{ \partial V(\By,\Go) }{\partial \Bn_y} \md \By=\int_{\partial B(0,R)} \Gve_0 \, \frac{ \partial V(\By,\Go) }{\partial \Bn_y} \md \By=0 
$$
and thus by virtue of (\ref{eq.pot}) and (\ref{eq.monopdipol}), one obtains that
$$
 Q(\Go)= - \int_{\partial B(0,R)} \Gve_0 \BE_0 \cdot \Bn \, \md \By=-\Gve_0 \BE_0 \cdot \int_{\partial B(0,R)}  \Bn \, \md \By=0 \,.
$$
Hence, the leading term of the scattered field (\ref{eq.asympVS}) is a dipolar term and using (\ref{eq.quasVS1}), one finally gets a justification of the asymptotic formula (\ref{eq.asymp}) for the potential $V$. Our aim is now to derive a more explicit expression for the induced dipole moment $\Bp(\Go)$ by a small computation done in \cite{Milton:2002:TC} that we reproduce here for readability. Let  $\BE_0$ be any vector of $\bbC^3$. Then, from relations (\ref{eq.pot}), (\ref{eq.quasVS1}), (\ref{eq.monopdipol}) and the Green identity, one has:
\begin{eqnarray*}
\Bp(\Go)\cdot \overline{\BE_0}&= & \int_{\partial B(0,R)} -\Gve_0 \frac{ \partial V_s(\By,\Go) }{\partial \Bn_y} \, \By\cdot \overline{\BE_0}+ \Gve_0 V_s(\By,\Go)  \Bn_y \cdot \overline{\BE_0}  \,\md \By\\
&=& \int_{\partial B(0,R)}\Big(-\Gve_0 \frac{ \partial V(\By,\Go) }{\partial \Bn_y} \Big)\, (\By\cdot \overline{\BE_0}) + \Gve_0 V(\By,\Go)  \Bn_y \cdot \overline{\BE_0}  \,\md \By\\
&=& \int_{B(0,R)} \BGve(\By,\Go) \BE(\By,\Go) \cdot \nabla (\By\cdot \overline{\BE_0}) \,\md \By- \Big(\int_{B(0,R)} \Gve_0 \BE(\By,\Go)   \,\md \By \Big) \cdot \overline{\BE_0}  \\
&=&\int_{B(0,R)} \big( \BGve(\By,\Go)- \Gve_0 \BI\big)  \BE(\By,\Go) \md \By  \cdot \overline{\BE_0}  .
\end{eqnarray*}

As the function $\BGve(\cdot,\Go)- \Gve_0 \, \BI$ has a support contained in the cloaking device, this yields:
\begin{equation}\label{eq.polaztens}
\Bp(\Go)=\BGa(\Go) \BE_0=\int_{\Omega} \big( \BGve(\By,\Go)- \Gve_0 \BI\big)  \BE(\By,\Go) \md \By, \, \forall \Go\in \operatorname{cl}\bbC^{+}\cup \{\infty\} \, .
\end{equation}
Now, using this last  formula, one can rewrite the function $f$ defined by (\ref{eq.falpha}) as
\begin{equation}\label{eq.defpola2}
f(\Go)=\BGa(\Go) \BE_0. \overline{\BE_0}= \int_{\Omega} (\BGve(\Bx,\Go)-\BGve_0\BI ) \, \BE(\Bx, \Go) \cdot  \overline{\BE_0} \, \md \Bx, \, \forall \Go \in  \operatorname{cl}\bbC^{+}\cup \{\infty\}
\end{equation}
and study its regularity with respect to the frequency $\Go$.
\begin{Pro}\label{prop.analyf}
For any fixed incident field $\BE_0\in \bbC^3$, the function $f$ defined by (\ref{eq.falpha}) is analytic on $\bbC^{+}$ and continuous on $\operatorname{cl}\bbC^{+}$, in other words, it satisfies the hypothesis $\mathrm{H}1$. 
\end{Pro}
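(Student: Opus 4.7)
\medskip

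\noindent\textbf{Proof proposal.}
The strategy is to write $f$ as a sum of two pieces, each of which is either handled directly by differentiation under the integral sign, or can be expressed as the composition of analytic $\BL^2$-valued maps with the bilinear pairing on $\BL^2(\Omega)$. Using the decomposition $\BE(\Bx,\Go)=\BE_0-\nabla V_s(\Bx,\Go)=\BE_0+\BE_s(\Bx,\Go)$ in formula (\ref{eq.defpola2}), I would write
\begin{equation*}
f(\Go)=f_1(\Go)+f_2(\Go),\quad f_1(\Go):=\int_{\Omega}(\BGve(\Bx,\Go)-\Gve_0\BI)\BE_0\cdot\overline{\BE_0}\,\md\Bx,\quad f_2(\Go):=\int_{\Omega}(\BGve(\Bx,\Go)-\Gve_0\BI)\BE_s(\Bx,\Go)\cdot\overline{\BE_0}\,\md\Bx.
\end{equation*}

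For $f_1$, the integrand is, for a.e.\ $\Bx\in\Omega$, analytic in $\Go\in\bbC^{+}$ and continuous on $\operatorname{cl}\bbC^{+}$ by $\tilde{\mathrm{H}}1$, and uniformly bounded on $\Omega\times\operatorname{cl}\bbC^{+}$ by $\tilde{\mathrm{H}}6$. Since $\Omega$ is bounded, the constant function $c_1|\BE_0|^2$ is an integrable majorant. I would apply the complex differentiation under the integral theorem of \cite{Mattner:2001:CDI} (already invoked in Lemma \ref{lem.analyop} and Theorem \ref{analyticity}) to get analyticity of $f_1$ on $\bbC^{+}$, and Lebesgue's dominated convergence theorem, together with the pointwise continuity from $\tilde{\mathrm{H}}1$, to get continuity on $\operatorname{cl}\bbC^{+}$.

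For $f_2$, I would rewrite the integrand using the transpose, $(\BGve-\Gve_0\BI)\BE_s\cdot\overline{\BE_0}=\BE_s\cdot(\BGve-\Gve_0\BI)^{\top}\overline{\BE_0}$, and set $\Bg(\Bx,\Go):=(\BGve(\Bx,\Go)-\Gve_0\BI)^{\top}\overline{\BE_0}\,\mathbf{1}_{\Omega}(\Bx)$, which lies in $\BL^{2}(\bbR^3)$ by $\tilde{\mathrm{H}}6$ and the boundedness of $\Omega$. The same Mattner/DCT argument used for $f_1$, applied now in the Hilbert space $\BL^{2}(\bbR^3)$ (via weak analyticity against arbitrary test fields and the fact that weak analyticity implies norm analyticity, see \cite{Kato:1995:PTO}), shows that $\Go\mapsto\Bg(\cdot,\Go)\in\BL^{2}(\bbR^3)$ is analytic on $\bbC^{+}$ and continuous on $\operatorname{cl}\bbC^{+}$. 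Combining this with the $\BL^{2}(\bbR^3)$-analyticity/continuity of $\Go\mapsto\BE_s(\cdot,\Go)$ supplied by Theorem \ref{analyticity}, and noting that $f_2(\Go)=\int_{\bbR^3}\BE_s(\Bx,\Go)\cdot\Bg(\Bx,\Go)\,\md\Bx$ is the composition of these two $\BL^{2}$-valued maps with the continuous bilinear pairing $\BL^{2}(\bbR^3)\times\BL^{2}(\bbR^3)\to\bbC$, I conclude that $f_2$, and hence $f$, is analytic on $\bbC^{+}$ and continuous on $\operatorname{cl}\bbC^{+}$.

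The only step that requires genuine care is the analyticity of the multiplicative factor $\Go\mapsto\Bg(\cdot,\Go)$ in the strong $\BL^{2}$ topology; this is where the uniform bound $\tilde{\mathrm{H}}6$ together with the pointwise regularity $\tilde{\mathrm{H}}1$ are crucial to supply both the domination needed for Mattner's theorem and the passage from weak to norm analyticity. Everything else reduces to applications of dominated convergence on the bounded domain $\Omega$. Note that the reciprocity hypothesis $\tilde{\mathrm{H}}5$ is not needed for this proposition.
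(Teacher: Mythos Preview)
Your argument is correct and follows essentially the same route as the paper. The paper packages the integral as $f(\Go)=L_{\Go}\big(\BE(\cdot,\Go)\big)$ with $L_{\Go}(\BU)=\int_{\Omega}(\BGve(\Bx,\Go)-\Gve_0\BI)\BU\cdot\overline{\BE_0}\,\md\Bx$, shows $\Go\mapsto L_{\Go}$ is analytic in $(L^2(\Omega))^{*}$ via Mattner's theorem and weak-to-strong analyticity, and combines this with the analyticity of $\Go\mapsto\BE(\cdot,\Go)$ from Theorem~\ref{analyticity}; your decomposition $f=f_1+f_2$ with the Riesz-type representative $\Bg(\cdot,\Go)\in\BL^2$ is just the same structure written on the vector side rather than the dual side, and your continuity argument via dominated convergence is the same as the paper's sequence argument.
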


\begin{proof}
Let $\Go\in \bbC^{+}$ and  $\BE_0$ be a fixed vector of $\bbC^3$ . One introduces the linear form: 
$$
L_{\Go}(\BU)=\int_{\Omega} (\BGve(\Bx,\Go)-\BGve_0\BI ) \BU \cdot  \overline{\BE_0} \, \md \Bx, \ \forall  \BU \in L^{2}(\Omega)
$$
such that $f(\Go)=L_{\Go}\big(\BE(\cdot,\Go)\big)$.
One easily checks that $L_{\Go}$ is well-defined and continuous by virtue of $\tilde{\mathrm{H}}6$. Moreover, using again the theorem of complex differentiation under the integral presented in \cite{Mattner:2001:CDI} (and both the hypotheses  $\tilde{\mathrm{H}}1$ and $\tilde{\mathrm{H}}6$ to prove respectively the regularity and the domination condition required in the assumptions of this theorem), one shows that $L_{\Go}$ is weakly analytic, in other words, for any fixed $\BU\in  L^{2}(\Omega)$, $\Go \to L_{\Go}(\BU)$ is analytic on $\bbC^{+}$. Hence, as weak analyticity implies strong analyticity (see \cite{Kato:1995:PTO}, Theorem 1.37 p.139), $\Go\to L_{\Go}$ is also analytic. Then, using Theorem \ref{analyticity}, one has $\Go \to \BE(\cdot,\Go)=\BE_0+ \BE_s(\cdot,\Go)$ is strongly analytic on $\bbC^+$ for the $L^2(\Omega)$ norm and one finally deduces that $\Go \to f(\Go)=L_{\Go}\big(\BE(\cdot,\Go)\big)$ is analytic on $\bbC^+$.

It remains to prove the continuity of $f$ for a real frequency $\Go$. Let $(\Go_n)$ be a sequence of  in $\operatorname{cl}\bbC^{+}$ which tends to $\Go \in \bbR$.
One has:
$$
|f(\Go_n)-f(\Go)  |\leq  |L_{\Go_n}\big(\BE(\cdot,\Go_n)-\BE(\cdot,\Go)\big) |+ |L_{\Go_n}\big(\BE(\cdot,\Go)\big)-L_{\Go}\big(\BE(\cdot,\Go)\big)|.
$$
By $\tilde{\mathrm{H}}6$, one checks easily that the linear form $L_{\Go}$ is uniformly bounded with respect to the frequency: $\|L_{\Go_n}\|\leq (c_1+\Gve_0| )\, |\BE_0| \operatorname{mes}(\Omega)^{1/2}$ and by Theorem \ref{analyticity}, $\Go \to \BE(\cdot,\Go) $ is continuous on $\operatorname{cl}\bbC^{+}$ for the $L^2(\Omega)$ norm, thus the first term of the right hand side tends to $0$. Concerning the second term, by using hypotheses $\tilde{\mathrm{H}}1$ and $\tilde{\mathrm{H}}6$, it is straightforward to check , by applying Lebesgue's dominated convergence theorem, that it tends to $0$. This concludes the proof. 
\end{proof}

\begin{Rem}
From the analyticity of the function $f$ on $\bbC^{+}$ for any fixed incident field $\BE_0\in \bbC^3$, one deduces since weak analyticity implies
analyticity for the operator norm (see \cite{Kato:1995:PTO}, Theorem 3.12 p. 152) that the polarizability tensor $\BGa$ is an analytic function of the frequency on $\bbC^+$ with respect to the induced $l^2-$norm.
\end{Rem}

We prove in the following proposition that $f$ satisfies the hypotheses H2 and H3.

 \begin{Pro}\label{prop.hyp2-3}
For any fixed non-zero incident field $\BE_0\in \bbC^3$, the function $f$ defined by (\ref{eq.falpha}) satisfies 
$$
f(\Go) \to f(\infty)=\Ga(\infty)\BE_0 \cdot \overline{\BE_0}>0,
$$
where $\Ga(\infty)$ is defined by relation (\ref{eq.defpola2}) evaluated at $\Go=\infty$. Thus, $f$ satisfies  $\mathrm{H}2$. Moreover, $f$ satisfies the hypothesis $\mathrm{H}3$, that is
\begin{equation}\label{eq.fsym}
f(-\overline{\Go})=\overline{f(\Go)},  \, \forall \Go \in \operatorname{cl}\bbC^{+}.
\end{equation}
When the cloak contains a non reciprocal medium (in other words if $\tilde{\mathrm{H}}5$ is not satisfied), the relation (\ref{eq.fsym}) holds only under the additional assumption that $\BE_0$ is a real-valued incident field.
 \end{Pro}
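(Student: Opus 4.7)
The plan is to verify hypotheses $\mathrm{H}2$ and $\mathrm{H}3$ separately, exploiting the representation \eqref{eq.defpola2} together with the well-posedness and regularity results of Theorem \ref{analyticity} and the analyticity argument of Proposition \ref{prop.analyf}.

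For $\mathrm{H}2$, continuity of $f$ at $\Go=\infty$ is established exactly as the real-frequency continuity in the proof of Proposition \ref{prop.analyf}: one splits $f(\Go_n)-f(\infty)$ into an ``electric-field'' piece controlled by the $\BL^2(\Omega)$-continuity of $\BE(\cdot,\Go)$ at infinity from Theorem \ref{analyticity}, and a ``coefficient'' piece handled by Lebesgue dominated convergence via hypotheses $\tilde{\mathrm{H}}2$ and $\tilde{\mathrm{H}}6$. To prove the strict positivity $f(\infty)>0$, observe that $\BGve(\cdot,\infty)$ reduces to the real scalar coefficient equal to $\Gve\BI$ on $\CO$ and $\Gve_0\BI$ elsewhere, so the scattering problem at infinity is a classical conductivity problem. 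Starting from $f(\infty)=\int_{\Omega}(\BGve(\cdot,\infty)-\Gve_0\BI)\BE(\cdot,\infty)\cdot\overline{\BE_0}\,\md\Bx$, I write $\overline{\BE_0}=\overline{\BE(\cdot,\infty)}+\nabla\overline{V_s(\cdot,\infty)}$, evaluate the cross term using the weak formulation \eqref{eq.formvar} tested against $V_s(\cdot,\infty)\in W_{1,-1}(\bbR^3)$, and freely extend integrals of the compactly supported coefficient $(\BGve(\cdot,\infty)-\Gve_0\BI)$ to all of $\bbR^3$. After simplification this yields the quadratic identity
\begin{equation*}
f(\infty)=(\Gve-\Gve_0)\int_{\CO}|\BE(\Bx,\infty)|^{2}\,\md\Bx+\Gve_0\int_{\bbR^3}|\nabla V_s(\Bx,\infty)|^{2}\,\md\Bx,
\end{equation*}
whose right-hand side is a sum of two non-negative terms since $\Gve>\Gve_0$. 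Non-triviality comes from the fact that $V_s(\cdot,\infty)\equiv 0$ would force the jump condition $(\Gve-\Gve_0)\BE_0\cdot\Bn=0$ on $\partial\CO$, which is geometrically impossible on a closed bounded surface when $\BE_0\neq 0$.

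For $\mathrm{H}3$, applying complex conjugation to the defining equations \eqref{eq.quasVS1}--\eqref{eq.quasVS2} and invoking $\tilde{\mathrm{H}}3$ shows that $\overline{V_s(\cdot,\Go;\BE_0)}$ solves the scattering problem at frequency $-\overline{\Go}$ with incident field $\overline{\BE_0}$. The uniqueness part of Theorem \ref{analyticity} then gives $\BE(\cdot,-\overline{\Go};\overline{\BE_0})=\overline{\BE(\cdot,\Go;\BE_0)}$, and inserting this into \eqref{eq.polaztens} with $\BE_0$ running over the canonical real basis of $\bbR^3$ produces the matrix identity $\BGa(-\overline{\Go})=\overline{\BGa(\Go)}$. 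Substituting this into \eqref{eq.falpha} yields $f(-\overline{\Go})=\overline{\BGa(\Go)}\BE_0\cdot\overline{\BE_0}$, whereas a direct computation of the conjugate gives $\overline{f(\Go)}=\overline{\BGa(\Go)}^{\top}\BE_0\cdot\overline{\BE_0}$. The two expressions coincide exactly when the quadratic form $(\overline{\BGa(\Go)}-\overline{\BGa(\Go)}^{\top})\BE_0\cdot\overline{\BE_0}$ vanishes, which occurs either because $\BE_0\in\bbR^3$ (any antisymmetric bilinear form evaluated on two copies of the same real vector is zero) or because $\BGa(\Go)=\BGa(\Go)^{\top}$. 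The latter follows from $\tilde{\mathrm{H}}5$ by the classical electrostatic reciprocity argument: for two incident fields $\BE_0^{(i)},\BE_0^{(j)}$, testing the weak formulation for $V_s^{(i)}$ against $V_s^{(j)}$ using the bilinear form associated to the symmetric matrix $\BGve$ and symmetrizing $i\leftrightarrow j$ produces the manifestly symmetric representation
\begin{equation*}
\BGa(\Go)\BE_0^{(j)}\cdot\BE_0^{(i)}=\int_{\Omega}\bigl(\BGve(\Bx,\Go)-\Gve_0\BI\bigr)\BE^{(i)}\cdot\BE^{(j)}\,\md\Bx+\Gve_0\int_{\bbR^3}\nabla V_s^{(i)}\cdot\nabla V_s^{(j)}\,\md\Bx.
\end{equation*}

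The main technical subtlety is the positivity identity at $\Go=\infty$: one must carefully track how the weak formulation, which only admits test functions in $W_{1,-1}(\bbR^3)$, combines with the algebraic decomposition $\overline{\BE_0}=\overline{\BE}+\nabla\overline{V_s}$ and with the compact support of $(\BGve-\Gve_0\BI)$ to produce the desired quadratic expression without any problematic boundary term at infinity. The remainder of the proof is routine algebraic bookkeeping built on previously established regularity and uniqueness results.
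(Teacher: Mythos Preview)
Your proof is correct and follows the same overall strategy as the paper's: continuity at $\Go=\infty$ via the argument of Proposition~\ref{prop.analyf}, positive-definiteness of $\BGa(\infty)$ for $\mathrm{H}2$, and the conjugation identity $\BGa(-\overline{\Go})=\overline{\BGa(\Go)}$ together with symmetry $\BGa^{\top}=\BGa$ for $\mathrm{H}3$.

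The difference is one of self-containment. Where the paper simply invokes the known positive-definiteness of the polarizability tensor of a homogeneous inclusion (citing Ammari--Kang) and the symmetry of $\BGa$ under reciprocity (citing Sohl), you derive both facts from scratch via the weak formulation. Your energy identity
\[
f(\infty)=(\Gve-\Gve_0)\int_{\CO}|\BE(\Bx,\infty)|^{2}\,\md\Bx+\Gve_0\int_{\bbR^3}|\nabla V_s(\Bx,\infty)|^{2}\,\md\Bx
\]
is a clean, explicit route to strict positivity, and your bilinear reciprocity computation is exactly the standard symmetric representation of the polarization tensor. One minor simplification: your non-triviality step via the jump condition is more than you need. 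Once the identity is in hand, if both terms vanished you would have simultaneously $\nabla V_s(\cdot,\infty)=0$ on $\bbR^3$ and $\BE(\cdot,\infty)=0$ on $\CO$; but the first gives $\BE(\cdot,\infty)=\BE_0\neq 0$ on $\CO$, an immediate contradiction. No boundary argument is required.
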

 \begin{proof}
Let $\BE_0$ be a fixed vector of $\bbC^3$. The fact that $f(\Go) \to f(\infty)$ amounts to proving the continuity of $f$ at $\Go=\infty$, which can be dealt  with in the same way as the continuity of $f$ for a real frequency $\Go$ in the proof of Proposition \ref{prop.analyf}. The positivity of the limit $f(\infty)$ which has to be proved for any $\BE_0 \in \bbC^3$ turns out to be equivalent to showing that  $\Ga(\infty)$ is positive-definite. This last result is a well-known property (see Theorem 4.11 of \cite{Ammari:2007:PMT}) of the polarizability tensor associated with a homogeneous simple connected isotropic inclusion $\CO$ with Lipschitz boundary embedded in the vacuum and defined by its real permittivity $\Gve>\Gve_0$.

One wants now to show relation (\ref{eq.fsym}). Using $\tilde{\mathrm{H}}3$, one deduces by uniqueness of the solution of (\ref{eq.quasVS1}) and
(\ref{eq.quasVS2}) in $W_{1,-1}(\bbR^3)$, that $V_s(\cdot,-\overline{\Go})$, the solution of these equations corresponding to an incident field $\overline{\BE_0}$, is equal to $\overline{V_s(\cdot,\Go)}$ (where $V_s(\cdot,\Go)$ stands for the solution with incident field $\BE_0$). Thus, it follows using $\tilde{\mathrm{H}}3$ and the relations (\ref{eq.pot}) and (\ref{eq.polaztens}) that $\BGa(-\overline{\Go})=\overline{\BGa(\Go)}$. Finally,  one deduces from this last equality that
$$
f(-\overline{\Go})=\overline{\BGa(\Go)} \BE_0 \cdot \overline{\BE_0} \, .
$$
In the particular case of  a real-valued field $\BE_0$, this implies directly the relation $(\ref{eq.fsym})$. 
Obtaining the same relation for non real-valued incident fields requires the symmetry of the polarizability tensor: $\BGa(\Go)^{\top}=\BGa(\Go)$ which is a consequence of the reciprocity principle $\tilde{\mathrm{H}}5$ (for its proof see for instance \cite{Sohl:2008:DRS} p. 62). This concludes the proof.
 \end{proof}
 
 Finally, we have to show that $f$ satisfies the hypothesis H4. This is the purpose of the following proposition.
     
 \begin{Pro}\label{prop.hyp4}
For any fixed incident field $\BE_0\in \bbC^3$, the function $f$ defined by (\ref{eq.falpha}) satisfies the hypothesis $\mathrm{H}4$:
$$
\operatorname{Im}f(\Go)\geq 0, \ \forall \Go\in \bbR^{+}.
$$
Moreover, if $[\Go_-,\Go_+]$ is a transparency window, that is a frequency band for which  $\operatorname{Im} \BGve(\Bx,\cdot)=0$  for a. e. $\Bx \in \Omega\setminus \CO$ then $$\operatorname{Im}f(\Go)= 0, \ \forall \Go \in [\Go_-,\Go_+].$$
 \end{Pro}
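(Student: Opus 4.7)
The plan is to derive an energy identity for $f(\Go) = \BGa(\Go)\BE_0\cdot\overline{\BE_0}$ that makes the sign of its imaginary part transparent. Starting from the expression (\ref{eq.defpola2}) and writing $\tilde{\BGve} := \BGve - \Gve_0\BI$ (which is compactly supported in $\Omega$), I would substitute $\overline{\BE_0} = \overline{\BE} + \overline{\nabla V_s}$ on the right-hand side to split
\begin{equation*}
f(\Go) = \int_\Omega \tilde{\BGve}\,\BE\cdot\overline{\BE}\,\md\Bx + \int_\Omega \tilde{\BGve}\,\BE\cdot\overline{\nabla V_s}\,\md\Bx .
\end{equation*}
To handle the cross term, I would insert $\BE = \BE_0 - \nabla V_s$ to rewrite it as $\int \tilde{\BGve}\BE_0\cdot\overline{\nabla V_s}\,\md\Bx - \int \tilde{\BGve}\nabla V_s\cdot\overline{\nabla V_s}\,\md\Bx$, and identify the first integral with $l_\Go(V_s) = a_\Go(V_s,V_s) = \int \BGve\,\nabla V_s\cdot\overline{\nabla V_s}\,\md\Bx$ by testing the weak formulation (\ref{eq.formvar}) against $v = V_s(\cdot,\Go)$, which is admissible since $V_s(\cdot,\Go) \in W_{1,-1}(\bbR^3)$ by Theorem \ref{analyticity}. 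Splitting $\BGve = \tilde{\BGve} + \Gve_0\BI$ in the latter integral makes the two $\tilde{\BGve}$-terms cancel, so the cross term reduces to the real quantity $\Gve_0 \int_{\bbR^3} |\nabla V_s|^2\,\md\Bx$.

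This produces the clean identity
\begin{equation*}
f(\Go) = \int_\Omega \big(\BGve(\Bx,\Go) - \Gve_0\BI\big)\BE(\Bx,\Go)\cdot\overline{\BE(\Bx,\Go)}\,\md\Bx + \Gve_0\int_{\bbR^3}|\nabla V_s(\Bx,\Go)|^2\,\md\Bx ,
\end{equation*}
from which H4 falls out. Taking imaginary parts kills the second term. Inside the inclusion $\CO$, one has $\tilde{\BGve} = (\Gve - \Gve_0)\BI$ which is real, so its contribution to $\operatorname{Im} f(\Go)$ vanishes; what remains is $\int_{\Omega\setminus\CO} \operatorname{Im}(\BGve\BE\cdot\overline{\BE})\,\md\Bx$, which is exactly the Hermitian form associated with the anti-Hermitian part $(\BGve - \BGve^*)/(2\ii)$ evaluated at $\BE$. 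For $\Go \in \bbR^+$ this quantity is nonnegative by the passivity assumption $\tilde{\mathrm{H}}4$, proving $\operatorname{Im} f(\Go) \geq 0$. In a transparency window $[\Go_-,\Go_+]$ where $\operatorname{Im}\BGve \equiv 0$, the integrand vanishes pointwise and $\operatorname{Im} f(\Go) = 0$ throughout $[\Go_-,\Go_+]$.

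The only step that needs real care is the derivation of the energy identity itself (the choice $v = V_s$ in the weak formulation and the cancellation of cross terms); the rest is essentially algebra. Note in particular that the reciprocity hypothesis $\tilde{\mathrm{H}}5$ plays no role here, so the conclusion holds for any (possibly non-symmetric) complex dielectric tensor satisfying $\tilde{\mathrm{H}}1$--$\tilde{\mathrm{H}}4$.
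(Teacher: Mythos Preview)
Your argument is correct and takes a genuinely different route from the paper. The paper starts from the absorbed-energy integral $I(R,\Go)=\int_{B(0,R)}\operatorname{Im}\big(\BGve\,\BE\cdot\overline{\BE}\big)\,\md\Bx\geq 0$, converts it via Green's identity into a surface integral on $\partial B(0,R)$, and then feeds in the far-field asymptotics (\ref{eq.asymp}) of $V$ and $\BE$ to identify the surviving term with $\operatorname{Im}\big(\BGa(\Go)\BE_0\cdot\overline{\BE_0}\big)$. You instead work entirely in the volume: testing the weak formulation (\ref{eq.formvar}) against $v=V_s$ and manipulating the cross terms gives the exact identity
\[
f(\Go)=\int_\Omega(\BGve-\Gve_0\BI)\BE\cdot\overline{\BE}\,\md\Bx+\Gve_0\int_{\bbR^3}|\nabla V_s|^2\,\md\Bx,
\]
from which $\operatorname{Im}f(\Go)=\int_{\Omega\setminus\CO}\operatorname{Im}(\BGve\,\BE\cdot\overline{\BE})\,\md\Bx$ follows immediately. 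Your approach is more self-contained: it never invokes the multipole expansion or any large-$R$ limit, and it produces the equality $\operatorname{Im}f(\Go)=I(R,\Go)$ in one stroke rather than recovering it asymptotically. The paper's route, on the other hand, makes the physical interpretation (radiated versus absorbed energy at infinity) more visible. Both proofs rely on the same pointwise inequality $\operatorname{Im}(\BGve\,\BE\cdot\overline{\BE})\geq 0$ drawn from $\tilde{\mathrm{H}}4$, and neither needs $\tilde{\mathrm{H}}5$, so your remark about reciprocity is consistent with the paper's.
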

 
 \begin{proof}
 Let $\BE_0$ be a fixed vector of  $\bbC^3$, $\Go$ a non-negative frequency and $R$ a positive real number satisfying $R>R_0$. We denote by $\BE$ the electrical field associated with the incident field $\BE_0$ by the equation (\ref{eq.quasis}). Then, by using $\tilde{\mathrm{H}}4$, one has that:
 $$
I(R,\Go)=\int_{B(0,R)} \operatorname {Im} \BGe(\Bx,\Go) \BE(\Bx,\Go)\cdot\overline{ \BE(\Bx,\Go)} \, \md \Bx=\operatorname {Im}\Big( \int_{B(0,R)} \BGe(\Bx,\Go) \BE(\Bx,\Go)\cdot\overline{ \BE(\Bx,\Go) }\, \md \Bx\Big) \geq 0.
 $$ 
We point out that $I(R,\Go)$ is constant with respect to $R$ for $R>R_0$ since $ \operatorname {Im} \BGe(\cdot,\Go)$ is compactly supported in $\Omega \subset B(0,R)$. Using the Green identity and the facts that $\BE=-\nabla \BV$ is divergence free (see \ref{eq.quasis}) and that $ \BGve(\Bx,\Go)=\Gve_0 \, \BI$ on $\partial B(0,R)$, one gets:
$$
I(R,\Go)=  \operatorname {Im} \Big(\int_{\partial B(0,R)}    \BV(\Bx,\Go) \cdot \Gve_0 \, \overline{  \BE(\Bx,\Go)\cdot \Bn}\, \md \Bx \Big) \geq 0.
 $$
 From the asymptotics (\ref{eq.asymp}) of $ \BV(\Bx,\Go)$ and the asymptotics of the electric field
 $$
  \BE(\Bx,\Go)=-\nabla V (\Bx,\Go) =\BE_0-\frac{\Bp(\Go)}{4\pi \Gve_0 |\Bx|^3}+3 \frac{(\Bp(\Go)\cdot \Bx) \,  \Bx }{4 \, \pi \Gve_0 |\Bx|^5}+O\left(\frac{1}{|\Bx|^4}\right)
 $$
one gets:
 $$
 I(R,\Go)=3 \,\operatorname {Im} \Big(\int_{\partial B(0,R)} \frac{(\Bp(\Go)\cdot \Bx)}{4 \, \pi \Gve_0 |\Bx|^4} \cdot \overline{\BE_0\cdot \Bx} \, \md \Bx\Big)+O\left(\frac{1}{|\Bx|}\right) \geq 0.
 $$
Using the fact that $\Bp(\Go)=\BGa(\Go) \BE_0$ leads to 
$$
\operatorname{Im} \Big(\int_{\partial B(0,R)} \frac{( \BGa(\Go) \BE_0 \cdot \Bx)  }{4 \, \pi \Gve_0 |\Bx|^4} \cdot \overline{\BE_0\cdot \Bx} \, \md \Bx\Big)\geq 0 \, .
$$
By virtue of the algebraic identity $$( \BGa(\Go) \BE_0 \cdot \Bx)\cdot \overline{\BE_0\cdot \Bx} = \Bx \Bx^{\top} \BGa(\Go)\BE_0 \cdot \overline{ \BE_0} \, ,$$ one finally obtains
$$
\operatorname{Im} \Big(\int_{\partial B(0,R)} \frac{ \Bx \Bx^{\top} \BGa(\Go) \BE_0 \cdot \overline{ \BE_0}}{4 \, \pi \Gve_0 |\Bx|^4}   \md \Bx\Big)= \frac{1}{3\Gve_0} \operatorname{Im} (\BGa(\Go)\BE_0 \cdot \overline{\BE_0 }) =\frac{1}{3\Gve_0} \operatorname{Im}f(\Go)\geq 0 \, .
$$
(to derive the last inequality, we use the identity: $\int_{\partial B(0,R)} \Bx \Bx^{\top} \md \Bx=4\, \pi/3 \, \BI$ which can be shown by a straightforward computation).

 In the particular case where $\operatorname{Im} \BGve(\Bx,\Go)=0$ for a. e. $\Bx\in \Omega \setminus \CO$, one has $I(R,\Go)=0$ and thus one deduces that $\operatorname{Im}f(\Go)=0$. This concludes the proof.

 \end{proof}
One concludes this subsection by the following theorem which defines a Herglotz function associated with the polarizability tensor.

\begin{Thm}\label{th.herglcloak}
Let $\BE_0$ be a non-zero fixed vector of $\bbC^3$ and $f$ the function defined by (\ref{eq.falpha}). 
If the cloak $\Omega\setminus \CO$ satisfies the hypotheses $\tilde{\mathrm{H}}1-7$, then
\begin{equation}\label{eq.vcloaking}
v(\Go)=\Go f(\sqrt{\Go})=\Go \BGa(\sqrt{\Go}) \BE_0\cdot \overline{ \BE_0}
\end{equation}
is a Herglotz function which is analytic on $\bbC\setminus{\bbR^{+}}$ and negative on $\bbR^{-*}$. Moreover, in its representation given by the Theorem \ref{thm.Herg}, the measure $\mm$ is supported in $\bbR^{+}$ and $\alpha$ is equal to $f_{\infty}=\BGa(\infty) \BE_0 \cdot \overline{\BE_0}$. 
\end{Thm}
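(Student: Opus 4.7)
The plan is to combine the three preceding propositions to verify that the scalar function $f(\Go)=\BGa(\Go)\BE_0\cdot\overline{\BE_0}$ fulfills the abstract hypotheses H1--H4 introduced in section \ref{First part}, and then to invoke Corollary \ref{cor.herg} verbatim. Concretely, Proposition \ref{prop.analyf} gives H1 (analyticity of $f$ on $\bbC^+$ and continuity on $\operatorname{cl}\bbC^+$), Proposition \ref{prop.hyp2-3} gives both H2 (because $f(\Go)\to f(\infty)=\BGa(\infty)\BE_0\cdot\overline{\BE_0}>0$ as $|\Go|\to\infty$, the positivity following from the fact that $\BGa(\infty)$ is the polarizability tensor of a purely dielectric inclusion with $\Gve>\Gve_0$ in vacuum, hence positive definite) and H3 (the symmetry $f(-\overline{\Go})=\overline{f(\Go)}$, for which the reciprocity hypothesis $\tilde{\mathrm{H}}5$ is exactly what allows one to dispense with the restriction to real-valued $\BE_0$), and finally Proposition \ref{prop.hyp4} gives H4 (passivity, $\Imag f(\Go)\geq 0$ on $\bbR^+$).

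Once H1--H4 are in force for $f$, the theorem becomes a direct corollary of the general construction carried out in subsection \ref{sub-secconstrHerg}. Indeed, applying Theorem \ref{th.Stielt} shows that $u(z):=f(\sqrt{-z})$ is a Stieltjes function positive on $\bbR^{+*}$, and then Corollary \ref{cor.herg} applied to this $u$ asserts precisely that
\[
v(z)=z\,u(-z)=z\,f(\sqrt{z})=z\,\BGa(\sqrt{z})\BE_0\cdot\overline{\BE_0}
\]
is a Herglotz function, analytic on $\bbC\setminus\bbR^+$, negative on $\bbR^{-*}$, whose Nevanlinna representation has measure $\mm$ supported in $\bbR^+$ and linear coefficient $\alpha$ equal to $f_\infty=\BGa(\infty)\BE_0\cdot\overline{\BE_0}$. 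Thus the theorem is nothing more than the specialization of Corollary \ref{cor.herg} to the scalar observable $f$ built from the polarizability tensor.

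The only steps where any real work is hidden are those already discharged in the three propositions. The most delicate of these is the verification of H3 in Proposition \ref{prop.hyp2-3}, where one needs to identify $V_s(\cdot,-\overline{\Go})$ with $\overline{V_s(\cdot,\Go)}$ through uniqueness of the Beppo--Levi solution and then exploit $\tilde{\mathrm{H}}5$ to remove the ``real-valued $\BE_0$'' restriction via the symmetry $\BGa(\Go)^\top=\BGa(\Go)$. The positivity $f_\infty>0$ in H2 also relies crucially on the classical positive-definiteness of the polarizability tensor of a homogeneous positive dielectric inclusion, which is imported as a black box. Since all of these ingredients have been established, the proof of Theorem \ref{th.herglcloak} reduces to assembling the citations and invoking Corollary \ref{cor.herg}; no further technical obstacle is expected.
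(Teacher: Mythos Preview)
Your proposal is correct and follows essentially the same approach as the paper: the paper's own proof is a single sentence stating that Propositions \ref{prop.analyf}, \ref{prop.hyp2-3}, and \ref{prop.hyp4} establish H1--H4 for $f$, whence Corollary \ref{cor.herg} applies. Your write-up is simply a more detailed unpacking of that same citation chain, including the roles of $\tilde{\mathrm{H}}5$ and the positive-definiteness of $\BGa(\infty)$, all of which the paper handles inside the cited propositions.
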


\begin{proof}
The proof is an immediate consequence of propositions  \ref{prop.analyf}, \ref{prop.hyp2-3} and \ref{prop.hyp4} which show that the function $f$ satisfies the hypotheses H1-4 and thus the function $v$ defined by (\ref{eq.vcloaking}) satisfies the Corollary \ref{cor.herg}. 
\end{proof}

\begin{Rem}
As the reciprocity principle $\tilde{\mathrm{H}}5$ is only required to prove $\mathrm{H}3$ for non real incident fields $\BE_0$ (see Proposition \ref{prop.hyp2-3}), Theorem \ref{th.herglcloak} still holds for non-reciprocal media under the additional condition that $\BE_0\in \bbR^3$. 
\end{Rem}

\subsection{Fundamental limits of broadband passive cloaking in quasi-statics}\label{sec.boundcloaking}

\subsubsection{General bounds on the polarizability tensors}

One assumes in the following that our cloak satisfies the hypotheses $\tilde{\mathrm{H}}1-7$.
One wants first to establish that the analyticity property of the polarizability tensor is sufficient to prove that $\BGa$ does not vanish on the whole frequency band $[\Go_-,\Go_+]$. 
Indeed, for any fixed non zero incident field $\BE_0 \in \bbC^3$, the function $f$ defined  by (\ref{eq.falpha}) is analytic on $\bbC^{+}$ and continuous on $\operatorname{cl}\bbC^{+}$. Thus, if by contradiction $\BGa$ vanishes on $[\Go_-,\Go_+]$, so does $f$. Then, using the Schwarz reflection principle and the analytic continuation, one deduces that $f$ vanishes also on the whole upper-half plane $\bbC^{+}$, which contradicts the fact that $f$ tends to $f_{\infty}>0$, when $|\Go|\to \infty$.
However, such analytic continuation arguments are not of practical interest. Indeed, it is possible for instance for a polynomial to be arbitrary close to $0$ on one disk and arbitrarily closed to $1$ on another disjoint disk (see \cite{Vasquez:2012:MAT}). But, knowing that one can construct a Herglotz function associated with $f$ (see Theorem \ref{th.herglcloak}) gives us meaningful inequalities by using the bounds derived in section \ref{sec-boundanalycomp}. These inequalities establish fundamental limits on the cloaking effect over a frequency band $[\Go_-,\Go_+]$, that we present in this subsection.

For any non-zero incident field $\BE_0\in \bbC^3$, the function $v$ defined by (\ref{eq.vcloaking}) satisfies the bound (\ref{eq.sumrulesmu}). In the particular case of Dirac measures: $m=\delta_{\xi}, \, \xi \in \bbR$, which optimizes the right hand side of (\ref{eq.sumrulesmu}) (see Theorem \ref{thm.mesopt}), this inequality becomes the bound (\ref{eq.sumruledelta}), that we are recalling here in the case of our cloaking application:
\begin{equation}\label{eq.transparencywindowcloak}
\lim_{y\to 0^{+}}  \int_{x_-}^{x_+} \Imag \left( \displaystyle \frac{1}{\xi-(x +\ii y)\BGa(\sqrt{x+\ii y})\BE_0\cdot \overline{\BE_0}}\right) \, \md x \leq  \frac{\pi}{\BGa(\infty)\BE_0\cdot \overline{\BE_0}}, \  \forall \xi \in \bbR,
\end{equation}
which holds for any interval $[x_-,x_+]$ of  $\bbR^{+*}$. One notices that the geometry and the dielectric contrast of the inclusion $\CO$ are encoded in the expression of the polarizability tensor $\BGa(\infty)$. For instance, in the case of circular inclusion of radius $R$, one has
$$\BGa(\infty)=4\pi R^3  \Gve_0 \, \frac{\Gve-\Gve_0}{\Gve+2\Gve_0} \BI\, .$$
Explicit expressions of $\BGa(\infty)$ can be also derived for ellipsoidal shapes (see for instance \cite{Osborn:1945:DFG,Stoner:1945:DFE}).

\subsubsection{The case of a transparency window}
We are now interested in deriving a more explicit version of the bound (\ref{eq.transparencywindowcloak})
in the case  where $[\Go_-,\Go_+]$ is a transparency window, that is a frequency band for which $\operatorname{Im} \BGve(\Bx,\cdot)=0$  for a. e. $\Bx$ in $\Omega\setminus \CO$. In other words, we assume that the cloak material is composed of a material that one can consider lossless in this frequency range. In particular, this latter condition implies that the polarizability tensor satisfies also $\operatorname{Im} \BGa(\Go)=0$ on $[\Go_-,\Go_+]$ (see Proposition \ref{prop.hyp4}). Thus, one can directly apply the bound (\ref{eq.boundtranspf}) derived  in Proposition \ref{Pro.boundtransp}  to obtain
\begin{equation}\label{eq.boundscalarlevel}
\Go_0^2 \, (\BGa(\Go_0)-\BGa(\infty))\BE_0 \cdot \overline{\BE_0} \leq \Go^2 (\BGa(\Go)-\BGa(\infty))\BE_0 \cdot \overline{  \BE_0}, \  \forall \Go, \Go_0 \in [\Go_-,\Go_+] \mbox{ such that } \ \Go_0\leq \Go.
\end{equation}
As, this last inequality holds for any $\BE_0\in \bbC^3$, one deduces that
\begin{equation}\label{eq.transpapolabound}
\Go_0^2 \, (\BGa(\Go_0)-\BGa(\infty)) \leq \Go^2 \,(\BGa(\Go)-\BGa(\infty)), \, \  \forall \Go, \Go_0 \in [\Go_-,\Go_+] \mbox{ such that } \ \Go_0\leq \Go
\end{equation}
which is to be interpreted as a matrix inequality ($\BA\leq\BB$ if and only if $\BB-\BA$ is positive semidefinite).
We want to emphasize that this bound is sharp in the sense that there exists an analytic function $\BGa$ (given by a Drude type model) such that $f(\cdot)=\BGa(\cdot)\BE_0\cdot \overline{\BE_0}$ satisfies the properties H1-4  (except the continuity at $\Go=0$) for any $\BE_0\in \bbC^3$, namely
\begin{equation*}
\BGa(\Go)=\BGa(\infty)-\frac{\Go_0^2[\BGa(\infty)-\BGa(\Go_0)]}{\Go^2} \ \mbox{ with } \ \BGa(\Go_0)\leq \BGa(\infty)
\end{equation*}
for which one has equality in (\ref{eq.transpapolabound}). This function is singular at $\Go=0$, so it does not satisfy completely the hypothesis H1, but the continuity assumption on the real line in H1 can be weakened as we point out in Remarks \ref{Rem.DrudeLorentzmodel} and \ref{Rem.weakhypcont}.

Now coming back to our initial cloaking problem, if one can cloak the inclusion at one frequency $\Go_0\in [\Go_-,\Go_+]$, and thus if  $\BGa(\Go_0)=0$ then the bound (\ref{eq.transpapolabound}) implies
\begin{eqnarray*}
 \BGa(\Go) & \leq &  -\BGa(\infty) \frac{\Go^2_0-\Go^2}{\Go^2} \  \mbox{ if }  \ \Go_-\leq  \Go\leq \Go_0\\[5pt]
 & \geq & \BGa(\infty) \frac{\Go^2-\Go^2_0}{\Go^2}\ \mbox{ if } \  \Go_0 \leq \Go \leq \Go_+ \, ,
\end{eqnarray*}
which obviously forces $\BGa(\Go)$ to be non-zero away from the frequency $\Go_0$ (provided one is still in the transparency window $[\Go_-,\Go_+]$ where there is no absorption). Thus, one cannot achieve broad  band passive cloaking in a transparency window.

\subsubsection{The lossy case}
The bound (\ref{eq.transpapolabound}) is only valid if $[\Go^-,\Go^+]$ is a transparency window and thus does not hold if the cloak is a lossy material over this frequency range. Nevertheless, for a lossy cloak, one can apply the bounds (\ref{eq.sumrulelog}), (\ref{eq.bound1losscase}), (\ref{eq.bound2losscase}) derived in section \ref{lossy-mat} to the function $f$. In particular, the bound (\ref{eq.bound2losscase}) takes the form:
\begin{equation}\label{eq.bound4losscase}
\frac{1}{4}(\Go_+^2-\Go_-^2) \BGa(\infty)\BE_0 \cdot \overline{\BE_0} \leq \max_{x\in[\Go_-,\Go_+]}|\omega^2 \BGa(\Go)\BE_0 \cdot \overline{\BE_0}|, \, \forall \BE_0 \in \bbC^3.
\end{equation}
This bound gives a limitation to the cloaking effect by controlling from below the maximum of the function $ \Go \to \omega^2 \BGa(\Go)\BE_0 \cdot \overline{\BE_0}$ by a positive quantity depending both on the frequency bandwidth: $\Go_+-\Go_-$ and on the geometry and the dielectric contrast of the inclusion with the term $\BGa(\infty)\BE_0 \cdot \overline{\BE_0}$.
However, from an experimental perspective the more general bounds (\ref{eq.sumrulelog}) or (\ref{eq.bound1losscase}) are more
meaningful since the value of the left hand side of (\ref{eq.bound4losscase}) would be drastically changed if there was an extremely narrow resonant spike in $f$ in the considering interval and such a spike would 
be difficult to experimentally detect. 

\begin{Rem}
For the sake of generality, 
we point out that all the bounds derived in this subsection, with the exception of (\ref{eq.transpapolabound}), which does not hold at the tensor level (but only at the scalar level  (\ref{eq.boundscalarlevel})) are still satisfied for real-valued incident fields $\BE_0$ if the medium does not satisfy the reciprocity principle $\tilde{\mathrm H}5$. Thus, one has proved also that one cannot achieve broadband cloaking with non reciprocal materials. 
\end{Rem}

\section*{Acknowledgements}
G.W. Milton is grateful to the Mittag-Leffler Institute for hosting his visit to Sweden during the program on Inverse Problems and Applications, where this work was initiated, and both authors are grateful to the
National Science Foundation for support through grant DMS-1211359. Additionally, they are grateful to the Institute for Mathematics and its applications in Minneapolis for supporting their visit there in the Fall 2016.

\bibliographystyle{siam}
\bibliography{tcbook,newref}

\ifx \bblindex \undefined \def \bblindex #1{} \fi\ifx \bblindex \undefined \def
  \bblindex #1{} \fi\ifx \bbljournal \undefined \def \bbljournal #1{{\em
  #1}\index{#1@{\em #1}}} \fi\ifx \bblnumber \undefined \def \bblnumber #1{{\bf
  #1}} \fi\ifx \bblvolume \undefined \def \bblvolume #1{{\bf #1}} \fi\ifx
  \noopsort \undefined \def \noopsort #1{} \fi
\begin{thebibliography}{10}

\bibitem{Alu:2005:ATP}
{\sc A.~Al{\'u} and N.~Engheta}, {\em Achieving transparency with plasmonic and
  metamaterial coatings}, Physical Review E (Statistical physics, plasmas,
  fluids, and related interdisciplinary topics), 72 (2005), p.~0166623.

\bibitem{Alu:2008:PMC}
\leavevmode\vrule height 2pt depth -1.6pt width 23pt, {\em Plasmonic and
  metamaterial cloaking: physical mechanisms and potentials}, Journal of Optics
  A: Pure and Applied Optics, 10 (2008), p.~093002.

\bibitem{Ammari:2013:ALR}
{\sc H.~Ammari, G.~Ciraolo, H.~Kang, H.~Lee, and G.~W. Milton}, {\em Anomalous
  localized resonance using a folded geometry in three dimensions}, Proceedings
  of the Royal Society A: Mathematical, Physical, \& Engineering Sciences, 469
  (2013), p.~20130048.
\newblock Also available as arXiv:1301.5712 [math-ph].

\bibitem{Ammari:2013:STN}
\leavevmode\vrule height 2pt depth -1.6pt width 23pt, {\em Spectral theory of a
  {Neumann--Poincar{\'e}}-type operator and analysis of cloaking due to
  anomalous localized resonance}, Archive for Rational Mechanics and Analysis,
  208 (2013), pp.~667--692.
\newblock See also arXiv:1109.0479 [math.AP].

\bibitem{Ammari:2013:STNII}
\leavevmode\vrule height 2pt depth -1.6pt width 23pt, {\em Spectral theory of a
  {Neumann--Poincar{\'e}}-type operator and analysis of cloaking due to
  anomalous localized resonance {II}}, Contemporary Mathematics, 615 (2014),
  pp.~1--14.

\bibitem{Ammari:2007:PMT}
{\sc H.~Ammari and H.~Kang}, {\em Polarization and moment tensors: with
  applications to inverse problems and effective medium theory}, vol.~162,
  Springer Science \& Business Media, New York, 2007.

\bibitem{Ando:2015:SPN}
{\sc K.~Ando, Y.-G. Ji, H.~Kang, K.~Kim, and S.~Yu}, {\em Spectral properties
  of the neumann-poincar{\'e} operator and cloaking by anomalous localized
  resonance for the elasto-static system},  (2015).
\newblock Submitted. Available as arXiv:1510.00989 [math.AP].

\bibitem{Baker:1981:PAB}
{\sc G.~A. {Baker Jr.} and P.~R. Graves-Morris}, {\em {Pad{\'e}} Approximants:
  Basic Theory. {Part I}. Extensions and Applications. {Part II}}, vol.~13 \&
  14 of Encyclopedia of Mathematics and its Applications,
  Ad{\-d}i{\-s}on-Wes{\-l}ey, Reading, Massachusetts, 1981.

\bibitem{Berg:2008:SPBS}
{\sc C.~Berg}, {\em {Stieltjes--Pick--Bernstein--Schoenberg} and their
  connection to complete monotonicity}, in Positive Definite Functions: From
  {Schoenberg} to Space--Time Challenges, J.~Mateu and E.~Porcu, eds.,
  Editorial Universitat Jaume I, Department of Mathematics, Castell{\'o}n de la
  Plana, Spain, 2008, pp.~15--45.

\bibitem{Bergman:1978:APC}
{\sc D.~J. Bergman}, {\em Analytical properties of the complex effective
  dielectric constant of a composite medium with applications to the derivation
  of rigorous bounds and to percolation problems}, in Electrical Transport and
  Optical Properties of Inhomogeneous Media, J.~C. Garland and D.~B. Tanner,
  eds., vol.~40 of AIP Conference Proceedings, Woodbury, New York, 1978,
  American Institute of Physics, pp.~46--61.

\bibitem{Bernland:2010:SRC}
{\sc A.~Bernland, A.~Luger, and M.~Gustafsson}, {\em Sum rules and constraints
  on passive systems}, Technical Report LUTEDX/(TEAT-7193)/1-31/(2010), 2010.

\bibitem{Bernland:2011:SRC}
{\sc A.~Bernland, A.~Luger, and M.~Gustafsson}, {\em Sum rules and constraints
  on passive systems}, Journal of Physics A: Mathematical and Theoretical, 44
  (2011), p.~145205.

\bibitem{Bonifasi:2009:Aar}
{\sc C.~Bonifasi-Lista, E.~Cherkaev, and Y.~Yeni}, {\em Analytical approach to
  recovering bone porosity from effective complex shear modulus.}, Journal of
  biomechanical engineering, 13 (2009), p.~121003.

\bibitem{BonnetBenDhia:2012:TIP}
{\sc A.~{Bonnet-BenDhia}, L.~Chesnel, and P.~C. Jr.}, {\em T-coercivity for
  scalar interface problems between dielectrics and metamaterials},
  Mathematical Modelling and Numerical Analysis, 46 (2012), pp.~1363--1387.

\bibitem{Bouchitte:2010:CSO}
{\sc G.~Bouchitt{\'e} and B.~Schweizer}, {\em Cloaking of small objects by
  anomalous localized resonance}, Quarterly Journal of Mechanics and Applied
  Mathematics, 63 (2010), pp.~437--463.

\bibitem{Bruno:2007:SCS}
{\sc O.~P. Bruno and S.~Lintner}, {\em Superlens-cloaking of small dielectric
  bodies in the quasistatic regime}, Journal of Applied Physics, 102 (2007),
  p.~124502.

\bibitem{Cassier:2016:STM}
{\sc M.~Cassier, C.~Hazard, and P.~Joly}, {\em Spectral theory for maxwell's
  equations at the interface of a metamaterial. {Part I: Generalized Fourier
  transform.}}, available online on Arxiv at
  https://128.84.21.199/abs/1610.03021,  (2016).

\bibitem{Cessenat:1996:MME}
{\sc M.~Cessenat}, {\em Mathematical Methods in Electromagnetism: Linear Theory
  and Applications}, vol.~41 of Series on advances in mathematics for applied
  sciences, World Scientific Publishing Co., Singapore~/ Philadelphia~/ River
  Edge, New Jersey, 1996.

\bibitem{Christodoulides:2004:GVD}
{\sc Y.~T. Christodoulides and D.~B. Pearson}, {\em Generalized value
  distribution for {Herglotz} functions and spectral theory.}, Mathematical
  Physics, Analysis and Geometry, 7 (2004), pp.~309--331.

\bibitem{Dautray:1974:MAN}
{\sc R.~Dautray and J.~L. Lions}, {\em Mathematical Analysis and Numerical
  Methods for Science and Technology: {Volume} 1 {Physical} Origins and
  Classical Methods.}, Springer-Verlag, Berlin, 2000.

\bibitem{Dolin:1961:PCT}
{\sc L.~S. Dolin}, {\em To the possibility of comparison of three-dimensional
  electromagnetic systems with nonuniform anisotropic filling}, Izvestiya
  Vysshikh Uchebnykh Zavedeni{\u{\i}}. Radiofizika (see
  \url{http://www.math.utah.edu/~milton/DolinTrans2.pdf} for an english
  translation of Dolin's paper), 4 (1961), pp.~964--967.

\bibitem{Evans:2008:PDE}
{\sc L.~Evans}, {\em Partial Differential Equations}, American Mathematical
  Society, Providence, 2008.

\bibitem{Gesztesy:2000:MVH}
{\sc F.~Gesztesy and E.~Tsekanovskii}, {\em On matrix-valued {Herglotz}
  functions}, Mathematische Nachrichten, 218 (2000), pp.~61--138.

\bibitem{Golden:1983:BEP}
{\sc K.~Golden and G.~Papanicolaou}, {\em Bounds for effective parameters of
  heterogeneous media by analytic continuation}, Communications in Mathematical
  Physics, 90 (1983), pp.~473--491.

\bibitem{Greenleaf:2009:CDE}
{\sc A.~Greenleaf, Y.~Kurylev, M.~Lassas, and G.~Uhlmann}, {\em Cloaking
  devices, electromagnetic wormholes, and transformation optics}, SIAM Review,
  51 (2009), pp.~3--33.

\bibitem{Greenleaf:2003:ACC}
{\sc A.~Greenleaf, M.~Lassas, and G.~Uhlmann}, {\em Anisotropic conductivities
  that cannot be detected by {EIT}}, Physiological Measurement, 24 (2003),
  pp.~413--419.

\bibitem{Greenleaf:2003:NCI}
\leavevmode\vrule height 2pt depth -1.6pt width 23pt, {\em On non-uniqueness
  for {Calder{\'o}n}'s inverse problem}, Mathematical Research Letters, 10
  (2003), pp.~685--693.

\bibitem{Vasquez:2009:AEC}
{\sc F.~{Guevara Vasquez}, G.~W. Milton, and D.~Onofrei}, {\em Active exterior
  cloaking for the {$2$D} {Laplace} and {Helmholtz} equations}, Physical Review
  Letters, 103 (2009), p.~073901.

\bibitem{Vasquez:2009:BEC}
\leavevmode\vrule height 2pt depth -1.6pt width 23pt, {\em Broadband exterior
  cloaking}, Optics Express, 17 (2009), pp.~14800--14805.

\bibitem{Vasquez:2012:MAT}
\leavevmode\vrule height 2pt depth -1.6pt width 23pt, {\em Mathematical
  analysis of the two dimensional active exterior cloaking in the quasistatic
  regime}, Analysis and Mathematical Physics, 2 (2012), pp.~231--246.

\bibitem{Gustafsson:2010:SRP}
{\sc M.~Gustafsson and D.~Sj{\"o}berg}, {\em Sum rules and physical bounds on
  passive metamaterials}, New Journal of Physics, 12 (2010), p.~043046.

\bibitem{Gustafsson:2010:TDS}
{\sc M.~Gustafsson and D.~{Sj\"{o}berg}}, {\em Time-domain approach to the
  forward scattering sum rule}, Proceedings of the Royal Society A, 466 (2010),
  pp.~579–--3592.

\bibitem{Hashemi:2012:DBP}
{\sc H.~Hashemi, C.-W. Qiu, A.~P. McCauley, J.~D. Joannopoulos, and S.~G.
  Johnson}, {\em Diameter-bandwidth product limitation of isolated-object
  cloaking}, Physical Review A, 86 (2012), p.~013804.

\bibitem{Henrici:1993:ACC}
{\sc P.~Henrici}, {\em Applied and computational complex analysis, discrete
  {Fourier} analysis, {Cauchy} integrals, construction of conformal maps,
  univalent functions}, vol.~3, John Wiley \& Sons, 1993.

\bibitem{Jackson:1999:CE}
{\sc J.~D. Jackson}, {\em Classical Electrodynamics}, John Wiley and Sons, New
  York, NY, third~ed., 1999.

\bibitem{Kang:2008:SPS}
{\sc H.~Kang and G.~W. Milton}, {\em Solutions to the {P{\'o}lya--Szeg{\H{o}}}
  conjecture and the {Weak Eshelby Conjecture}}, Archive for Rational Mechanics
  and Analysis, 188 (2008), pp.~93--116.

\bibitem{Kato:1995:PTO}
{\sc T.~Kato}, {\em Perturbation Theory for Linear Operators}, Classics in
  Mathematics, Springer-Verlag, Berlin, Germany~/ Heidelberg, Germany~/ London,
  UK~/ etc., 1995.

\bibitem{Kerker:1975:IB}
{\sc M.~Kerker}, {\em Invisible bodies}, Journal of the Optical Society of
  America, 65 (1975), pp.~376--379.

\bibitem{Kettunen:2014:AEA}
{\sc H.~Kettunen, M.~Lassas, and P.~Ola}, {\em On absence and existence of the
  anomalous localized resonance without the quasi-static approximation},
  (2014).
\newblock Submitted. Available as arXiv:1406.6224 [math-ph].

\bibitem{Kohn:2012:VPC}
{\sc R.~V. Kohn, J.~Lu, B.~Schweizer, and M.~I. Weinstein}, {\em A variational
  perspective on cloaking by anomalous localized resonance}, Communications in
  Mathematical Physics, 328 (2014), pp.~1--27.
\newblock Available as arXiv:1210.4823 [math.AP].

\bibitem{Kohn:1984:IUC}
{\sc R.~V. Kohn and M.~S. Vogelius}, {\em Inverse problems}, in {Proceedings of
  the Symposium in Applied Mathematics of the American Mathematical Society and
  the Society for Industrial and Applied Mathematics, New York, April 12--13,
  1983}, D.~W. McLaughlin, ed., vol.~14 of SIAM AMS Proceedings, Providence,
  RI, USA, 1984, American Mathematical Society, pp.~113--123.

\bibitem{Lai:2009:CMI}
{\sc Y.~Lai, H.~Chen, Z.-Q. Zhang, and C.~T. Chan}, {\em Complementary media
  invisibility cloak that cloaks objects at a distance outside the cloaking
  shell}, Physical Review Letters, 102 (2009), p.~093901.

\bibitem{Landau:1984:ECM}
{\sc L.~D. Landau, E.~M. Lifshitz, and L.~P. Pitaevski{\u\i}}, {\em
  Electrodynamics of Continuous Media}, vol.~8 of Landau and Lifshitz Course of
  Theoretical Physics, Elsevier Butterworth-Heinemann, Oxford, UK, second~ed.,
  1984.

\bibitem{Leonhardt:2006:OCM}
{\sc U.~Leonhardt}, {\em Optical conformal mapping}, Science, 312 (2006),
  pp.~1777--1780.

\bibitem{Leonhardt:2009:BIN}
{\sc U.~Leonhardt and T.~Tyc}, {\em Broadband invisibility by non-{Euclidean}
  cloaking}, Science, 323 (2009), pp.~110--112.

\bibitem{Li:2016:QCA}
{\sc H.~Li, J.~Li, and H.~Liu}, {\em On quasi-static cloaking due to anomalous
  localized resonance in $\mathbb{R}^3$}, SIAM Journal on Applied Mathematics,
  75 (2016), pp.~1245--1260.

\bibitem{Lind-Johansen:2009:PLF}
{\sc {\O}.~Lind-Johansen, K.~Seip, and J.~Skaar}, {\em The perfect lens on a
  finite bandwidth}, Journal of Mathematical Physics, 50 (2009), p.~012908.

\bibitem{Mattner:2001:CDI}
{\sc L.~Mattner}, {\em Complex differentiation under the integral}, Nieuw
  Archief voor Wiskunde (Groningen), 5/2 (2001), pp.~32--35.

\bibitem{McPhedran:2009:CPR}
{\sc R.~C. McPhedran, N.-A.~P. Nicorovici, L.~C. Botten, and G.~W. Milton},
  {\em Cloaking by plasmonic resonance among systems of particles: cooperation
  or combat?}, Comptes Rendus Physique, 10 (2009), pp.~391--399.

\bibitem{Meklachi:2016:SAL}
{\sc T.~Meklachi, G.~W. Milton, D.~Onofrei, A.~E. Thaler, and G.~Funchess},
  {\em Sensitivity of anomalous localized resonance phenomena with respect to
  dissipation}, Quarterly of Applied Mathematics, 74 (2016), pp.~201--234.

\bibitem{Miller:2006:PC}
{\sc D.~A.~B. Miller}, {\em On perfect cloaking}, Optics Express, 14 (2006),
  pp.~12457--12466.

\bibitem{Miller:2014:FLE}
{\sc O.~D. Miller, C.~W. Hsu, M.~T.~H. Reid, W.~Qiu, B.~G. DeLacy, J.~D.
  Joannopoulos, M.~Solja{\v{c}}i{\'c}, and S.~G. Johnson}, {\em Fundamental
  limits to extinction by metallic nanoparticles}, Physical Review Letters, 112
  (2014), p.~123903.

\bibitem{Milton:1979:TST}
{\sc G.~W. Milton}, {\em Theoretical studies of the transport properties of
  inhomogeneous media}, Unpublished report TP/79/1, University of Sydney,
  Sydney, Australia, 1979.
\newblock Unpublished report. (Available on request from the author).

\bibitem{Milton:1981:BCP}
\leavevmode\vrule height 2pt depth -1.6pt width 23pt, {\em Bounds on the
  complex permittivity of a two-component composite material}, Journal of
  Applied Physics, 52 (1981), pp.~5286--5293.

\bibitem{Milton:2002:TC}
{\sc G.~W. Milton}, {\em The Theory of Composites}, vol.~6 of Cambridge
  Monographs on Applied and Computational Mathematics, Cambridge University
  Press, Cambridge, UK, 2002, pp.~295--298.
\newblock Series editors: P. G. Ciarlet, A. Iserles, Robert V. Kohn, and M. H.
  Wright.

\bibitem{Milton:1997:FFR}
{\sc G.~W. Milton, D.~J. Eyre, and J.~V. Mantese}, {\em Finite frequency range
  {Kramers-Kronig} relations: {Bounds} on the dispersion}, Physical Review
  Letters, 79 (1997), pp.~3062--3065.

\bibitem{Milton:2006:CEA}
{\sc G.~W. Milton and N.-A.~P. Nicorovici}, {\em On the cloaking effects
  associated with anomalous localized resonance}, Proceedings of the Royal
  Society A: Mathematical, Physical, \& Engineering Sciences, 462 (2006),
  pp.~3027--3059.

\bibitem{Milton:2008:SFG}
{\sc G.~W. Milton, N.-A.~P. Nicorovici, R.~C. McPhedran, K.~Cherednichenko, and
  Z.~Jacob}, {\em Solutions in folded geometries, and associated cloaking due
  to anomalous resonance}, New Journal of Physics, 10 (2008), p.~115021.

\bibitem{Mon:2014:PBE}
{\sc F.~Monticone and A.~Al\'u}, {\em Physical bounds on electromagnetic
  invisibility and the potential of superconducting cloaks}, {Photonics and
  Nanostructures - Fundamentals and Applications, Special issue for
  metamaterials}, 12 (2014), pp.~330--339.

\bibitem{Mon:2016:IEP}
\leavevmode\vrule height 2pt depth -1.6pt width 23pt, {\em Invisibility
  exposed: physical bounds on passive cloaking}, Optica, 3 (2016),
  pp.~718--724.

\bibitem{Nedelec:2001:AEE}
{\sc J.-C. Nedelec}, {\em Acoustic and Electromagnetic Equations: Integral
  Representations for Harmonic Problems}, vol.~144 of Applied Mathematical
  Sciences, Springer Science \& Business Media, New York, NY, 2001.

\bibitem{Nevanlinna:1922:AEF}
{\sc R.~Nevanlinna}, {\em Asymptotische {Entwicklungen das Stieltjessche
  Momentenproblem}}, Annales Academiae Scientiarum Fennicae, Series A, 18
  (1922).

\bibitem{Nguyen:2015:CAL}
{\sc H.-M. Nguy{\^e}n}, {\em Cloaking via anomalous localized resonance for
  doubly complementary media in the quasistatic regime}, Journal Of The
  European Mathematical Society, 17 (2015), pp.~1327--1365.

\bibitem{Nguyen:2016:EWP}
\leavevmode\vrule height 2pt depth -1.6pt width 23pt, {\em Cloaking an
  arbitrary object via anomalous localized resonance: the cloak is independent
  of the object.},  (2016).
\newblock Available as arXiv:1607.06492.

\bibitem{Nguyen:2016:CVA}
\leavevmode\vrule height 2pt depth -1.6pt width 23pt, {\em Cloaking via
  anomalous localized resonance for doubly complementary media in the finite
  frequency regime},  (2016).
\newblock Available as arXiv:1511.08053 [math.AP].

\bibitem{Nguyen:2016:LAP}
{\sc H.-M. Nguyen}, {\em Limiting absorption principle and well-posedness for
  the {Helmholtz} equation with sign changing coefficients}, Journal de
  Math{\'e}matiques Pures et Appliqu{\'e}es, 106 (2016), pp.~342--374.

\bibitem{Nguyen:2015:CAL1}
{\sc H.-M. Nguy{\^e}n and L.~H. Nguy{\^e}n}, {\em Cloaking using complementary
  media for the {Helmholtz} equation and a three spheres inequality for second
  order elliptic equations}, Transactions of The American Mathematical Society,
  Series B, 2 (2015), pp.~93--112.

\bibitem{Nguyen:2016:CCM}
{\sc L.~H. Nguy{\^e}n}, {\em Cloaking using complementary media in the
  quasistatic regime}, Annales de l'Institut Henri Poincar{\'e}. Analyse non
  lin{\'e}aire,  (2016).
\newblock In press. Available online.

\bibitem{Nicorovici:1994:ODP}
{\sc N.~A. Nicorovici, R.~C. McPhedran, and G.~W. Milton}, {\em Optical and
  dielectric properties of partially resonant composites}, Physical Review B
  (Solid State), 49 (1994), pp.~8479--8482.

\bibitem{Nicorovici:2011:RLD}
{\sc N.-A.~P. Nicorovici, R.~C. McPhedran, and L.~C. Botten}, {\em Relative
  local density of states and cloaking in finite clusters of coated cylinders},
  Waves in Random and Complex Media. Propagation, Scattering and Imaging, 21
  (2011), pp.~248--277.

\bibitem{Nicorovici:2008:FWC}
{\sc N.-A.~P. Nicorovici, R.~C. McPhedran, S.~Enoch, and G.~Tayeb}, {\em Finite
  wavelength cloaking by plasmonic resonance}, New Journal of Physics, 10
  (2008), p.~115020.

\bibitem{Nicorovici:2007:QCT}
{\sc N.-A.~P. Nicorovici, G.~W. Milton, R.~C. McPhedran, and L.~C. Botten},
  {\em Quasistatic cloaking of two-dimensional polarizable discrete systems by
  anomalous resonance}, Optics Express, 15 (2007), pp.~6314--6323.

\bibitem{Norris:2015:AIE}
{\sc A.~N. Norris}, {\em Acoustic integrated extinction}, Proceedings of the
  Royal Society of London. Series {A}, 471 (2015), p.~20150008.

\bibitem{Norris:2012:SAA}
{\sc A.~N. Norris, F.~A. Amirkulova, and W.~J. Parnel}, {\em Source amplitudes
  for active exterior cloaking}, Inverse Problems, 28 (2012), p.~105002.

\bibitem{Norris:2014:AEC}
{\sc A.~N. Norris, F.~A. Amirkulova, and W.~J. Parnell}, {\em Active
  elastodynamic cloaking}, Mathematics and Mechanics of Solids : MMS, 19
  (2014), pp.~603--625.

\bibitem{Nussenzveig:1972:CDR}
{\sc H.~M. Nussenzveig}, {\em Causality and dispersion relations}, Academic
  Press, New York, 1972.

\bibitem{ONeill:2015:ACI}
{\sc J.~{O'Neill}, {\"O}.~Selsil, R.~C. McPhedran, A.~B. Movchan, and N.~V.
  Movchan}, {\em Active cloaking of inclusions for flexural waves in thin
  elastic plates}, Quarterly Journal of Mechanics and Applied Mathematics, 68
  (2015), pp.~263--288.

\bibitem{ONeill:2016:ACR}
{\sc J.~{O'Neill}, {\"O}.~Selsil, R.~C. McPhedran, A.~B. Movchan, N.~V.
  Movchan, and C.~H. Moggach}, {\em Active cloaking of resonant coated
  inclusions for waves in membranes and kirchhoff plates}, Quarterly Journal of
  Mechanics and Applied Mathematics, 69 (2016), pp.~115--159.

\bibitem{Onofrei:2012:AMF}
{\sc D.~Onofrei}, {\em On the active manipulation of fields and applications:
  {I}. {The} quasistatic case}, Inverse Problems, 28 (2012), p.~105009.

\bibitem{Onofrei:2016:ALR}
{\sc D.~Onofrei and A.~E. Thaler}, {\em Anomalous localized resonance phenomena
  in the nonmagnetic, finite-frequency regime},  (2016).
\newblock Submitted. Available as arXiv:1605.08954 [math-ph].

\bibitem{Osborn:1945:DFG}
{\sc J.~A. Osborn}, {\em Demagnetizing factors of the general ellipsoid},
  Physical Review, 67 (1945), pp.~351--357.

\bibitem{Pecseli:2000:FPS}
{\sc H.~L. P\'{e}cseli}, {\em Fluctuations in physical systems.}, Cambridge
  University Press, 2000.

\bibitem{Pendry:2006:CEM}
{\sc J.~B. Pendry, D.~Schurig, and D.~R. Smith}, {\em Controlling
  electromagnetic fields}, Science, 312 (2006), pp.~1780--1782.

\bibitem{Purcell:1969:AEL}
{\sc E.~M. Purcell}, {\em On the absorption and emission of light by
  interstellar grains.}, The Astrophysical Journal, 158 (1969), pp.~433--440.

\bibitem{Slvanayagam:2012:AEC}
{\sc M.~Selvanayagam and G.~V. Eleftheriades}, {\em An active electromagnetic
  cloak using the equivalence principle}, IEEE Antennas and Wireless
  Propagation Letters, 11 (2012), pp.~1226--1229.

\bibitem{Sohl:2008:DRS}
{\sc C.~Sohl}, {\em Dispersion Relations in Scattering and Antenna Problems},
  {Ph.D.} thesis, available online at
  http://lup.lub.lu.se/search/record/1221227, Lund University, 2008.

\bibitem{Stoner:1945:DFE}
{\sc E.~C. Stoner}, {\em The demagnetizing factors for ellipsoids},
  Philosophical Magazine, 36 (1945), pp.~803--820.

\bibitem{Tip:2004:LDD}
{\sc A.~Tip}, {\em Linear dispersive dielectrics as limits of {Drude-Lorentz}
  systems}, Physical Review E (Statistical physics, plasmas, fluids, and
  related interdisciplinary topics), 69 (2004), p.~016610.

\bibitem{Veselago:1967:ESS}
{\sc V.~G. Veselago}, {\em The electrodynamics of substances with
  simultaneously negative values of $ \epsilon $ and $ \mu $}, Uspekhi
  Fizicheskikh Nauk, 92 (1967), pp.~517--526.
\newblock English translation in {\it{Soviet Physics Uspekhi}}
  \bblvolume{10}(\bblnumber{4}):509--514 (1968).

\bibitem{Welters:2014:SLL}
{\sc A.~T. Welters, Y.~Avniel, and S.~G. Johnson}, {\em Speed-of-light
  limitations in passive linear media}, Physical Review A (Atomic, Molecular,
  and Optical Physics), 90 (2014), p.~023847.

\bibitem{Yaghjian:2006:PWS}
{\sc A.~D. Yaghjian and T.~B. Hansen}, {\em Plane-wave solutions to
  frequency-domain and time-domain scattering from magnetodielectric slabs},
  Physical Review E (Statistical physics, plasmas, fluids, and related
  interdisciplinary topics), 73 (2006), p.~046608.

\bibitem{Zemanian:2005:RTC}
{\sc A.~H. Zemanian}, {\em Realizability theory for continuous linear
  systems.}, Courier Corporation, 1972.

\end{thebibliography}

\end{document}